\renewcommand{\vec}[1]{\mathbf{#1}}
\renewcommand{\vec}[1]{\mathbf{#1}}
\definecolor{moccasin}{rgb}{0.98, 0.92, 0.84}
\newtheorem{theorem}{Theorem}[section]
\newtheorem{definition}{Definition}
\newtheorem{corollary}{Corollary}
\newtheorem{lemma}{Lemma}
\newtheorem{remark}{Remark}
\colorlet{LightRubineRed}{RubineRed!70!}
\newcommand{\simina}[1]{{\color{blue}\noindent\textbf{ }\marginpar{****}\textit{{#1}}}}%
\title{The Query Complexity of Cake Cutting\footnote{This project has received funding from the European Research Council (ERC) under the European Union’s
		Horizon 2020 research and innovation programme (grant agreement No 740282). 
		The project was also supported by the ISF grant 1435/14 administered by the Israeli Academy of Sciences and Israel-USA Bi-national Science Foundation (BSF) grant 2014389. A part of this work was done while the authors were visiting Microsoft Research.
}}
\author{
	Simina Br\^anzei\footnote{Purdue University, USA. E-mail: \textcolor{blue}{\href{mailto:simina.branzei@gmail.com}{simina.branzei@gmail.com}}.}\\
	\newline
	\and
	Noam Nisan\footnote{Hebrew University of Jerusalem and Microsoft Research, Israel. E-mail: 
	\textcolor{blue}{\href{mailto:noam@cs.huji.ac.il}{noam@cs.huji.ac.il}}.}
}
\date{}
\begin{document}
	\maketitle
\begin{abstract}
We study the query complexity of cake cutting and give lower and upper bounds for computing approximately envy-free, perfect, and equitable allocations with the minimum number of cuts. The lower bounds are tight for computing connected 
envy-free allocations among $n=3$ players and for computing perfect and equitable allocations with minimum number of cuts between $n=2$ players.
We also formalize moving knife procedures and show that a large subclass of this family, which captures all the known moving knife procedures, can be simulated efficiently with arbitrarily small error in the Robertson-Webb query model.
\end{abstract}

\newpage
\section{Introduction}


We study the classical cake cutting problem due to \cite{Steinhaus48}, which captures the division of a heterogeneous resource---such as land, time, mineral deposits, fossil fuels, and clean water \cite{Pro13}---among several parties with equal 
rights but different interests over the resource. This model has inspired a rich body of literature in mathematics, political science, economics \cite{RW98,BT96,Moulin03} and more recently computer science \cite{socialchoice_book} since problems in resource allocation
(and fair division in particular) are central to the design
of multiagent systems \cite{Chevaleyre}. Many protocols for various fair division models are now implemented on the Spliddit website \cite{GP14}.

Mathematically, the problem is to divide the cake, which is the unit interval, among a set of $n$ players with valuation functions induced by probability measures over $[0,1]$. Given such a resource, the goal is to compute an allocation of the cake in which every player is content with the piece received.
A major challenge for a third party trying to compute a desirable allocation is that the preferences of the players are private, and fair allocations can only be found when enough information is on the table. An example of a cake cutting protocol is Cut-and-Choose, which dates back to more than 2500 years ago when it appeared in written records in the context of land division. Cut-and-Choose can be used to obtain an envy-free allocation among two players, such as Alice and Bob trying to cut a birthday cake with different toppings: First Alice cuts the cake in two pieces of equal value to her, then Bob picks his favorite and Alice takes the remainder.

More generally, cake cutting protocols operate in a query model (called Robertson-Webb \cite{WS07}), in which a center that does not know the players asks them questions until it manages to extract enough information about their preferences to determine a fair division.
Two of the prominent fairness notions, envy-freeness and proportionality, have been the subject of in depth study from a computational point of view.
Proportionality requires that each player gets their fair share of the resources, which is their total value for the whole cake divided by the number of players, while envy-freeness is based on social comparison and means no player should want to swap their piece with anyone else's. The two notions are not necessarily comparable, since envy-freeness can be trivially achieved by throwing away the entire resource, which is not true for proportionality. However, when the entire cake must be allocated, envy-freeness implies proportionality and is surprisingly hard to reach. 

The problem of finding an envy-free cake cutting protocol was suggested by \cite{GS58}, and solved by Selfridge and Conway for three players (cca. 1960, see, e.g., \cite{RW98,BT96}) and by 
Brams and Taylor 35 years later \cite{BT95} for any number of players. From a computational point of view, the Brams and Taylor protocol has the major drawback that its runtime can be made arbitrarily long by setting up the valuations of the players appropriately. In 2016, \cite{AM16} announced a breakthrough by giving the first bounded envy-free cake cutting protocol for any number of players, where bounded means that the number of queries is only a function of the number of players, and not of the valuations.

The query complexity of proportionality is well understood. The problem of computing a proportional allocation  with connected pieces can be solved with $O(n \log{n})$ queries in the Robertson-Webb model \cite{EP84}, with a matching lower bound due to \cite{WS07} for connected pieces that was extended to any number of pieces by \cite{EP06b}. 

\smallskip

In contrast, for the query complexity of envy-free cake cutting a lower bound of $\Omega(n^2)$ was given by \cite{Pro09} and an upper bound of $O\left(n^{n^{n^{n^{n^{n}}}}}\right)$ by \cite{AM16}. The bounded algorithm in 
\cite{AM16} 
  outputs highly fragmented allocations, and such resources are generally difficult to handle.
Envy-free allocations with connected pieces do in fact exist for very general valuations \footnote{Such allocations exist even for valuations not induced by probability measures (see, e.g.,  \cite{simmons80}, \cite{Stromquist80} for a proof based on a topological lemma on intersection of sets, and \cite{Su99} for a proof using Sperner's lemma).} (see, e.g., \cite{simmons80,Stromquist80,Su99}), but cannot be computed in the Robertson-Webb model \cite{Stromquist08} except for specific classes such as polynomial functions \cite{sim15}.

In light of this impossibility, it makes sense to study the computation of $\epsilon$-envy-free allocations with few cuts.
For a general utility model where the valuations are arbitrary functions (not necessarily induced by probability measures), bounds on the query complexity of approximate envy-free cake cutting with connected pieces were given in \cite{DQS12} together with a proof of PPAD-hardness. While the problem of computing connected $\epsilon$-envy-free allocations is in PPAD \cite{Su99}, neither the computational hardness nor the query lower and upper bounds   
in \cite{DQS12} carry over (or are close to tight) in the usual cake cutting model, which leaves wide open the question of understanding the complexity of this problem.

Aside from proportionality and envy-freeness, a third notion of fairness is known as equitability, in which each player must receive a piece worth the same value. Equitable and proportional allocations with connected pieces were shown to exist by 
Cechlarova, Dobos, and Pillarova \cite{CDP13}. The computational complexity of approximate equitability was investigated by Cechlarova and Pillarova, who gave an upper bound of $O\left(n \left(\log{n} + \log{\epsilon^{-1}}\right)\right)$ for computing an $\epsilon$-equitable and proportional allocation with connected pieces for any number of players, and by \cite{PW17} who showed a lower bound of $\Omega\left(
\log{\epsilon^{-1}}/\log{ \log \epsilon^{-1}}\right)$ for finding an $\epsilon$-equitable allocation (not necessarily connected) for any number of players.

More stringent fairness requirements are also possible, such as $(i)$ the necklace splitting problem, for which the existence of fair solutions was established by \cite {Neyman46}, with a bound on the number of cuts given by \cite{ALON1987247}, $(ii)$ the more general notion of exact division \footnote {The problem of exact division is the following: given a cake with $n$ players and target non-negative weights $w_1 \ldots w_k$, find a partition $A = (A_1, \ldots, A_k)$ such that $V_i(A_j) = w_j$ for each player $i$ and every piece $A_j$.}, which generalizes necklace splitting and was shown to exist by  \cite{DS61}, and 
$(iii)$ the competitive equilibrium from equal incomes, the existence of which was determined by \cite{Weller}. The necklace splitting problem is contained in PPA and in fact is PPAD-hard \cite{FG18}.
A well known instantiation of the necklace splitting solution is that of perfect allocations, which are simultaneously proportional, envy-free, and equitable.

The complexity and existence of various fairness concepts in cake cutting and related models, such as cake cutting where the resource is a chore, pie cutting, multiple homogeneous goods, multiple discrete goods was studied in \cite{CDGK+17,OPR16,AMNS15,Budish11,GZHK+11,DFHY18,BHM15,Brams2008,CKKK12,PT18,BCEI+17,AFGS+17,BLM16}. 

\subsection{Our Contribution}

In this paper we study the query complexity of cake cutting in the standard Robertson-Webb (RW) query model \cite{WS07} for several fairness notions: envy-free, perfect, and equitable allocations with the minimum number of cuts. Such allocations are known to exist on any instance, but several impossibility results preclude their computation in the standard query model \cite{Stromquist08,CP12}. Nevertheless, the computation of approximately fair solutions is possible and we state a general simulation result implying that for each $\epsilon > 0$, a number of queries that is polynomial in $n$ and $1/\epsilon$ suffices to find $\epsilon$-fair allocations for several fairness concepts, such as $O(n/\epsilon)$ for connected envy-free and $O(n(k-1)/\epsilon)$ for $(\epsilon,k)$-measure splittings.

Our first main contribution is to give lower bounds for the problems of computing approximately fair allocations (with deterministic protocols), showing that finding (i) an $\epsilon$-envy-free allocation with connected pieces among three players, (ii) a connected $\epsilon$-equitable allocation for two players, and (iii) an $\epsilon$-perfect allocation with two cuts for two players requires $\Omega\left(\log{\frac{1}{\epsilon}}\right)$ queries. 
The lower bounds for envy-freeness and perfect allocations are the first query lower bounds for these problems. Also, the lower bound for envy-free allocations extends to any number of players. 
The main idea underpinning these results is that of maintaining a self-reducible structure throughout the execution of a protocol, which may be useful more generally for obtaining other lower bounds.

We also give improved upper bounds for these fairness notions, which in the case of envy-freeness for three players and perfect allocations for two players are $O(\log{\frac{1}{\epsilon}})$ and rely on approximately simulating in the RW model two moving knife procedures, due to \cite{BB04} and \cite{Austin82}, respectively. An upper bound of 
$O\left(\log{\frac{1}{\epsilon}} \right)$ for computing a connected $\epsilon$-equitable allocation for two players was given by \cite{CP12}. The upper bounds do not assume any restrictions on the valuations.

A summary of our bounds for the query complexity of computing fair allocations can be found in Table 1, together with those from previous literature. 


\begin{table}[h!] 
	\label{tab:summary}
	\begin{center}
		\begin{tabular}{||c | c|c|c|c ||} 
			\hline \hline
			Fairness notion & Players & Upper bound & Lower bound\\
			\hline \hline
			\multirow{3}{*}{\centering{$\epsilon$-envy-free (connected)}}& $n=2$ & 1 & 1 \\
			& $n = 3$ & $O(\log{\epsilon^{-1}})$ (\textcolor{red}{$*$}) & $\Omega(\log{\epsilon^{-1}})$ (\textcolor{red}{$*$}) \\ 
			& $n \geq 4$ & $O(n/\epsilon)$ (\textcolor{red}{$*$}) & $\Omega\left(\log{\epsilon^{-1}}\right)$  (\textcolor{red}{$*$})\\
			\hline
			\multirow{2}{*}{\centering{$\epsilon$-perfect (min  cuts)}}& $n=2$ & $O(\log{\epsilon^{-1}})$ (\textcolor{red}{$*$}) & $\Omega(\log{\epsilon^{-1}})$ (\textcolor{red}{$*$}) \\ 
			& $n \geq 3$ & $O\left(n^3/\epsilon\right)$ \cite{BM15} & $\Omega\left(
			\frac{\log{\epsilon^{-1}}}{\log{ \log \epsilon^{-1}}}\right)$ \cite{PW17} \\ \hline
			\multirow{2}{*}{\centering{$\epsilon$-equitable (connected)}}& $n=2$ & $O(\log{\epsilon^{-1}})$ \cite{CP12} & $\Omega(\log{\epsilon^{-1}})$ (\textcolor{red}{$*$}) \\
			& $n \geq 3$ & $O\left(n \left(\log{n} + \log{\epsilon^{-1}}\right)\right)$ \cite{CP12} & $\Omega\left(
			\frac{\log{\epsilon^{-1}}}{\log{ \log \epsilon^{-1}}}\right)$ \cite{PW17} \\ \hline \hline 
			envy-free (exact) & $n \geq 2$ &  $O\left(n^{n^{n^{n^{n^{n}}}}}\right)$ \cite{AM16} & $\Omega(n^2)$ \cite{Pro09} \\ \hline
			proportional (exact) & $n \geq 2$ & $O\left(n \log{n}\right)$ \cite{EP84} & $\Omega(n \log{n})$ \cite{WS07,EP06b} \\ \hline \hline
		\end{tabular}
		\caption{Query complexity in cake cutting in the standard query model. Our results are marked with (\textcolor{red}{$*$}).
			The lower bounds for finding $\epsilon$-perfect and $\epsilon$-equitable allocations for $n\geq 3$ players hold for any number of cuts \cite{PW17}. The bounds for exact envy-free and proportional allocations hold for any number of cuts, except the upper bound for proportional works for connected pieces.}
	\end{center}
\end{table}



Our second main contribution is to formalize the class of moving knife procedures, which was previously viewed as disjoint from the RW query model. Using this definition, we show that any fair moving knife procedure with a fixed number of players and devices can be simulated in $O(\log{\frac{1}{\epsilon}})$ queries in the RW model when the players have value densities that are bounded from above by a constant. 
This simulation immediately implies that all the known moving knife procedures, such as the procedures designed by Austin, Barbanel-Brams, Webb \cite{Webb}, and Brams-Taylor-Zwicker \cite{BTZ95}, can be simulated efficiently within $\epsilon$-error in the RW model when the measures of the players are bounded. 
In this context we also give a moving knife procedure for computing connected equitable allocations for any number of players.

\section{Model}


The resource (cake) is represented as the interval $[0,1]$. There is a set of players $N=\{1,\ldots,n\}$ interested in the cake, such that each player $i\in N$ is endowed with a private \emph{valuation function} $V_i$ that assigns a value to every subinterval of $[0,1]$. These values are induced by a non-negative integrable \emph{value density function} $v_i$, so that for an interval $I$, $V_i(I)=\int_{x\in I} v_i(x)\ dx$. The valuations are additive, so $V_i\left(\bigcup_{j=1}^m I_j\right) = \sum_{j=1}^m V_i(I_j)$ for any disjoint intervals $I_1, \ldots, I_m \subseteq [0,1]$. The value densities are non-atomic, and in fact sets of measure zero are worth zero to a player.
Without loss of generality, the valuations are normalized to $V_i([0,1]) = 1$, for all $i = 1 \ldots n$. 

A \emph{piece of cake} is a finite union of disjoint intervals. A piece is \emph{connected} if it consists of a single interval.
An \emph{allocation} $A = (A_1, \ldots, A_n)$ is a partition of the cake among the players, such that each player $i$ receives the piece $A_i$, the pieces are
disjoint, and $\bigcup_{i \in N} A_i = [0,1]$.

We say that a value density $v$ is: $(i)$ \emph{hungry} if $v(x) > 0$ for all $x \in [0,1]$, $(ii)$ \emph{uniform} on an interval $[a, b]$ if $v(x) = 1$ for all $x \in [a,b]$, and $(iii)$ \emph{bounded} on interval $[a,b]$ if there exists a constant $D > 0$ such that $v(x) \geq D$ for all $x \in [a,b]$ (or simply bounded if $a = 0$ and $b=1$). We will sometimes to refer to a valuation (or player) as hungry to simply mean that the corresponding density is strictly positive.

\subsection{Fairness Notions}

An allocation $A$ is said to be \emph{proportional} if $V_i(A_i) \geq 1/n$ for all $i \in N$; \emph{envy-free} if $V_i(A_i) \geq V_i(A_j)$ for all $i,j \in N$; \emph{perfect} if $V_i(A_j) = 1/n$ for all $i,j \in N$; \emph{equitable} if $V_i(A_i) = c$, for all $i \in N$ and some value $c \in [0,1]$.
In addition to allocations, we will refer to partitions as divisions of the cake into pieces where the number of pieces need not equal the number of players. A partition $A = (A_1, \ldots, A_k)$ is said to be a $k$-measure splitting if $V_i(A_j) = 1/k$ for each player $i$ and piece $A_j$. 

Envy-free allocations with connected pieces and equitable allocations with connected pieces always exist \cite{Su99,CDP13}, while $k$-measure splittings exist with $n(k-1)$ cuts \cite{ALON1987247}.

We will also be interested in $\epsilon$-fair division, where the fairness constraints are satisfied within $\epsilon$-error; for instance, an allocation $A$ is $\epsilon$-envy-free if $V_i(A_i) \geq V_i(A_j) - \epsilon$, for each $i, j = 1 \ldots n$, and an $(\epsilon,k)$ measure splitting if $V_i(A_j) \in (1/k - \epsilon, 1/k + \epsilon)$ for each $i = 1 \ldots n$, $j = 1 \ldots k$.

More generally, we consider an abstract definition for fairness notions that admit approximate versions as follows.

\begin{definition}[Abstract fairness]
	A fairness notion $\mathcal{F}$ must satisfy the following condition:
	\begin{itemize}
		\item Let $A = (A_1, \ldots, A_n)$ be any allocation that is $\mathcal{F}$-fair with respect to some valuations $\vec{v} = (v_1 \ldots v_n)$. Then for any $\epsilon > 0$ and valuations $\vec{v}' = (v_1' \ldots v_n')$, if $|V_i(A_j) - V_i'(A_j)| \leq \epsilon/2$, $\forall i,j = 1 \ldots n$, then it must be the case that the allocation $A$ is $\epsilon$-$\mathcal{F}$-fair with respect to valuations $\vec{v}'$.
	\end{itemize}
\end{definition}
Proportionality, envy-freeness, and perfection are instantiations of this notion of abstract fairness. For example, if an allocation $A$ is envy-free with respect to some valuations $\vec{v}$, then $A$ is also $\epsilon$-envy-free for valuations $\vec{v}'$ that are close to $\vec{v}$.

\subsection{Query Complexity}
All the discrete cake cutting protocols operate in a query model known as the Robertson-Webb model (see, e.g., the book of \cite{RW98}), which was explicitly stated by~\cite{WS07}. In this model, the protocol communicates with the players using the following types of queries:
\begin{itemize}
	\item{}$\emph{\textbf{Cut}}_i(\alpha)$: Player $i$ cuts the cake at a point $y$ where $V_{i}([0,y]) = \alpha$, where $\alpha \in [0,1]$ is chosen arbitrarily by the center \footnote{Ties are resolved deterministically, using for example the leftmost point with this property.}. The point $y$ becomes a {\em cut point}.
	\item{}$\emph{\textbf{Eval}}_i(y)$:  Player $i$ returns $V_{i}([0,y])$, where $y$ is a previously made cut point.
\end{itemize}

An RW protocol asks the players a sequence of cut and evaluate queries, at the end of which it outputs an allocation demarcated by cut points from its execution (i.e. cuts discovered through queries).
Note that the value of a piece $[x,y]$ can be determined with two Eval queries, $Eval_i(x)$ and $Eval_i(y)$.

A second class of protocols is known as ``moving-knife'' (or continuous) procedures, which typically involve sliding multiple knives across the cake, while evaluating the players' valuations until some stopping condition is met. This class has not been formalized until now. Examples of such procedures include Austin's procedure, which computes a perfect allocation for two players, Stromquist's procedure, for finding a connected envy-free allocation for three players, and Dubins-Spanier, for computing a proportional allocation for any number of players. The latter is the only moving knife procedure that can be simulated exactly in the RW model.

\section{Simulation of Partitions} \label{sec:gen}

A general technique useful for computing approximately fair allocations in the RW model is based on asking the players to submit a discretization of the cake in many small cells via Cut queries, and reassembling them offline in a way that satisfies approximately the desired solution. The next statement implies that it is possible to approximate in the RW model allocations that exist with a bounded number of cuts, which implies upper bounds for computing approximately fair solutions for several fairness concepts; the proof can be found in Appendix \ref{app:sim_partition}. The difference from \cite{BBKP14}, where discretizations are used to simulate bounded algorithms, is that here we do not restrict ourselves to properties that are computable by an RW protocol \footnote{In fact, in our case a solution with a required property as in \cite{BBKP14} need not even exist on all instances.}.

\begin{lemma} \label{lem:simgen}
Consider a cake cutting problem for $n$ players.
Suppose there exists a partition $A = (A_1, \ldots, A_m)$, where each piece $A_j$ is demarcated with at most $K$ cuts \footnote{A connected piece $X =[a,b]$ is demarcated by two cuts (namely its endpoints $a$, $b$), a piece $X' = [a,b] \cup [c,d]$ where $a < b < c < d$ is demarcated with four cuts ($a,b,c,d$), and so on. We are interested in the minimum number of points that can be used to demarcate a piece.} and worth $w_{i,j}$ to each player $i$. Then for all $\epsilon > 0$, a partition 
$\tilde{A} = (\tilde{A}_1, \ldots, \tilde{A}_m)$ where each piece $A_j$ is demarcated with at most $K$ cuts and $|V_i(\tilde{A}_j) - w_{i,j}| \leq \epsilon$ for all $i,j$ can be found with $O(K \cdot n/\epsilon)$ queries.
\end{lemma}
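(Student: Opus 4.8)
The plan is to have each player submit a sufficiently fine common-value discretization of the cake and then reassemble the cells to mimic the target partition $A$. Fix $\epsilon > 0$ and let $t = \lceil 2n/\epsilon \rceil$. First I would ask player $1$ the sequence of Cut queries $\textbf{Cut}_1(1/t), \textbf{Cut}_1(2/t), \ldots, \textbf{Cut}_1((t-1)/t)$, producing points $0 = y_0 < y_1 < \cdots < y_t = 1$ with $V_1([y_{j-1}, y_j]) = 1/t$. These $t-1$ cut points partition $[0,1]$ into $t$ \emph{cells} $C_1, \ldots, C_t$. Next, for every other player $i \in \{2, \ldots, n\}$ and every cell boundary $y_j$, I would ask $\textbf{Eval}_i(y_j)$; two Eval queries per cell suffice to learn $V_i(C_\ell)$ for every cell $C_\ell$ and every player $i$. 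The total number of queries is $(t-1) + (n-1)\cdot(t-1) = n(t-1) = O(n/\epsilon)$ so far; the factor $K$ will enter below.

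With the cell values $\{V_i(C_\ell)\}_{i,\ell}$ in hand, the center works offline. For each target piece $A_j$, consider its endpoints; each of the $K$ (or fewer) demarcation points of $A_j$ falls inside some cell, so $A_j$ differs from a union of whole cells only by at most $K$ partial cells. Define $\tilde{A}_j$ to be the union of those whole cells of player $1$'s discretization that are entirely contained in $A_j$, together with a consistent rounding of the at-most-$K$ boundary cells (assign each boundary cell wholly to one of the two adjacent target pieces, breaking ties so that every cell is assigned to exactly one piece and $\bigcup_j \tilde{A}_j = [0,1]$). Then $\tilde{A}_j$ is still demarcated by at most $K$ cuts, since we only moved demarcation points to cell boundaries. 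For the valuation bound: each cell $C_\ell$ has $V_1(C_\ell) = 1/t \le \epsilon/(2n)$, and we need the analogous smallness for every player $i$, not just player $1$. This is the point that needs care — a cell small for player $1$ need not be small for player $i$ — so I would handle it by instead taking the \emph{common refinement}: have each player $i$ also submit a $t$-cell equipartition via their own Cut queries, and let the cells be the (at most $nt$) atoms of the common refinement, with Eval queries from every player at every resulting boundary. Then each atom has value at most $1/t \le \epsilon/(2n)$ to every player. Now $|V_i(\tilde{A}_j) - V_i(A_j)| \le K \cdot \max_\ell V_i(C_\ell) \le K\epsilon/(2n)$; summing the errors is not needed since each piece is analyzed separately, and in fact one can absorb the $K$ and $n$ factors by choosing $t = \lceil 2Kn/\epsilon \rceil$, still giving $O(Kn/\epsilon)$ queries. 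Finally, since $A$ itself is a target (not necessarily produced by the protocol), $w_{i,j} = V_i(A_j)$, and we have shown $|V_i(\tilde{A}_j) - w_{i,j}| \le \epsilon$.

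The main obstacle is the one flagged above: making the discretization simultaneously fine for all $n$ players using only Cut and Eval queries, while keeping the query count $O(Kn/\epsilon)$ rather than $O(Kn^2/\epsilon)$ or worse. The common-refinement approach costs a factor $n$ in the number of cells but the Eval queries are still only $O(1)$ per cell per player, so one must check the bookkeeping carefully; alternatively, a cleaner route is to note that a single player's equipartition into $t' = \lceil 2Kn/(\epsilon \delta)\rceil$ cells, combined with the promise that valuations are probability measures, does not by itself bound other players' cell values, so the refinement really is needed. A secondary technical point is verifying that the offline rounding can always be done so that the pieces remain a partition and each retains at most $K$ demarcating cuts; this follows because each demarcation point lies in a unique cell and snapping it to a cell endpoint does not increase the count. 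Once these two points are handled, the estimate $|V_i(\tilde A_j) - w_{i,j}| \le \epsilon$ and the query bound $O(Kn/\epsilon)$ follow immediately.
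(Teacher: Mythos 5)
Your overall strategy---per-player equipartitions via Cut queries, then offline reassembly with demarcation points snapped to the resulting grid---is exactly the paper's approach, but the execution has a quantitative gap: as written, your protocol does not achieve the claimed $O(Kn/\epsilon)$ bound. Once you (correctly) observe that a single player's grid does not control other players' cell values and switch to having every player submit a $t$-cell equipartition, the Cut queries already number $n(t-1)$; with your final choice $t=\lceil 2Kn/\epsilon\rceil$ this is $O(Kn^2/\epsilon)$, and the additional ``Eval queries from every player at every boundary of the common refinement'' cost another factor of $n$ on top of that. Neither the extra $n$ in $t$ nor the Evals is needed. The key point you are missing is that the combined grid contains player $i$'s \emph{own} cut points, so the nearest grid point to any demarcation point $x$ lies in the same cell of player $i$'s own equipartition as $x$; hence the rounding shifts $V_i$ by at most $1/t$ per demarcation point, for \emph{every} $i$ simultaneously, with no dependence on $n$. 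Taking $t=\lceil 4K/\epsilon\rceil$ then gives error at most $2K/t\le\epsilon/2$ per piece and $n\cdot t = O(Kn/\epsilon)$ Cut queries total, which is precisely the paper's parameterization.

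A second, fixable, gap: you ``define $\tilde A_j$ to be the union of those whole cells entirely contained in $A_j$,'' but the center does not know $A$ (it is only promised to exist), so this construction cannot be executed directly; it serves only as an existence argument. The protocol must instead enumerate offline all ways of assembling the $n\cdot t$ cells into $m$ pieces with at most $K$ demarcating cuts each, and output one that passes a value test. This also resolves the role your Eval queries were playing: to test a candidate piece from player $i$'s perspective without further queries, the paper rounds its demarcation points to player $i$'s own cut points (whose cumulative values are known exactly, being multiples of $1/t$) and checks membership in $[w_{i,j}-\epsilon/2,\,w_{i,j}+\epsilon/2]$, the $\epsilon/2$ slack absorbing the discrepancy between the candidate piece and its player-$i$ rounding. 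With these two corrections your argument matches the paper's proof.
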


As a corollary, we get bounds for computing approximately fair allocations in the RW model. 


\begin{corollary} \label{cor:apx_general}
	For each $\epsilon > 0$ and number of players $n$ with arbitrary value densities, a partition that is
	\begin{itemize}
		\item $\epsilon$-envy-free and connected can be computed with $O(n/\epsilon)$ queries
		\item $\epsilon$-equitable, proportional, and connected can be computed with $O(n/\epsilon)$ queries
		\item an $(\epsilon,k)$-measure splitting with $k$ pieces can be computed with $O(k\cdot n^2/\epsilon)$ queries.
		\item $\epsilon$-perfect can be computed with $O(n^3/\epsilon)$ queries.
	\end{itemize} 
\end{corollary}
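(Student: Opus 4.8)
The plan is to deduce every part of Corollary~\ref{cor:apx_general} from Lemma~\ref{lem:simgen} paired with a matching \emph{exact} existence result. In each case there is a known partition $A=(A_1,\dots,A_m)$ that satisfies the relevant notion exactly and uses a bounded number of cuts, and the three steps are: (i)~bound the number $K$ of cut points needed to demarcate a single piece of $A$; (ii)~apply Lemma~\ref{lem:simgen} with error parameter $\epsilon/2$ to get, in $O(K\cdot n/\epsilon)$ queries, a partition $\tilde A=(\tilde A_1,\dots,\tilde A_m)$ with each $\tilde A_j$ demarcated by at most $K$ cuts and $|V_i(\tilde A_j)-V_i(A_j)|\le\epsilon/2$ for all $i,j$; (iii)~verify that an entrywise $\epsilon/2$-perturbation of the matrix $(V_i(A_j))_{i,j}$ turns an exactly-fair partition into an $\epsilon$-fair one. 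Step (iii) is essentially the abstract fairness condition; it applies even though Lemma~\ref{lem:simgen} perturbs the \emph{allocation} rather than the valuations, because for any target partition one can exhibit densities $v_i'$, piecewise constant on the pieces of $A$, with $V_i'(A_j)=V_i(\tilde A_j)$, so the two kinds of perturbation are interchangeable. For equitability it is the triangle inequality $|V_i(\tilde A_i)-V_j(\tilde A_j)|\le|V_i(\tilde A_i)-c|+|c-V_j(\tilde A_j)|\le\epsilon$, and for envy-freeness it is the chain $V_i(\tilde A_i)\ge V_i(A_i)-\epsilon/2\ge V_i(A_j)-\epsilon/2\ge V_i(\tilde A_j)-\epsilon$, after assigning $\tilde A_j$ to the owner of $A_j$.

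It remains to instantiate $K$. For $\epsilon$-envy-free connected allocations I would start from the existence of connected envy-free allocations~\cite{Su99,Stromquist80}: such an allocation splits $[0,1]$ into $n$ subintervals, so $K=2$ and Lemma~\ref{lem:simgen} gives $O(n/\epsilon)$ queries; moreover $n$ pieces each using at most two cuts and tiling $[0,1]$ are again consecutive intervals, so $\tilde A$ is connected. The $\epsilon$-equitable, proportional, connected case is identical, using \cite{CDP13} for existence and again $K=2$; here the output is $\epsilon$-equitable and, since $V_i(\tilde A_i)\ge c-\epsilon/2\ge 1/n-\epsilon/2$, proportional up to $\epsilon$ (and exactly proportional whenever the equitability value $c$ of $A$ strictly exceeds $1/n$). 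For $(\epsilon,k)$-measure splittings I would use that a $k$-measure splitting exists with $n(k-1)$ cuts~\cite{ALON1987247}; a single piece is then demarcated by a subset of those $n(k-1)$ cuts together with $0$ and $1$, so $K=O(nk)$ and the query bound is $O(nk\cdot n/\epsilon)=O(kn^2/\epsilon)$. An $\epsilon$-perfect allocation is the specialization $k=n$ together with the free step of handing the $j$-th piece to player~$j$: the total cut count is $n(n-1)$, so $K=O(n^2)$ and the bound is $O(n^3/\epsilon)$.

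The only mildly delicate point is the per-piece cut bound $K$ for measure splittings and perfect allocations, where I would not chase a tight estimate but simply use the crude bound that one piece may be demarcated by the whole set of cut points of the partition; this suffices for the stated asymptotics. Everything else is bookkeeping. It is worth emphasizing why this route works, and where it differs from~\cite{BBKP14}: the target value matrix (uniformly $1/k$ for splittings, and the prescribed values $V_i(A_j)$ in general) is fixed in advance, so Lemma~\ref{lem:simgen} only has to approximate a known list of target values by discretization, and no RW-computability or self-reducibility of the underlying exact partition is required.
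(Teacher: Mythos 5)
Your proposal is correct and follows exactly the route the paper intends: the corollary is stated as an immediate consequence of Lemma~\ref{lem:simgen} applied to the known exact existence results (connected envy-free and connected equitable allocations with $K=2$, and $k$-measure splittings with $n(k-1)$ cuts giving $K=O(nk)$, hence $O(kn^2/\epsilon)$ and, for $k=n$, $O(n^3/\epsilon)$), and the paper gives no further argument. Your additional remarks on halving the error parameter, on the interchangeability of perturbing values versus valuations, and on the caveat about exact versus $\epsilon$-proportionality are all accurate refinements of the same derivation.
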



\section{Envy-Free Allocations}

Exact connected envy-free allocations are guaranteed to exist (see, e.g., 
~\cite{Stromquist80}, Su~\cite{Su99}), but cannot be computed by finite RW protocols (~\cite{Stromquist08}). However, the impossibility result of Stromquist does not explain how many queries are needed to obtain connected $\epsilon$-envy-free allocations.

From Corollary \ref{cor:apx_general}, a connected $\epsilon$-envy-free allocation can be computed with $O(n/\epsilon)$ queries.
As we show next, fewer queries are needed for three players. The proof relies on approximately computing the outcome of a moving knife procedure due to ~\cite{BB04}. The proof must handle the fact that a discrete RW protocol does not allow the center to cut the cake directly, so the simulation has to search for the position of the ``sword'' in the Barbanel-Brams protocol by keeping track of a small interval that must be made very small from the perspective of each player. The details can be found in Appendix \ref{app:ef} together with all the omitted proofs of this section.

\begin{theorem} \label{thm:ef3sim}
	A connected $\epsilon$-envy-free allocation for three players can be computed with $O\left(\log{\frac{1}{\epsilon}}\right)$ queries.
\end{theorem}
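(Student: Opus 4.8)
The plan is to simulate, with $\epsilon$-error, the three-person moving-knife procedure of Barbanel and Brams~\cite{BB04}. In that procedure a ``sword'' sweeps from left to right; when it is at position $t$ it has cut off the left piece $[0,t]$, and the two remaining cuts are \emph{derived} from $t$: for each of the two non-sword-holding players $j$ (the ``choosers''), write $p_j(t)$ for the point at which $j$ bisects $[t,1]$ in her own valuation, and use $\min_j p_j(t)$ (and, for the assignment, also $\max_j p_j(t)$) to cut $[t,1]$. The sword stops at the first position $t^\star$ at which the Barbanel--Brams rule fires (player~1's value for $[0,t]$ reaches her value for the derived right piece, or a chooser would envy $[0,t]$), and at $t^\star$ an explicit assignment of the three pieces to the three players is envy-free. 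Two features of the Robertson--Webb model must be accommodated: the center cannot slide a knife continuously, and it cannot cut at a prescribed \emph{position}, only at a prescribed \emph{value}. The first is handled by a binary search for $t^\star$; the second by noting that every cut we need can be issued as a $\mathbf{Cut}_i(\alpha)$ query after computing $\alpha$ from $O(1)$ $\mathbf{Eval}$ queries.

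Rather than track the sword alone, I would maintain a constant number of ``brackets'': an interval $[a,b]$ known to contain $t^\star$ and, for each chooser $j$, an interval $[\ell_j,r_j]$ known to contain $p_j(t)$ for all $t\in[a,b]$ (taking $\ell_j = p_j(a)$, $r_j = p_j(b)$, which are honest cut points since we will have queried them). Start with $[a,b]=[0,1]$. Each round: with $O(1)$ $\mathbf{Eval}$s compute the value to each of the three players of each bracket; let $M$ be the largest such value and halt if $M\le\epsilon/2$; otherwise halve it. If the witnessing bracket is $[a,b]$, have that player bisect $[a,b]$, getting a sword position $c$. If it is a derived-cut bracket $[\ell_j,r_j]$, first have the player bisect $[\ell_j,r_j]$, getting a point $q$, and then obtain the sword position $c$ at which $j$'s derived cut lands exactly at $q$; crucially this is again a single $\mathbf{Cut}_j$ query, because $p_j(c)=q$ is equivalent to $V_j([0,c]) = 2V_j([0,q])-1$ (and $V_j([0,q])\ge\tfrac12$ since $q\ge p_j(0)$, so the query is legal). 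In either case, $O(1)$ more queries evaluate the Barbanel--Brams comparison at sword position $c$ and reveal which side of $c$ contains $t^\star$; recurse on the corresponding half of $[a,b]$ and the induced halves of the $[\ell_j,r_j]$. Since a constant number of nonnegative quantities with constant total sum is maintained and the largest is halved each round, the sum contracts by a constant factor per round; after $O(\log\tfrac1\epsilon)$ rounds every bracket has value $\le\epsilon/2$ to every player, and since each round uses $O(1)$ queries the total is $O(\log\tfrac1\epsilon)$.

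To output, fix the sword at $a$ and each derived cut at $\ell_j$, forming three explicit pieces, and return whichever assignment of these pieces to the players is $\epsilon$-envy-free (decidable from the $\mathbf{Eval}$s already made). Such an assignment exists: let $t^\star\in[a,b]$ and $p_j(t^\star)\in[\ell_j,r_j]$ be the true Barbanel--Brams configuration; moving the sword from $t^\star$ to $a$ changes every player's value of the left piece by at most $V_i([a,b])\le\epsilon/2$, and moving each derived cut from $p_j(t^\star)$ to $\ell_j$ changes every player's value of the two right pieces by at most $V_i([\ell_j,r_j])\le\epsilon/2$, so each player's value of each of our pieces is within $\epsilon/2$ of its value of the corresponding Barbanel--Brams piece. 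Envy-freeness of the $t^\star$-configuration together with these per-piece perturbations of size $\le\epsilon/2$ yields $\epsilon$-envy-freeness, which is precisely the closure property required of our abstract fairness notion.

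The step I expect to be the main obstacle is exactly this simultaneous error control, which is what forces the multi-bracket bookkeeping: a plain binary search on the sword alone does make $[a,b]$ small in every measure, but the derived cuts can still traverse regions that are small only in the measure of the chooser who owns them, leaving the other players' envy uncontrolled; one must therefore also shrink the derived-cut brackets, and the non-obvious point that makes this affordable is the reduction of ``move the sword so that the derived cut reaches $q$'' to one $\mathbf{Cut}$ query. Two further points need care: the degenerate case in which $p_2(t)=p_3(t)$ for some $t\in[a,b]$ (the identity of the ``left'' derived cut flips) --- but there the two bisecting knives lie inside a region already tiny for both choosers, so one can perturb to a non-degenerate ordering without moving any value by more than $\epsilon/2$; and pinning down the exact Barbanel--Brams stopping rule and checking that the quantities it compares are monotone enough along the sweep (player~1's value for $[0,t]$ increases, her value for the derived right piece decreases) for the ``which side contains $t^\star$'' test to be answerable from $O(1)$ queries.
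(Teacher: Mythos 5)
Your high-level plan is the same as the paper's: simulate the Barbanel--Brams sweep by binary search on the sword position, with the real difficulty being that the interval containing the stopping point must become small in \emph{every} player's measure, not just the sweeper's. Where you differ is the bookkeeping: the paper runs a sequential two-phase search (first shrink the sword interval in player $1$'s measure, then shrink the relevant derived-cut interval in player $2$'s measure, with a three-way case analysis showing player $3$ is then automatically accommodated), whereas you run a single search driven by a potential function over all nine bracket-values and halve the currently largest one. Your observation that ``move the sword so that chooser $j$'s derived cut lands at $q$'' inverts to a single $\mathbf{Cut}_j$ query via $V_j([0,c])=2V_j([0,q])-1$ is correct and is what makes the uniform scheme affordable; the contraction argument and the final snapping-to-endpoints perturbation bound are also fine (up to a constant in $\epsilon$, since both endpoints of a piece move).

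The genuine gap is the step you describe as ``$O(1)$ more queries evaluate the Barbanel--Brams comparison at sword position $c$ and reveal which side of $c$ contains $t^\star$.'' The quantities being compared are not monotone along the sweep for general (unrestricted) valuations, so a pointwise sign test at $c$ does not by itself tell you which side to recurse on; what one actually has to maintain is a sign-bracketing invariant at the two ends of $[a,b]$, together with a trichotomy at the test point: either \emph{both} choosers strictly prefer the middle piece by more than $\epsilon$ (recurse one way), or \emph{both} disprefer it by more than $\epsilon$ (recurse the other way), or neither holds --- in which case one must show an $\epsilon$-envy-free allocation already exists at the tested cuts and can be output. Establishing that trichotomy (and the initial bracketing at $[a,b]=[0,1]$, which is not automatic) is where essentially all of the paper's work in this proof lives, and it depends on the exact stopping rule. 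This matters doubly here because the procedure you describe is not quite the paper's Barbanel--Brams rule: in the paper the sword-holder herself places the second cut so as to equalize her two outside pieces, whereas you have the two choosers bisect $[t,1]$ and take the min/max of their marks (a Stromquist-style rule), so the trichotomy you would need to prove is a different one and is nowhere verified. Until that decision step is made precise and proved, the binary search is not known to be well-founded.
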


Our main result in this section is a matching lower bound for this problem.

\begin{theorem} \label{thm:3lb}
	Computing a connected $\epsilon$-envy-free allocation for three players requires $\Omega\left(\log \frac{1}{\epsilon}\right)$ queries.
\end{theorem}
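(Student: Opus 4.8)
The plan is to run an adaptive adversary argument against deterministic Robertson--Webb protocols, organized around a \emph{self-reducible} family of hard instances. First I would build a family $\{I_p\}$ of three-player instances indexed by a hidden ``balance point'' $p$ ranging over an interval $J_0$ of length $1/2$, with two features. (1) \textbf{Rigidity of the solution:} in $I_p$, every connected $\epsilon$-envy-free allocation has its cut vector $(x,y)$ within $\ell_\infty$-distance $O(\epsilon)$ of a prescribed pair $\big(x(p),y(p)\big)$, the map $p\mapsto\big(x(p),y(p)\big)$ is bi-Lipschitz on $J_0$ (so that $\big\|(x(p),y(p))-(x(p'),y(p'))\big\|_\infty=\Theta(|p-p'|)$), and the densities are bounded away from $0$ near the prescribed cuts, so that displacing a cut by $\delta$ creates envy $\Theta(\delta)$ for some player. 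This uses three players in an essential way, since with two players the analogous balance is found by a single $\mathrm{Cut}$ query and there is nothing to search for. (2) \textbf{Flatness at coarse scale:} for every subinterval $W\ni p$, the densities of $I_p$ restricted to $W$ are uniform, with constants depending only on $W$ and not on the exact position of $p$ within $W$; hence $I_p$ and $I_{p'}$ give the same answers to every query as long as no earlier answer has forced a commitment that separates $p$ from $p'$. The self-reducibility is the statement that any such window $W$, carrying a yet-undetermined uniform law, can be completed to a mass-preserving scaled copy of the whole family $\{I_p\}$.

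Given this family, the adversary maintains a ``live'' interval $J_k\subseteq J_0$ with the invariant that (i) all answers returned so far are consistent with $I_p$ for every $p\in J_k$, and (ii) on $J_k$ each player's density is still undetermined but uniform, while everything outside $J_k$ is already committed. It answers every query as though the densities were exactly uniform on $J_k$. A $\mathrm{Cut}_i(\alpha)$ query whose $\alpha$-quantile lies outside $J_k$, and every $\mathrm{Eval}$ query (whose argument is a previously created cut point, hence by construction at the boundary of, or outside, the current live interval), is answered from the committed data and leaves $J_k$ unchanged. A $\mathrm{Cut}_i(\alpha)$ query whose quantile $z$ falls strictly inside $J_k$ is answered by the uniform interpolation; the adversary then commits the density on the smaller of the two parts of $J_k$ cut off by $z$ to that same uniform law, and takes $J_{k+1}$ to be the other part, which is still undetermined and has $|J_{k+1}|\ge|J_k|/2$. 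Hence after $q$ queries $|J_q|\ge 2^{-q}|J_0|=2^{-q-1}$.

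To conclude, suppose a correct protocol halts after $q$ queries and outputs an allocation $A$. Let $p$ and $p'$ be the two endpoints of $J_q$; by the invariant both $I_p$ and $I_{p'}$ are consistent with the transcript, and since the protocol is deterministic it outputs the same $A$ on both, in particular with the same cuts. By rigidity, $A$ being connected $\epsilon$-envy-free for $I_p$ forces its cuts within $O(\epsilon)$ of $\big(x(p),y(p)\big)$, and for $I_{p'}$ within $O(\epsilon)$ of $\big(x(p'),y(p')\big)$, whereas $\big\|(x(p),y(p))-(x(p'),y(p'))\big\|_\infty\ge c\,|J_q|\ge c\,2^{-q-1}$ by bi-Lipschitzness. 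If $2^{-q-1}$ exceeds a suitable constant multiple of $\epsilon$, these two targets are too far apart for a single allocation to satisfy both, so $A$ fails to be $\epsilon$-envy-free for at least one of $I_p,I_{p'}$ --- a contradiction. Therefore $q\ge\log_2(1/\epsilon)-O(1)=\Omega(\log(1/\epsilon))$.

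The main obstacle is designing the self-reducible family so that rigidity and coarse-scale flatness hold simultaneously at every scale. On the flatness side, one must check that answering ``as if uniform on $J_k$'' can never contradict an earlier answer --- this requires tracking which cut points have been created inside the live window and what value each of the other two players assigns to them, and verifying that after each shrink the committed part is still globally consistent (mass preserving). On the rigidity side, one must verify that the three-player connected envy-free allocations of $I_p$ really do have the claimed near-rigid form, which is most naturally done by engineering the valuations so that the envy-free allocation is governed by the same equalization that underlies the Barbanel--Brams moving-knife procedure \cite{BB04} used for the matching upper bound in Theorem~\ref{thm:ef3sim}; then $x(\cdot),y(\cdot)$ track the position of that procedure's moving knife, and making the knife land anywhere in $J_0$ is exactly what the hidden parameter $p$ encodes.
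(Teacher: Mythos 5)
Your overall strategy---an adaptive adversary that hides the solution inside a window whose description is refined by only a constant factor per query, built on a rigid-measure-system-style family where the connected envy-free allocation is essentially unique---is the same architecture as the paper's proof. The paper maintains the two cut windows $I$ and $J$ directly in cake space and finishes with a single-instance argument (Lemma~\ref{lem:far_ef3}: any connected allocation demarcated by the discovered cut points has envy at least $0.01k$), whereas you parametrize by one hidden balance point $p$ and finish with an indistinguishability argument between the two endpoint instances of the surviving parameter interval. Your endgame is fine, and arguably more portable (it does not lean on the RW restriction that the output must use discovered cut points), but it does require the quantitative rigidity claim---every $\epsilon$-envy-free connected allocation of $I_p$ has cuts within $O(\epsilon)$ of $(x(p),y(p))$---which you assert rather than prove; this is a genuine case analysis (the paper's Lemma~\ref{lem:far_ef3}, via the density-comparison Lemma~\ref{lem:density}), not a one-liner.

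The more serious gap is the consistency-maintenance step, which you correctly flag as ``the main obstacle'' but do not resolve, and which is where essentially all of the work in the paper's proof lives (Lemma~\ref{lem:hide}, Parts a--c). A connected three-player allocation has two cuts, and both are functions of your single parameter $p$; hence when a $\mathrm{Cut}$ query lands in the cake-space window around $x(p)$ and you commit a uniform density there for the queried player, you must \emph{simultaneously} commit consistent values for the other two players on that subinterval \emph{and} make matching commitments inside the window around $y(p)$, so that every player's two designated pieces retain equal value (the rigidity constraint), all row sums are preserved, and all densities stay within the fixed bounds $(1/\sqrt{2},\sqrt{2})$ needed for rigidity to keep holding at the next scale. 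Your bookkeeping (``commit the smaller of the two parts of $J_k$ cut off by $z$, keep the other'') only records one commitment and one window, and it is not clear that a halving rate leaves enough slack to satisfy these coupled constraints; the paper shrinks by a factor of $100$ precisely to buy room for the free parameters $\lambda,\mu,\phi,\psi$ used to fit the non-queried players' densities. Until this simultaneous re-fitting is exhibited for every query type and every landing position, the adversary is not shown to exist, so the proposal is a correct plan with its central lemma still open.
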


We note that in general there are many envy-free allocations so the proof must ensure that for a long enough time none of these solutions are near. We will start from a family of valuations known as ``rigid measure systems'' \cite{Stromquist08}, which were studied in the context of showing that no exact connected envy-free allocation can be computed in general in the RW model.
The difficulty is that while in \cite{Stromquist08} it was sufficient to avoid a single point in order for the protocol to not be able to find an exact envy-free solution with one more query, here we must avoid an entire interval and fit the valuations accordingly in order for an approximately envy-free solution to remain far away after one more query. Towards this end, we will make a self reducible structure from a generalized version of this class of valuations, which will then be used to give the required lower bound.
We first slightly generalize rigid measure systems to allow a different parameter $t_i$ for each player $i$.

\begin{definition} \label{def:rms} 
	A tuple of value densities $(v_1, \ldots, v_n)$ is a generalized rigid measure system if:
	\begin{itemize}
		\item the density of each measure is bounded by: $\frac{1}{\sqrt{2}} < v_i(x) < \sqrt{2}$, for all $i =1 \ldots n$ and $x \in [0,1]$.
		\item there exist points $0 = x_0 < x_1< \ldots< x_{n-1}< x_n =1$ and values $s_i,t_i$ for each player $i$ such that $0 < s_i < 1/n < t_i < 1/2$ and the valuations of the players satisfy the constraints: $V_j(x_{j-1}, x_j) = V_j(x_j, x_{j+1}) = t_j$ for all $j = 1 \ldots n-1$ and $V_n(x_{n-1}, x_n)= V_n(0,x_1)=t_n$. 
	\end{itemize}
\end{definition}

An example can be found in Appendix \ref{app:ef}. Generalized rigid measure systems satisfy the property that the valuations of the players for any given piece cannot differ too much. 

\begin{lemma} \label{lem:density}
	Consider any cake cutting problem where for two players $i$ and $j$ 
	where there exist $a,b>0$ such that for all $x \in [0,1]$, $1/a < v_i(x) < b$ and $1/a < v_j(x)  < b$.
	Then for any two pieces $S_1,S_2$ of the cake, if
	$V_i(S_1) \geq ab \cdot V_i(S_2)$, it follows that $ab \cdot V_j(S_1) > V_j(S_2)$.
\end{lemma}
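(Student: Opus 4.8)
The plan is to argue by contradiction using only the pointwise density bounds. Suppose $V_i(S_1) \ge ab \cdot V_i(S_2)$ but, contrary to the claim, $ab \cdot V_j(S_1) \le V_j(S_2)$. The key observation is that the pointwise bounds $1/a < v_i(x) < b$ and $1/a < v_j(x) < b$ on $[0,1]$ let us compare the two players' valuations of any single piece $S$ of positive measure: writing $|S|$ for the Lebesgue measure of $S$, we have $V_i(S) = \int_S v_i > |S|/a$ and $V_j(S) = \int_S v_j < b|S|$, so $V_j(S) < ab \cdot V_i(S)$, and symmetrically $V_i(S) < ab \cdot V_j(S)$. Thus for every piece $S$ of positive measure, the ratio $V_i(S)/V_j(S)$ lies strictly between $1/(ab)$ and $ab$.

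From here the two steps are: first, handle the degenerate cases where one of the pieces has measure zero (and hence value zero to everyone, by non-atomicity), checking that the claimed strict inequality still holds — e.g. if $|S_2| = 0$ then $V_j(S_2) = 0$, and since $S_1$ must have positive measure for the hypothesis $V_i(S_1) \ge ab\, V_i(S_2) = 0$ to be non-vacuous (or if $|S_1|=0$ too, then $V_i(S_1) = 0 \ge 0$ is fine and we need $ab\, V_j(S_1) = 0 > V_j(S_2) = 0$, which fails, so this sub-case cannot arise under the hypothesis unless $S_1$ has positive measure), we get $ab\, V_j(S_1) > 0 = V_j(S_2)$ as needed. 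Second, in the main case where both pieces have positive measure, chain the two ratio bounds: $V_j(S_1) > \frac{1}{ab} V_i(S_1) \ge \frac{1}{ab}\cdot ab\, V_i(S_2) = V_i(S_2) > \frac{1}{ab} V_j(S_2)$, which rearranges to $ab\cdot V_j(S_1) > V_j(S_2)$, completing the proof.

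The only mildly delicate point — and the part to be careful with rather than a genuine obstacle — is bookkeeping the measure-zero cases so that the conclusion remains a strict inequality. Everything else is the elementary observation that bounding each density pointwise in $(1/a, b)$ forces the two measures to be within a multiplicative factor $ab$ of each other on every set. I expect no real difficulty here; the lemma is essentially a one-line consequence of the density bounds once the trivial cases are dispatched.
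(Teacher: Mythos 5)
Your proof is correct and takes essentially the same route as the paper: both arguments chain $ab\,V_j(S_1) > b\,|S_1| > V_i(S_1) \ge ab\,V_i(S_2) > b\,|S_2| > V_j(S_2)$, using the Lebesgue measure of each piece as the bridge between the two players' valuations via the pointwise bounds $1/a < v < b$. Your extra bookkeeping of the measure-zero cases is fine (the paper silently assumes positive measure; the strict conclusion can only fail when both pieces are null, a degenerate situation that never arises in the lemma's applications), and the framing "by contradiction" is superfluous since you end up giving the direct chain anyway.
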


A useful notion to measure how close a protocol is to discovering an approximately envy-free solution on a given instance will be that of a partial rigid measure system, which we define for three players; the $n>3$ definition works similarly.

\begin{definition} \label{def:prms} 
	A tuple of value densities $(v_1, v_2, v_3)$ is a partial rigid measure system if
	\begin{itemize}
		\item the density of each player $i$ is bounded everywhere: $\frac{1}{\sqrt{2}} < v_i(x) < \sqrt{2}$, for all $x \in [0,1]$.
		\item there exist values $k > 0$ and $1/2 > \ell_i > 1/3 > m_i > 0$ for each player $i$, and points $x,x',y,y' \in [0,1]$, so that the matrix of valuations for pieces demarcated by these points is:
		\begin{table}[h!]
			\label{tab:prms}
			\begin{center}
				\begin{tabular}{l | c | c | c | c | c r}
					
					& $[0,x]$ & $[x,x']$ & $[x',y]$ & $[y,y']$ & $[y',1]$ \\ \hline
					$V_1$ & $\ell_1$ & $k$ & $\ell_1$ & $k$ & $m_1$ \\
					$V_2$ & $m_2$ & $k$ & $\ell_2$ & $k$ & $\ell_2$ \\
					$V_3$ & $\ell_3$ & $k$ & $m_3$ & $k$ & $\ell_3$ 
				\end{tabular}
			\end{center}
		\end{table}
	\end{itemize}
\end{definition}
\vspace{-8mm}

Partial rigid measure systems have the property that if there is a collection of cut points $\mathcal{P} = \{z_1, \ldots, z_k\} \subset [0,1]$, such that there are no points from $\mathcal{P}$ in the intervals $(x,x')$ and $(y,y')$, then any connected partition attainable using cut points from $\mathcal{P}$ has envy at least $0.01 k$.

\begin{lemma} \label{lem:far_ef3}
	Let $(v_1, v_2, v_3)$ be a partial rigid measure system with parameters $k, m_i, \ell_i$ so that for each $i$, $V_i([x,x']) = V_i([y, y']) = k$ for  some $x,x',y,y'$. 
If $\mathcal{P} = \{z_1, \ldots, z_{t}\} \subset [0,1]$ is a collection of cut points such that $x,x',y,y' \in \mathcal{P}$ and there are no points from $\mathcal{P}$ in the intervals $(x,x')$ and $(y,y')$, then any connected partition demarcated by points in $\mathcal{P}$ has envy at least $0.01 k$.
\end{lemma}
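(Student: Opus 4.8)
The plan is to fix an arbitrary connected partition $(P_1,P_2,P_3)$ demarcated by $\mathcal{P}$, listed from left to right with cut points $p\le q$ in $\mathcal{P}$, together with an arbitrary assignment of the three pieces to the three players, and to show that some player envies another by at least $0.01k$. Write $L=[0,x]$, $M=[x',y]$, $R=[y',1]$ for the three ``core'' pieces in Definition \ref{def:prms} and $B_1=[x,x']$, $B_2=[y,y']$ for the two ``buffers'', each worth exactly $k$ to every player. I will use three elementary consequences of Definition \ref{def:prms}: since the five intervals partition $[0,1]$, the rows of the matrix sum to $1$, so $m_i=1-2\ell_i-2k$; together with $\ell_i>1/3$ this yields $m_i<1/3-2k$ and $\ell_i-m_i>2k$. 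I will also use repeatedly that, since every density lies in $(1/\sqrt2,\sqrt2)$, the values of one fixed interval under two different players differ by at most a factor of $2$ (this is the content of Lemma \ref{lem:density} with $a=b=\sqrt2$; here I apply it to intervals of the form $[x',p]$ and $[y',q]$).

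The structural point is that, since $\mathcal{P}$ avoids the interiors of $B_1$ and $B_2$, each buffer lies entirely inside one piece, and $p,q\in L\cup M\cup R$. This gives six cases according to which core interval each of $p,q$ lies in, equivalently according to which piece contains $B_1$ and which contains $B_2$. In the four cases where $p$ and $q$ lie in the same core interval, or one in $L$ and one in $R$ (so that either some piece swallows both buffers, or the middle piece is contained in $M$ while the two outer pieces carry the buffers), the envy is exhibited by an elementary computation valid for every assignment: one picks a player whose piece is worth at most $\max(m_i,\ell_i)$ to them while some other piece is worth at least that much plus $k$ (using $m_i<1/3-2k$, $\ell_i-m_i>2k$, and that any piece containing a buffer carries an extra $k$ of value), so that player's envy is at least $k$ — in fact at least $2k$ in the three ``same interval'' sub-cases.

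The remaining two cases — one buffer in an outer piece and the other in the middle — are mirror images of each other (reversing the cake swaps players $1$ and $2$, swaps $B_1,B_2$ and $P_1,P_3$, and sends a partial rigid measure system to another one), so it suffices to treat $B_1\subseteq P_1$, $B_2\subseteq P_2$, which forces $p\in M$ and $q\in R$. Setting $u_i=V_i([x',p])$ and $v_i=V_i([y',q])$, one reads off $V_i(P_1)=V_i(L)+k+u_i$, $V_i(P_2)=(V_i(M)-u_i)+k+v_i$, $V_i(P_3)=V_i(R)-v_i$. For the four assignments in which $P_3$ (which is contained in $R$) is held by player $1$ or by player $3$, elementary bounds again finish the job: if player $1$ holds $P_3$ it is worth $\le m_1$ while $P_1$ is worth $\ge\ell_1+k$, giving envy $>3k$; if player $3$ holds $P_3$ it is worth $\le\ell_3$ while $P_1$ is worth $\ge\ell_3+k$, giving envy $\ge k$. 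For the two remaining assignments, in which player $2$ holds $P_3$, I argue by contradiction: supposing all envy is below $0.01k$, the no-envy inequality for the holder of $P_3$ against $P_2$ gives a bound of the form $u_2\ge 2v_2+0.99k$, while the no-envy inequality for the holder of the other ``inner'' piece against $P_1$ gives either $v_1\ge 2u_1-0.01k$ or $v_3\ge(\ell_3-m_3)+2u_3-0.01k$; feeding these into one another through the factor-$2$ comparability of $u_1,u_2,u_3$ and of $v_1,v_2,v_3$ collapses the chain to $0\ge 0.98k$ (respectively $0\ge(\ell_3-m_3)+0.98k$), which contradicts $k>0$. Hence in every case and for every assignment the envy is at least $0.01k$.

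I expect the main obstacle to be organizational rather than conceptual: one must be disciplined about enumerating the six geometric cases and, inside the two essential ones, the six piece-to-player assignments, and about tracking the $O(k)$ slacks so that the final contradiction genuinely closes with margin $0.01k$ rather than only in the exact ($0$-envy) case. The crude constant $0.01$ is chosen precisely to absorb these slacks comfortably; a more careful analysis would give a larger constant.
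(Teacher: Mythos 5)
Your proposal is correct, and its substance coincides with the paper's proof: assume every player's envy is below $0.01k$, use the row-sum identities ($m_i=1-2\ell_i-2k$, hence $\ell_i-m_i>2k$ and $m_i<1/3-2k$) to kill the degenerate configurations, and in the one nontrivial configuration combine the no-envy inequality of the holder of the rightmost piece (your $u_2\ge 2v_2+0.99k$, the paper's $V_2(S_1)\ge 2(V_2(S_2)+0.495k)$) with the factor-$2$ comparability of Lemma \ref{lem:density} to contradict the no-envy inequality of the holder of the middle piece. Where you differ is purely in the case organization, and yours is arguably tidier: you case on which piece swallows each buffer $[x,x']$, $[y,y']$ (six configurations, four dispatched by one-line value comparisons) and reduce one of the two hard configurations to the other by reflecting the cake, which swaps players $1$ and $2$ and maps a partial rigid measure system to another one; the paper instead cases on the position of the leftmost cut and the owner of the leftmost piece and works through both hard configurations explicitly (its Case 1.b, a sum-of-two-inequalities argument bounding $V_2([0,\hat x])$ below by $1/3-0.02k/3$, becomes in your reflected picture the trivial observation that player $1$ holding a piece inside $[y',1]$, worth at most $m_1$, envies the piece containing $[0,x]\cup[x,x']$ by more than $3k$). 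One bookkeeping caution: pushing the additive term $0.99k$ across players through Lemma \ref{lem:density} loses roughly a factor of $4$ (you get $2u_j>v_j+0.24k$ rather than $+0.99k$, since the surplus must first be folded into the set $S_2$ and then revalued by the other player), so the chain collapses to $0>0.23k$ rather than your stated $0\ge 0.98k$ — still a clean contradiction with margin far above $0.01k$, and the paper is equally casual at the same step, but it is exactly the kind of slack you flagged as needing care.
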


Next we show how queries can be answered one at a time so that the valuations remain consistent with (some) partial rigid measure system throughout the execution of a protocol.

\begin{lemma} \label{lem:hide}
	Suppose that at some point during the execution of an $RW$ protocol for three players the valuations and cuts discovered are consistent with a partial rigid measure system with parameters $k > 0$, $0 < m_i < 1/3 < \ell_i < 1/2$ for each $i$, and points $x,y$, so that the valuations are:
\begin{figure}[h!] 
	\centering
	\includegraphics[scale=0.85]{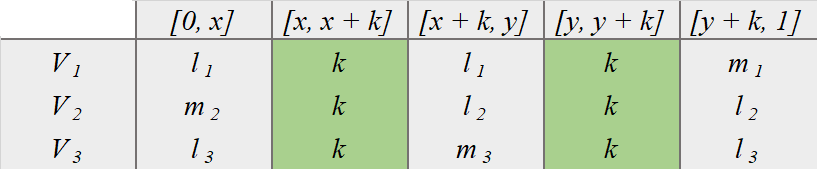}
	\label{fig:partial_main}
\end{figure}

If the intervals $I = [x,x+k]$ and $J = [y,y+k]$ have no cut points inside, then a new cut query can be answered so that the valuations remain
	consistent with a partial rigid measure system where two new intervals $I' \subseteq I$ and $J' \subseteq J$ have no cuts inside, length $0.01k$, and the densities of all the players are uniform on $I',J'$.
\end{lemma}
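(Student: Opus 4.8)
The plan is to answer the incoming cut query $\mathrm{Cut}_i(\alpha)$ (an Eval query is handled the same way, only without committing a new point) by first deciding \emph{where} the new cut point $z$ should land, and then redistributing the as-yet-unconstrained value mass so that the resulting system is again a partial rigid measure system with the two shrunken uniform intervals. The current state fixes the values of every player only on the blocks demarcated by the already-discovered cut points; in particular the mass inside $I=[x,x+k]$ and $J=[y,y+k]$ is completely unconstrained except for the total $k$ on each. First I would compute, for each player $j$, the quantity $\beta_j := V_j([0,z]) - (\text{value of already-fixed blocks to the left of } I \text{ or } J \text{ containing } z)$; this is the amount of "free" mass of player $j$ that must lie to the left of $z$ inside whichever of $I,J$ contains $z$ (or, if $z$ falls in an already-pinned block, there is nothing to do and we only need to re-carve $I,J$ internally). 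For player $i$ we need $\beta_i=\alpha$ minus the fixed prefix; for the other two players $\beta_j$ is ours to choose, subject to $0\le\beta_j\le k$.

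Second, I would split the relevant interval, say $I=[x,x+k]$, into $[x,z]\cup[z,x+k]$ and then further carve out a tiny uniform sub-subinterval $I'$ of length $0.01k$ that avoids $z$ and all old cut points — this is possible because $I$ has no cut points inside and $z$ is a single point, so at least one of $[x,z]$, $[z,x+k]$ has length $\ge k/2 > 0.01k$ and we place $I'$ there, leaving room on both sides. Inside $I'$ I set every density to the constant $1$ (or, more carefully, to a common constant in $(1/\sqrt2,\sqrt2)$ chosen so the bookkeeping closes — $1$ works since $1\in(1/\sqrt2,\sqrt2)$), so $V_j(I')=0.01k$ for all $j$. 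The remaining mass $k-0.01k$ of each player must be distributed over $I\setminus I'$ so that: (a) player $i$ gets exactly $\alpha$ on $[0,z]$, i.e. the right share of it lands left of $z$; (b) the total over $I$ is $k$; (c) the density stays strictly between $1/\sqrt2$ and $\sqrt2$ everywhere. Since $|I\setminus I'|=0.99k$ and the mass to place is $0.99k$, the \emph{average} density there is $1$, and the slack in the constraints $0<\beta_j<k$ (which we can enforce by never letting $z$ be answered when it would force $\beta_j$ outside $[0.01k,0.99k]$ — and if the old cut points already pin $z$ to such an extreme, $z$ lies essentially at an old boundary and we treat it as falling in a pinned block) leaves enough freedom to realize any such $\beta_j$ with a density profile bounded in $(1/\sqrt2,\sqrt2)$; this is a routine interpolation on a block of length $\Theta(k)$. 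I do the symmetric thing to $J$ even if $z\notin J$, just re-carving a fresh uniform $J'\subseteq J$ of length $0.01k$ avoiding all cut points, without changing the total $k$ on $J$.

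Third, I would verify that the new configuration still matches Definition~\ref{def:prms}: the block structure is refined but the values $\ell_j$ on the big side-blocks $[0,x],[x',y],[y',1]$-type pieces and $m_j$ on the opposite ones are untouched, the two "$k$-blocks" $I,J$ still each carry total $k$ for every player, the densities everywhere remain in $(1/\sqrt2,\sqrt2)$, and the new $z$ respects all the constraints. Relabeling $I\to I'$, $J\to J'$ (length $0.01k$, uniform, cut-point-free) gives exactly the claimed conclusion, and since $I'\subseteq I$, $J'\subseteq J$ the hypothesis "no cut points inside $I',J'$" is inherited.

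The main obstacle I anticipate is the simultaneous feasibility in step two: we must satisfy player $i$'s hard constraint $V_i([0,z])=\alpha$ exactly while keeping \emph{all three} densities in the narrow band $(1/\sqrt2,\sqrt2)$ on a block whose length is only $\Theta(k)$ and whose total mass is pinned — and we must do this in a way that is consistent no matter which of $I$ or $J$ the queried point falls into, and no matter how close $z$ is to an existing cut. Handling the boundary cases (where answering $z$ honestly would be incompatible with \emph{any} partial rigid measure system, forcing us to argue that in those cases $z$ must coincide with or lie in an already-pinned region so nothing need be re-carved) is the delicate part; the bulk density interpolation itself is straightforward once the band width $\sqrt2/(1/\sqrt2)=2$ is compared against the ratio of extreme-to-average mass demands, which the choice of the constant $0.01$ and the factor-of-$2$ room in $[0.01k,0.99k]$ are designed to accommodate.
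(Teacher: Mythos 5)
There is a genuine gap in your step three, and it is precisely where the real work of the lemma lives. Being ``consistent with a partial rigid measure system in which $I'$ and $J'$ are the new uniform intervals'' does not only mean that the densities stay in $(1/\sqrt{2},\sqrt{2})$ and that $V_i(I')=V_i(J')=0.01k$ for all $i$: it means that the refined five-column structure $[0,m],\ I'=[m,n],\ [n,p],\ J'=[p,q],\ [q,1]$ must again satisfy the rigid equalities of Definition~\ref{def:prms}, namely $V_1([0,m])=V_1([n,p])=\ell_1'$, $V_2([n,p])=V_2([q,1])=\ell_2'$, and $V_3([0,m])=V_3([q,1])=\ell_3'$. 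Unwinding these, the mass each player places on the four leftover sub-blocks $[x,m]$, $[n,x+k]$, $[y,p]$, $[q,y+k]$ must satisfy linear relations such as $V_1(x,m)=V_1(n,x+k)+V_1(y,p)$, and analogous ones for players $2$ and $3$, \emph{simultaneously} with the queried player's hard constraint $V_i([0,z])=\alpha$ and the density band. Your proposal treats the leftover mass as free (subject only to the totals $k$ on $I$ and $J$ and the band) and then ``relabels $I\to I'$, $J\to J'$''; with generic choices of your $\beta_j$ the rigid equalities fail, so the new configuration is not a partial rigid measure system having $I',J'$ as its middle columns, and Lemma~\ref{lem:far_ef3} cannot be invoked at the next round --- which is the entire purpose of the lemma in the lower-bound induction. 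This coordination is exactly what the paper's case analysis does: $I'$ and $J'$ are placed at matched offsets (around $x+2k/3$ and a correspondingly shifted point of $J$, chosen so the equations can close), and free parameters $\lambda,\mu,\phi,\psi$ are first solved from the rigid relations ($\lambda+2\mu=0.99$, $\phi=\psi$, and so on) and only then checked against the density bounds.

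A secondary, related omission: you keep the outer values $\ell_j,m_j$ ``untouched,'' but in the new system the outer columns are enlarged because they absorb $I\setminus I'$ and $J\setminus J'$; the parameters must be updated to new $\ell_j',m_j'$, and whether these still form a legal partial rigid measure system depends on the coordinated value assignment above, not merely on the totals. Your placement rule for $I'$ (``in whichever of $[x,z]$, $[z,x+k]$ is longer'') also decouples the position of $I'$ from that of $J'$, whereas the two placements must be linked for the player-by-player equalities to be solvable with densities in $(1/\sqrt{2},\sqrt{2})$.
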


\begin{proof}(of Theorem \ref{thm:3lb})
	Set the initial configuration to a partial rigid measure system with $k = 0.01$, $\ell_i = 0.35$,
	$m_i=0.28$ for each player $i$. Declare initial cuts at $0.34, 0.35, 0.67, 0.68$ and set the intervals $I = [0.34,0.35]$ and $J = [0.67,0.68]$ (see Table 2 in Appendix \ref{app:ef}). It can be verified there exist compatible valuations for which the densities are in $(1/\sqrt{2}, \sqrt{2})$.


	By iteratively applying Lemma \ref{lem:hide} with every Cut query, a protocol discovers with every cut query a partial rigid system, where the intervals $I$ and $J$ always have uniform density, and their length cannot be diminished by a factor larger than 100 in each iteration.
	By Lemma \ref{lem:density}, if a protocol encounters a partial rigid measure system for which there are no cuts inside $I$ and $J$, where $|I| = |J| = k$, then any configuration attainable with the existing cuts leads to envy of at least $0.01k$.
	To get $\epsilon$-envy, we need $k/100 < \epsilon$, and so the number of queries is $\Omega\left( \log{\frac{1}{\epsilon}}\right)$. 
\end{proof}

The construction can be extended to give a lower bound for any number of players.

\begin{theorem} \label{thm:lb_envyfree_any}
	Computing a connected $\epsilon$-envy-free allocation for $n\geq 3$ players requires $\Omega \left(\log{\frac{1}{\epsilon}}\right)$ queries.
\end{theorem}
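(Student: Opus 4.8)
The plan is to replay the proof of Theorem~\ref{thm:3lb} with an $n$-player analogue of the partial rigid measure system; this is natural because the exact generalized rigid measure systems of Definition~\ref{def:rms} are already defined for arbitrary $n$. Concretely, I would work with a configuration of $n$ \emph{value pieces} $P_1,\dots,P_n$ and $n-1$ \emph{gap intervals} $G_1,\dots,G_{n-1}$ arranged along the cake in the order $P_1,G_1,P_2,G_2,\dots,P_{n-1},G_{n-1},P_n$. Every gap has uniform density for all players and total value exactly $k$; each value piece carries the cyclic pattern of Definition~\ref{def:rms}: player $j$ values the two pieces $P_j$ and $P_{j+1}$ (indices cyclic, so player $n$ prizes $P_n$ and $P_1$, the pieces at the two ends of the cake) at $\ell_j$, and values each of the remaining $n-2$ value pieces at a common small value $m_j$, where $1/2>\ell_j>1/n>m_j>0$ and $2\ell_j+(n-2)m_j+(n-1)k=1$, while all densities lie in $(1/\sqrt2,\sqrt2)$. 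For $n=3$ this is exactly Definition~\ref{def:prms}. The adversary starts from an explicit such configuration with gaps of constant length (the direct lift of the initial instance used in the proof of Theorem~\ref{thm:3lb}, with $k=\Theta(1)$), for which admissible densities exist.

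Two lemmas then need to be lifted from the case $n=3$. The analogue of Lemma~\ref{lem:far_ef3} asserts that if $\mathcal P$ is a set of cut points containing all gap endpoints and no point in the interior of any gap, then every connected $n$-partition demarcated by points of $\mathcal P$ has envy $\Omega(k)$. Its proof is the $n$-player version of the case analysis behind Lemma~\ref{lem:far_ef3}: since the cuts avoid the gaps, each of the $n-1$ cuts lies in some value piece $P_i$, and the cyclic structure --- players $j$ and $j{+}1$ both prize $P_j$ --- forces some player to prefer an unassigned piece by a fixed fraction of $k$ unless a cut could be slipped inside a gap; the implication ``player $j$ prefers this piece, hence so does player $j'$'' is handled through Lemma~\ref{lem:density} exactly as for $n=3$. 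The analogue of Lemma~\ref{lem:hide} asserts that any cut query addressed to any player can be answered so that the updated valuations remain consistent with a partial rigid measure system in which every gap $G_i$ has shrunk by at most a fixed constant factor, still has uniform densities, and still has no cut point in its interior: if the answer falls outside all gaps it is answered arbitrarily subject to consistency, and if it must fall inside some $G_i$ we redistribute value locally among $P_{i-1},G_i,P_i$ for the queried player and make compensating $O(k)$ adjustments to the other players' values on the other gaps, restoring each player's equality constraint $V_j(P_j)=V_j(P_{j+1})$ while keeping all densities in $(1/\sqrt2,\sqrt2)$. As in the proof of Theorem~\ref{thm:3lb}, Eval queries hit already-discovered points and shrink no gap.

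Assembling these: starting from gaps of length $k_0=\Theta(1)$, after $t$ cut queries every gap still has length at least $k_0/c^{\,t}$ for the constant factor $c$ from the Lemma~\ref{lem:hide} analogue, and the gaps are cut-free, so by the Lemma~\ref{lem:far_ef3} analogue any allocation the protocol outputs has envy $\Omega(k_0/c^{\,t})$. To reach $\epsilon$-envy-freeness one needs $k_0/c^{\,t}=O(\epsilon)$, i.e. $t=\Omega(\log\frac1\epsilon)$, which is the claim.

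I expect the main obstacle to be the feasibility part of the Lemma~\ref{lem:hide} analogue: after shrinking all $n-1$ gaps by a bounded factor in response to one cut query, the freed value must be redistributed over the $2n-1$ pieces so that \emph{simultaneously} every player's equality constraint holds, the parameters stay in the window $m_j<1/n<\ell_j<1/2$, and every piece's average density stays in $(1/\sqrt2,\sqrt2)$. For $n=3$ this is checked via the explicit break points and the parameters $\lambda,\mu,\phi,\psi$ in the proof of Lemma~\ref{lem:hide}; for general $n$ it is a linear feasibility problem with $\Theta(n)$ variables and $\Theta(n)$ constraints, which I would argue is always solvable with slack because each adjustment has magnitude $O(k)$ --- tiny against the multiplicative room in the density window --- and, once the gap lengths are fixed, the per-player equality constraints decouple. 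A secondary nuisance is keeping the $n$-player version of the Lemma~\ref{lem:far_ef3} case analysis manageable, as the number of cases grows with $n$; the cyclic symmetry of the construction should collapse it to a bounded repertoire of cut-placement patterns.
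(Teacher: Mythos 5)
Your route is genuinely different from the paper's. You propose a single global $n$-player partial rigid measure system (the cyclic pattern of Definition~\ref{def:rms} with $n-1$ uniform ``gap'' intervals) and plan to lift both Lemma~\ref{lem:far_ef3} and Lemma~\ref{lem:hide} to arbitrary $n$. The paper instead avoids that generalization almost entirely: it partitions the players into $K=n/3$ disjoint triples, gives the triple $S_i$ a three-player rigid measure system supported only on the subinterval $J_i=[(i-1)/K,i/K]$ with densities rescaled to $(K/\sqrt2,K\sqrt2)$ (so that Lemma~\ref{lem:density} still applies with $ab=2$), answers queries inside $J_i$ exactly as in the three-player adversary, and then uses a chaining argument over the cut points $x_1,\dots,x_{K-1}$ separating consecutive groups to show that some group must be forced to solve the three-player instance on its own block; only the residual cases $n=3K+1,3K+2$ require extending the two lemmas, and then only to four and five players. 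What the paper's construction buys is precisely the avoidance of the two obstacles you flag: the three-player case analysis of Lemma~\ref{lem:far_ef3} and the explicit parameter fitting of Lemma~\ref{lem:hide} are reused verbatim within each block. What your construction would buy, if completed, is a cleaner global instance with densities uniformly in $(1/\sqrt2,\sqrt2)$ and no chaining step.

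As written, however, your proposal is a plan rather than a proof: the two lifted lemmas are exactly where the mathematical content lies, and you only give heuristic reasons why they should hold. For the $n$-player analogue of Lemma~\ref{lem:far_ef3} you assert that cyclic symmetry collapses the case analysis to a bounded repertoire, but the three-player proof already runs to several pages of casework over which value piece each cut lands in and which player receives which piece; for general $n$ the number of assignment patterns grows, and since the theorem is asymptotic in $\epsilon$ for fixed $n$ a growing case count is tolerable, but each case still has to be verified to yield envy $\Omega(k)$ --- this is not done. For the $n$-player analogue of Lemma~\ref{lem:hide} you reduce feasibility to a linear program ``solvable with slack,'' but the $n=3$ proof needs explicit break points and parameter choices ($\lambda,\mu,\phi,\psi$) precisely because the density window $(1/\sqrt2,\sqrt2)$ and the constraints $m_j<1/n<\ell_j<1/2$ must hold simultaneously on every new subinterval after every query; for general $n$ the window $1/n$ between $m_j$ and $\ell_j$ shrinks, so the claim that the adjustments always fit needs an actual argument. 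Until those two lemmas are proved, the proposal does not establish the theorem; the paper's block-plus-chaining construction is the device that makes the proof go through without them.
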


The lower bound of $\Omega \left(\log{\frac{1}{\epsilon}}\right)$ is in fact tight for the class of generalized rigid measure systems, for any (fixed) number of players. We show an upper bound of $O\left( \log{\frac{1}{\epsilon}}\right)$ for this class by designing a moving knife procedure and then simulating it in the discrete RW model.

\begin{theorem}  \label{thm:rms_sim}
	For the class of generalized rigid measure systems, a connected $\epsilon$-envy-free allocation  can be computed with $O\left( \log{\frac{1}{\epsilon}}\right)$ queries for any fixed number $n$ of players.
\end{theorem}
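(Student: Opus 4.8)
The plan is to design a moving-knife procedure that outputs the (essentially unique) connected envy-free allocation of a generalized rigid measure system using a fixed number of knives, and then to simulate it in the Robertson--Webb model by a one-dimensional binary search, exploiting the density bounds $\tfrac{1}{\sqrt2} < v_i < \sqrt2$ to control the discretization error.

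\emph{The procedure.} In a generalized rigid measure system the points $x_1 < \dots < x_{n-1}$ of Definition~\ref{def:rms} give the connected envy-free allocation in which player $j$ receives $P_j := [x_{j-1},x_j]$ (with $x_0 = 0$, $x_n = 1$): player $j \le n-1$ is indifferent between $P_j$ and $P_{j+1}$, player $n$ between $P_n$ and $P_1$, and no player envies any other piece. To find these cuts, use $n-1$ knives $k_1 < \dots < k_{n-1}$, set $k_0 := 0$, slide $k_1$ rightward, and let $k_2,\dots,k_{n-1}$ follow so as to maintain the chain of indifferences $V_j([k_{j-1},k_j]) = V_j([k_j,k_{j+1}])$ for $j = 1,\dots,n-2$; let $k_n$ be the point with $V_{n-1}([k_{n-2},k_{n-1}]) = V_{n-1}([k_{n-1},k_n])$. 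Each $k_j$ is a continuous, strictly increasing function of $k_1$ (moving $k_1$ right moves every knife right), $k_n = 0$ when $k_1 = 0$, and the chain forces $k_j = x_j$ for all $j$ --- hence $k_n = 1$ --- exactly when $k_1 = x_1$. The procedure slides $k_1$ until $k_n = 1$ and hands $[k_{j-1},k_j]$ to player $j$; on a generalized rigid measure system this is precisely the diagonal envy-free allocation.

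\emph{The simulation.} Track the position of $k_1$ by the value $\alpha := V_1([0,k_1])$, which ranges over $(1/n,1/2)$ since $t_1 = V_1([0,x_1])$ lies there. Given a trial $\alpha$, the whole chain is reconstructed with $O(n)$ queries: $\mathrm{Cut}_1(\alpha)$ and $\mathrm{Cut}_1(2\alpha)$ give $k_1$ and $k_2$, and then for $j = 2,\dots,n-1$ the queries $\mathrm{Eval}_j(k_{j-1})$, $\mathrm{Eval}_j(k_j)$, $\mathrm{Cut}_j\!\big(2V_j([0,k_j]) - V_j([0,k_{j-1}])\big)$ give $k_{j+1}$. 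If some Cut argument exceeds $1$ the chain ``overflows'', which one checks happens exactly when $\alpha > t_1$; otherwise the chain completes with $k_n < 1$ when $\alpha < t_1$ and $k_n = 1$ when $\alpha = t_1$. Since each $k_j$, hence $k_n$, is monotone in $\alpha$, a standard bisection on $\alpha$ --- starting from the bracket $[1/n,1/2] \ni t_1$, halving it, and keeping the half that still brackets $t_1$ --- produces after $O(\log\tfrac1\delta)$ rounds a trial value with $t_1 - \delta \le \alpha \le t_1$.

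\emph{Error and cost.} Since $v_1 > \tfrac1{\sqrt2}$, $|\alpha - t_1| \le \delta$ gives $|k_1 - x_1| \le \sqrt2\,\delta$; and since every density lies in $(\tfrac1{\sqrt2},\sqrt2)$, every player's value of an interval is within a factor $\sqrt2$ of its length, so a positional gap of size $g$ at one knife propagates to a gap of size at most $Cg$ at the next, for an absolute constant $C$. Hence $|k_j - x_j| \le C^{\,n}\delta$ for all $j$, so $|V_i([k_{j-1},k_j]) - V_i(P_j)| \le 2\sqrt2\,C^{\,n}\delta$ for all $i,j$, while the leftover $[k_n,1]$ absorbed into player $n$'s piece has $V_i$-value at most $\sqrt2\,C^{\,n}\delta$, which only helps envy-freeness. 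Taking $\delta := \epsilon/(8\sqrt2\,C^{\,n})$ and using that the $x$-allocation is exactly envy-free, the output is $\epsilon$-envy-free. The number of rounds is $\log_2\!\big((\tfrac12 - \tfrac1n)/\delta\big) = \log_2\tfrac1\epsilon + O(n)$ and each round costs $O(n)$ queries, giving $O(n\log\tfrac1\epsilon) = O(\log\tfrac1\epsilon)$ queries for fixed $n$. The crux is the per-step error bound: knife $k_{j+1}$ is pinned by the indifference of a \emph{different} player than the one balanced at $k_j$, so one must verify --- using the near-uniformity of all densities forced by the $\tfrac1{\sqrt2},\sqrt2$ bounds --- that the perturbation amplifies by only a constant factor per step, so the total blow-up $C^{\,n}$ is a constant for fixed $n$ rather than something uncontrolled; proving the global monotonicity of the chain in $\alpha$, which legitimizes the bisection, is the other point needing care.
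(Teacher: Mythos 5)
Your overall strategy --- a moving-knife procedure driven by a single scalar parameter, then bisection on that parameter --- matches the paper's, and your error-propagation analysis (constant blow-up per knife, hence a total factor $C^{\,n}$ that is constant for fixed $n$) is correct. The genuine gap is the monotonicity claim that legitimizes the bisection. You assert that ``moving $k_1$ right moves every knife right,'' but for your cross-player chain this does not follow from the density bounds once $n \ge 4$. Knife $k_{j+1}$ is pinned by $V_j([0,k_{j+1}]) = 2V_j([0,k_j]) - V_j([0,k_{j-1}])$, so $v_j(k_{j+1})\,k_{j+1}' = 2v_j(k_j)\,k_j' - v_j(k_{j-1})\,k_{j-1}'$, and with densities ranging over $(\tfrac{1}{\sqrt2},\sqrt2)$ the positivity of the right-hand side essentially requires $k_j' \ge k_{j-1}'$. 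That inequality is not preserved along the chain: already $k_3'/k_2' = \bigl(2v_2(k_2) - v_2(k_1)v_1(k_2)/(2v_1(k_1))\bigr)/v_2(k_3)$ can be driven arbitrarily close to $0$ by densities near the extremes (e.g.\ $v_2 \approx \sqrt2$, $v_1 \approx \tfrac{1}{\sqrt2}$ near $k_1$ and the reverse near $k_2$), after which $k_4' = \bigl(2v_3(k_3)k_3' - v_3(k_2)k_2'\bigr)/v_3(k_4)$ is negative. (For $n=3$ the same computation gives $v_2(k_3)k_3' > 4\cdot\tfrac{1}{2}/\sqrt2 - \sqrt2 = 0$, so your argument is sound there --- but only just: the constant is exactly tight against the bound $\sqrt2$.) Without monotonicity of $k_n$ in $\alpha$, the overflow test no longer maintains a bracket around $t_1$; the bisection converges to \emph{some} root of $k_n(\alpha)=1$, and a spurious root satisfies only the $n-1$ adjacent-indifference conditions, which by themselves do not imply envy-freeness. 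So the proof is incomplete for $n \ge 4$, which the theorem covers.

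The paper sidesteps exactly this issue by having player $1$ alone place every knife, at the points where $V_1([0,\cdot])$ equals $2t_1 + (k-2)s_1$ with $s_1$ determined by $t_1$; these positions are trivially monotone in the single search parameter $t_1$, and correctness leans on the full rigid structure of player $1$'s row of values ($t_1,t_1,s_1,\dots,s_1$) rather than on threading indifferences through different players. To repair your argument you would need either to prove, from the rigid-measure structure itself rather than from the density bounds alone, that $k_n(\alpha)=1$ has a unique root (or that the output at any root is envy-free), or to switch to a knife-placement rule whose monotonicity in the search parameter is immediate, as the paper does.
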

\section{Perfect Allocations}

As mentioned in Corollary \ref{cor:apx_general}, $\epsilon$-perfect allocations with the minimum number of cuts can be  
computed with $O(n^3/\epsilon)$ queries. 
For $n=2$ players, the problem of computing $\epsilon$-perfect allocations can be solved 
more efficiently by simulating Austin's moving procedure in the RW model. The proofs for this section are in Appendix \ref{apdx:perfect}.

\begin{theorem} \label{thm:perfect_ub}
	An $\epsilon$-perfect allocation for two players can be computed with $O(\log{\frac{1}{\epsilon}})$ queries.
\end{theorem}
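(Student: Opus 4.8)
The plan is to approximately simulate Austin's moving-knife procedure in the Robertson--Webb model. Recall that Austin's procedure works with a one-parameter family of ``sliding intervals'': for $\alpha\in[0,\tfrac12]$ let $x_\alpha$ be the point with $V_1([0,x_\alpha])=\alpha$ and $r_\alpha$ the point with $V_1([0,r_\alpha])=\alpha+\tfrac12$, and set $M_\alpha=[x_\alpha,r_\alpha]$. By construction $V_1(M_\alpha)=\tfrac12$ for \emph{every} $\alpha$, so the only quantity we must control is $F(\alpha):=V_2(M_\alpha)$. First I would query $t_1=\mathrm{Cut}_1(\tfrac12)$ and $t_2=\mathrm{Cut}_2(\tfrac12)$: if $t_1=t_2$ the single cut there is already a perfect allocation, and otherwise, relabelling the two players if needed, we may assume $t_1<t_2$, so that $F(0)=V_2([0,t_1])\le\tfrac12\le V_2([t_1,1])=F(\tfrac12)$, i.e. $F-\tfrac12$ changes sign.

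The obstacle is that the RW model does not let us slide the knives or evaluate $F$ at a continuum of points, and, worse, $F$ need not be Lipschitz when densities are unbounded, so naive bisection on $\alpha$ need not terminate in $O(\log\tfrac1\epsilon)$ steps. To get around this I would maintain a bracket consisting of two left-knife positions $a\le b$ together with their matching right-knife positions $r_a,r_b$ (all of them kept as cut points), with the invariant $F_a:=V_2([a,r_a])\le\tfrac12\le V_2([b,r_b])=:F_b$, starting from $a=0,\ r_a=t_1,\ b=t_1,\ r_b=1$. The key point is that $a\le b\le r_a\le r_b$ (because $[a,b]\subseteq[0,t_1]$ forces $V_1([a,b])\le\tfrac12$), so
\[
F_b-F_a=V_2([r_a,r_b])-V_2([a,b]),
\]
and hence $|F_b-F_a|\le\max\{V_2([a,b]),\,V_2([r_a,r_b])\}$. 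Thus it suffices to drive both of $V_2([a,b])$ and $V_2([r_a,r_b])$ below $\epsilon$ while keeping the invariant.

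I would do this by round-robin bisection, alternating two kinds of step: (type A) take the point $c\in[a,b]$ with $V_2([a,c])=\tfrac12 V_2([a,b])$, and (type B) take the left-knife point $c$ whose matching right knife $c'$ satisfies $V_2([r_a,c'])=\tfrac12 V_2([r_a,r_b])$; in each case compute $F$ at the new configuration with $O(1)$ cut and eval queries and move the $a$-endpoint or the $b$-endpoint of the bracket to it according to whether $F\le\tfrac12$ or $F>\tfrac12$. One checks that either branch preserves the invariant (the sign condition at the moved endpoint holds by the test outcome, the other endpoint is untouched), that a type-A step halves $V_2([a,b])$ and a type-B step halves $V_2([r_a,r_b])$ while never increasing the other, and that all Cut arguments stay in $[0,1]$ (e.g.\ in type~B, $c'\ge t_1$ gives $V_1([0,c'])\ge\tfrac12$). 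After $O(\log\tfrac1\epsilon)$ rounds both quantities are at most $\epsilon$, so $F_a\in[\tfrac12-\epsilon,\tfrac12]$; outputting $M_a=[a,r_a]$ (two cuts) gives $V_1(M_a)=\tfrac12$ exactly and $V_2(M_a)=F_a$, hence an $\epsilon$-perfect allocation, in $O(\log\tfrac1\epsilon)$ queries. I expect the delicate part to be exactly this round-robin bookkeeping: verifying in every branch that the ordering $a\le b\le r_a\le r_b$ and the bracket invariant survive, that the right quantity is halved, and that the test configuration's cut points are produced by legal RW queries so that every Eval target is a previously made cut point.
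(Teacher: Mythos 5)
Your proof is correct, and like the paper's it obtains the bound by simulating Austin's moving-knife procedure with a binary search; the organization of the search, however, is genuinely different. The paper bisects the uncertainty interval for the right knife in \emph{player 1's} measure until that interval is worth less than $\epsilon$ to player 1 (so that any cut inside it is $\epsilon$-perfect for player 1), and only then runs a second phase of bisection in player 2's measure. You instead keep a bracket of two complete knife configurations $(a,r_a)$ and $(b,r_b)$ with $F_a \le 1/2 \le F_b$, observe via the telescoping identity $F_b - F_a = V_2([r_a,r_b]) - V_2([a,b])$ that the only quantities to control are player 2's values of the two gaps, and drive both below $\epsilon$ by alternating bisections carried out entirely in player 2's measure (player 1 is satisfied exactly throughout, since every configuration in the bracket gives him exactly $1/2$). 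Your version avoids the paper's phase switch and turns the accuracy analysis into a one-line identity; the price is the bookkeeping you yourself flag---verifying that $a \le b \le r_a \le r_b$ and the bracket invariant survive every branch, that the halved quantity is the intended one, and that each Eval target is a previously produced cut point---all of which does go through (in particular $b \le t_1$ is preserved, so $V_1([a,b]) \le 1/2$ and the matching right knives stay ordered and inside $[0,1]$). Both arguments yield $O\left(\log \frac{1}{\epsilon}\right)$ queries and an allocation with at most two cuts.
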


As we show next, this bound is optimal.

\begin{theorem} \label{thm:perfect_lb}
	Computing an $\epsilon$-perfect allocation with the minimum number of cuts for two players requires $\Omega\left(\log\frac{1}{\epsilon}\right)$ queries.
\end{theorem}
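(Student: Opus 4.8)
The plan is to show the Austin-simulation upper bound of Theorem~\ref{thm:perfect_ub} cannot be beaten, by running the self-reducible-structure argument of Theorem~\ref{thm:3lb} adapted to perfect $2$-cut allocations. For two players a perfect allocation with the minimum (two) cuts is a pair $\hat a<\hat b$ with $V_1([\hat a,\hat b])=V_2([\hat a,\hat b])=1/2$; parametrizing Austin's procedure, for $a$ ranging over $[0,a_0]$ with $V_1([0,a_0])=1/2$ one lets $b(a)$ be the point (unique, since densities will be hungry) with $V_1([a,b(a)])=1/2$, and then $g(a):=V_2([a,b(a)])$ is continuous with $g(0)$ and $g(a_0)$ on opposite sides of $1/2$ (they sum to $1$), so the perfect allocations are exactly the cut pairs $\bigl(a^*,b(a^*)\bigr)$ with $g(a^*)=1/2$. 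The essential point -- which lets this mimic the equitable-for-two-players situation rather than run into an over-constrained ``window'' -- is that the second cut $b(a)$ is slaved to the first, so the perfect cut has one effective degree of freedom pinned down by the single intermediate-value condition $g(a^*)=1/2$, whose sign pattern is robust to perturbation.

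The argument then has the same three ingredients as Theorem~\ref{thm:3lb}. I would introduce the analogue of a partial rigid measure system -- a \emph{partial perfect configuration} -- in which the densities are bounded everywhere in $(1/\sqrt2,\sqrt2)$, all discovered cut points are present, there is a single ``uncertainty'' interval $I$ of length $k$ with uniform density and no discovered cut inside it such that $a^*\in I$ for every perfect $2$-cut allocation, and the committed part of the valuations forces $g$ to stay $\Omega(k)$-away from $1/2$ at the endpoints of $I$ and forces every second-cut position realizable from the discovered cuts to be $\Omega(k)$ away from $b(a^*)$; as in Lemma~\ref{lem:far_ef3}, a deviation of one player's value from $1/2$ along any realizable cut pair is turned into a deviation of the other player's value via Lemma~\ref{lem:density} (the densities being within a factor of $2$ of each other). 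This yields the ``far'' lemma: if the discovered cut set $\mathcal P$ contains the endpoints of $I$ but no interior point of $I$, then no $2$-cut allocation demarcated by $\mathcal P$ is $\epsilon$-perfect once $k$ exceeds a fixed constant times $\epsilon$. (One also checks, from the same estimates, that no allocation using fewer than two cuts is $\epsilon$-perfect; this is robust to the later shrinking, so the ``minimum number of cuts'' phrasing does not help the protocol.) Second, the ``hide'' lemma, the analogue of Lemma~\ref{lem:hide}: any Cut query is answered so that the configuration stays a partial perfect configuration with $I$ replaced by some $I'\subseteq I$ of length $k/100$, still uniform and cut-free inside and still containing $a^*$; if the forced point lies outside $I$ we answer consistently with the history and leave $I$ alone, and if it lies inside $I$ we put the returned point near one end of $I$, hide $I'$ near the other end, subdivide $I$ accordingly, and redistribute the committed value of $I$ among the new sub-blocks and re-tune the neighbouring blocks so that both players' totals and all density bounds are preserved and $g$ still crosses $1/2$ inside $I'$; there are enough free parameters that this is a small, explicitly solvable feasibility problem, exactly as in the three-player proof.

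Putting these together: start from a partial perfect configuration with $k=\Theta(1)$ whose uncertainty interval's endpoints are the initial declared cuts (one verifies compatible densities in $(1/\sqrt2,\sqrt2)$ exist). Iterating the hide lemma, after $t$ Cut queries the protocol still faces a partial perfect configuration with uncertainty interval of length $\ge\Omega(100^{-t})$, and Eval queries only re-report committed values and change nothing; so by the far lemma, while $100^{-t}$ exceeds a fixed constant times $\epsilon$ no $2$-cut allocation the protocol can output is $\epsilon$-perfect against the valuation pair it now faces, giving $t=\Omega(\log\frac1\epsilon)$. The main obstacle, as for Theorem~\ref{thm:3lb}, is the hide lemma: keeping every density inside the narrow band $(1/\sqrt2,\sqrt2)$ while preserving committed values \emph{and} keeping the zero of $g$ hidden deep inside the shrinking interval, losing only a constant factor in length each step; the delicate point special to perfect allocations is to arrange the structure so that re-tuning is always possible -- that is, so no committed quantity can ever force $g$ to have its crossing outside $I$ -- which is exactly why one should route everything through the single slaved parameter of Austin's procedure rather than treat the two cuts independently.
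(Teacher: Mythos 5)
Your overall strategy matches the paper's: an adversary maintains a shrinking, cut-free ``uncertainty window'' that must contain the perfect cut, losing only a constant factor ($100$) in its length per Cut query, and a ``far'' lemma converts a cut-free window of length $k$ into distance $\Omega(k)$ from perfection; the paper even adopts your key simplification of making player $1$ uniform so that the second cut is rigidly slaved to the first ($\ell = k + 1/2$).

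There is, however, one concrete gap. You protect only a single interval $I$ around the first cut $a^*$ and stipulate that a query whose forced answer ``lies outside $I$'' is answered consistently with the history. But the slaved cut $b(a^*)$ lies outside $I$, and the protocol can attack it directly: a query $Cut_2(\alpha)$ with $\alpha$ chosen in the value-range of the neighbourhood $J$ of $b(a^*)$ forces the adversary to commit a point $z\in J$ together with $V_2([0,z])$; one further query $Cut_1\bigl(V_1([0,z])-1/2\bigr)$ then produces the point $z-1/2$, which lies \emph{inside} $I$, destroying your invariant — and the committed value $V_2([0,z])$ already pins down where the zero of $g$ sits. So the hiding must be done symmetrically: the adversary has to maintain \emph{two} cut-free intervals, $I$ and $J=I+1/2$, shrink them in lockstep, and apply the hiding/re-tuning step to queries landing in either one. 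This is exactly what the paper's Lemma~\ref{lem:perfect_induction} does: its conditions 2--3 refer to both $I$ and $J$, and its proof has separate cases for answers falling in $I$ (Cases 1.a, 1.b) and in $J$ (Cases 2.a, 2.b). Your invariant does state that realizable second-cut positions should stay $\Omega(k)$ from $b(a^*)$, but the hide lemma as you describe it does not maintain that clause. A secondary point: for the ``far'' direction the paper needs neither Lemma~\ref{lem:density} nor densities in $(1/\sqrt2,\sqrt2)$ (its player-$2$ densities are not in that band); it simply constructs player $2$'s values so that the middle piece deviates from $1/2$ by at least $d/100+\epsilon$ whenever the first cut misses $I$. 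Lemma~\ref{lem:density} controls ratios of piece values, which is not the right quantity for the additive deviation $|V_2-1/2|$ you need here, so the direct construction is the cleaner route.
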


We prove the lower bound by maintaining throughout the execution of any protocol two intervals in which the cuts of the perfect allocation must be situated, such that the distance to a perfect partition cannot decrease too much with any cut query.


\begin{lemma} \label{lem:perfect_induction}
	Let $\epsilon > 0$. Consider a two player instance consistent with Figure \ref{fig:perfect_main}, where
	\begin{description}
		\item[\hspace{4mm}\emph{1.}] $\epsilon < 0.001 \min\{a,d\}$.
		\item[\hspace{4mm}\emph{2.}] any allocation obtained with cuts $0 < k < \ell < 1$ that is $\epsilon$-perfect from the point of view of player $1$ is worth to player $2$ less than $0.5 - d/100-\epsilon$ when $k < x$ and more than $0.5 + d/100 + \epsilon$ when $k > x+a$.
		\item[\hspace{4mm}\emph{3.}] there are no cut points inside the intervals $I = [x,x+a]$ and $J = [y,y+a]$.
	\end{description} 
	\begin{figure}[h!]
			\centering
		\includegraphics[scale=0.85]{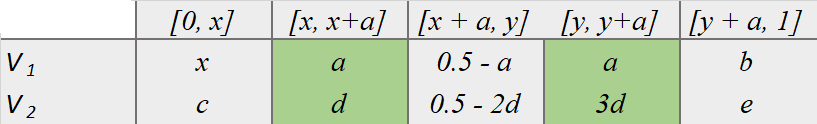}
	\caption{Construction for perfect. Player $1$ has uniform density everywhere; $y= x+0.5$, $0 < a,d \leq 0.1$, $x,b,c,e > 0$, $x + a + b = 0.5$ and $c + 2d + e = 0.5$.}
		\label{fig:perfect_main}
	\end{figure}

	Then a new query can be handled so that the valuations remain consistent with Figure \ref{fig:perfect_main}, such that conditions $2$ and $3$ still hold with respect to intervals $I' = [x',x'+a'],J' = [y',y'+a']$ and parameters $x',a'=a/100$, and $d'=d/100$.
\end{lemma}

By iteratively applying Lemma~\ref{lem:perfect_induction} with every cut query received from a suitably chosen starting configuration, we will obtain that the number of rounds is $\Omega \left( \log{\frac{1}{\epsilon}}\right)$, implying Theorem \ref{thm:perfect_lb}.

\section{Equitable Allocations}

Cechlarova, Dobos, and Pillarova~\cite{CDP13} showed that for any number of players and any order, there exists a connected equitable allocation in that order. Moreover, the equitable allocation is proportional for some order. We give a tight lower bound on the number of queries required for finding connected $\epsilon$-equitable allocations; an upper bound was given in \cite{CP12}. 

\begin{theorem}[\cite{CP12}]
For any fixed number $n$ of players, a connected $\epsilon$-equitable and proportional allocation can be computed with $O(\log{\frac{1}{\epsilon}})$ queries.
\end{theorem}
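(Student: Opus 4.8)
The plan is to reduce the general-$n$ problem to the single-order case using the existence result of \cite{CDP13}, and then to solve the single-order case by a bisection that, crucially, halves the relevant player's valuation \emph{mass} at every step rather than merely the length of an interval, so that the equitability gap is controlled directly and only $O(\log\frac{1}{\epsilon})$ steps are needed with no assumption on the densities.

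\smallskip
\noindent\emph{Reduction.} By \cite{CDP13}, for each of the $n!$ orders there is a connected equitable allocation with the pieces arranged in that order, and for at least one order $\sigma^*$ this allocation is proportional; moreover the common value of an equitable allocation in a fixed order is unique (the running prefix- and suffix-values can coincide on a whole interval only if each is constant there). Hence it suffices, for a \emph{fixed} order — say the identity, player $i$ taking the $i$-th piece from the left — to compute with $O_n(\log\frac{1}{\epsilon})$ queries a connected $\epsilon$-equitable allocation in that order all of whose values lie within $\epsilon$ of the exact common value; running this for all $n!=O(1)$ orders and outputting one whose minimum value is at least $1/n-\epsilon$ (checked with a few $\mathrm{Eval}$ queries, and guaranteed to exist because the run for $\sigma^*$ has all values within $\epsilon$ of a number $\ge 1/n$) then gives an $\epsilon$-equitable, $\epsilon$-proportional connected allocation.

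\smallskip
\noindent\emph{The case $n=2$, which contains the idea.} For the order (player $1$ left, player $2$ right) the equitable cut is any zero of the continuous non-increasing function $d(z)=V_2([z,1])-V_1([0,z])$, which exists since $d(0)=1>0$ and $d(1)=-1<0$. Maintain $[z_{lo},z_{hi}]$ with the invariant $d(z_{lo})\ge 0\ge d(z_{hi})$, starting from $[0,1]$. In step $2j-1$ use a $\mathrm{Cut}_1$ query to obtain the point $m$ with $V_1([z_{lo},m])=\tfrac{1}{2}V_1([z_{lo},z_{hi}])$, and in step $2j$ use a $\mathrm{Cut}_2$ query to halve $V_2([z_{lo},z_{hi}])$ instead (skipping a player whose value on the interval is already $0$); in each step compute $d(m)$ with $O(1)$ $\mathrm{Eval}$ queries and replace $[z_{lo},z_{hi}]$ by $[m,z_{hi}]$ if $d(m)\ge0$ and by $[z_{lo},m]$ otherwise, which preserves the invariant. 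Let $P=V_1([z_{lo},z_{hi}])+V_2([z_{lo},z_{hi}])$. No step increases $P$, and a consecutive player-$1$/player-$2$ pair halves it (the first halves the player-$1$ contribution exactly, the second then halves the player-$2$ contribution of a sub-interval whose player-$1$ contribution is already at most half of the original), so after $2t$ steps $P\le 2^{1-t}$. The allocation with cut $z_{lo}$ gives player $1$ value $V_1([0,z_{lo}])$ and player $2$ value $V_2([z_{lo},1])$, and the invariant yields $0\le V_2([z_{lo},1])-V_1([0,z_{lo}])=d(z_{lo})\le d(z_{lo})-d(z_{hi})=V_1([z_{lo},z_{hi}])+V_2([z_{lo},z_{hi}])=P\le 2^{1-t}$, which is $\le\epsilon$ for $t=O(\log\frac{1}{\epsilon})$, i.e.\ $O(\log\frac{1}{\epsilon})$ queries. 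Since $z_{lo}$ lies at most at the right end of the set where $d$ vanishes, both values in fact lie in $[c^*-\epsilon,c^*+\epsilon]$ with $c^*$ the exact common value, which gives the proportionality part via the reduction above.

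\smallskip
\noindent\emph{General $n$ and the main obstacle.} For the identity order, given a trial common value $c$ one builds the forced chain $z_1(c)<\dots<z_{n-1}(c)$ with $V_i([z_{i-1}(c),z_i(c)])=c$ for $i\le n-1$ (each cut from one $\mathrm{Cut}_i$ and one $\mathrm{Eval}_i$ query), leaving player $n$ with value $g(c)=1-V_n([0,z_{n-1}(c)])$; then $h(c)=g(c)-c$ is non-increasing with $h(0)=1>0$ and $h<0$ once the chain no longer fits, so it has a zero $c^*$ and bracketing it yields an exact equitable allocation. Bisecting on $c$ with the invariant $h(c_{lo})\ge0\ge h(c_{hi})$, the gap of the configuration at $c_{lo}$ is $h(c_{lo})\le h(c_{lo})-h(c_{hi})=(c_{hi}-c_{lo})+V_n\bigl([z_{n-1}(c_{lo}),z_{n-1}(c_{hi})]\bigr)$; the first term is driven down by ordinary halving of the $c$-interval, while the second must be driven down by carrying along an uncertainty interval for each of the $n-1$ cuts and performing the halving of the $n=2$ case round-robin in the appropriate player's measure, so that a potential $\Phi=\sum_{j}\bigl(\text{value the owner of piece }j\text{ assigns to its uncertainty interval}\bigr)$ shrinks by a constant factor every $O(n)$ steps and dominates the gap. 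I expect the delicate point to be precisely this bookkeeping for $n\ge 3$: the $n-1$ cuts are coupled — shifting an earlier cut shifts all later ones — so one must order the refinements and choose the bracketing invariants so that no step undoes the progress already made on earlier pieces. The alternative is to turn the existence argument of \cite{CDP13} into a moving-knife procedure with $n-1$ knives and simulate it in the spirit of this paper's Lemma~\ref{lem:simgen} and the moving-knife simulation, at the cost of a constant (in $\epsilon$) factor depending on $n$. Either way the total is $O_n(\log\frac{1}{\epsilon})=O(\log\frac{1}{\epsilon})$ for fixed $n$.
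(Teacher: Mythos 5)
First, a point of reference: the paper does not prove this statement at all --- it is imported from \cite{CP12}, whose bound $O\left(n\left(\log n+\log\epsilon^{-1}\right)\right)$ specializes to $O(\log\epsilon^{-1})$ for fixed $n$ --- so there is no in-paper proof to compare against. Your reduction to a fixed order via \cite{CDP13} is sound, and your $n=2$ argument is complete and correct: bracketing a zero of $d(z)=V_2([z,1])-V_1([0,z])$ while alternately halving each player's \emph{mass} on the bracket makes the potential $P$ halve every two queries, and $0\le d(z_{lo})\le P$ bounds the equitability gap with no density assumption. This is the right mechanism and mirrors what the paper does elsewhere (e.g.\ the proof of Theorem~\ref{thm:perfect_ub}: binary search first in one player's measure, then in the other's).

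For $n\ge 3$, however, there is a genuine gap, and you flag it yourself. Bisecting on the trial common value $c$ controls only the term $c_{hi}-c_{lo}$ in your bound on $h(c_{lo})$; the term $V_n\bigl([z_{n-1}(c_{lo}),z_{n-1}(c_{hi})]\bigr)$ can remain bounded away from $0$ for arbitrarily many rounds when densities are not bounded below, because the chain map $c\mapsto z_{n-1}(c)$ need not be Lipschitz in player $n$'s measure. Your proposed fix --- a round-robin potential $\Phi$ over the $n-1$ uncertainty intervals --- is the right kind of idea, but the claim that $\Phi$ shrinks by a constant factor every $O(n)$ steps is exactly what needs proof: refining the uncertainty interval of cut $j$ in player $j$'s measure displaces all cuts $j+1,\dots,n-1$, and you state no invariant guaranteeing that masses already made small on later pieces are not re-inflated. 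The fallback you mention does not rescue the claim as stated: the paper's moving-knife simulation (Theorem~\ref{thm:mainsimulation_robust}) requires densities bounded above \emph{and below}, an assumption absent from the theorem, and Lemma~\ref{lem:simgen} yields only $O(n/\epsilon)$. So the proposal proves the theorem for $n=2$ but leaves the general case an unproven (if plausible) sketch; you should either cite \cite{CP12} for $n\ge3$, as the paper does, or supply the coupled-bisection invariant explicitly.
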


\begin{theorem} \label{thm:equitable_lb}
	Computing a connected $\epsilon$-equitable allocation for two players requires $\Omega\left(\log\frac{1}{\epsilon}\right)$ queries.
\end{theorem}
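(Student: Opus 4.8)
The plan is to build a self-reducible adversary, in the same spirit as the constructions behind Theorems~\ref{thm:3lb} and~\ref{thm:perfect_lb}. For two players with connected pieces an allocation is determined by a single cut point $z$, the pieces being $[0,z]$ and $[z,1]$; setting $g(z)=V_1([0,z])-V_2([z,1])$, the allocation is $\epsilon$-equitable (in either orientation, since assigning the pieces the other way around produces the value difference $-g(z)$) precisely when $|g(z)|\le\epsilon$. When player~$1$ has uniform density, $g$ is continuous, strictly increasing, $g(0)=-1$, $g(1)=1$, so the equitable cut is unique, and over any interval of length $L$ the value of $g$ increases by more than $L$.

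I would maintain, throughout the execution of any RW protocol, a configuration consistent with the answers given so far in which: player~$1$ has uniform density everywhere; player~$2$'s density lies in $(1/\sqrt2,\sqrt2)$ everywhere and is uniform (equal to $1$) on a distinguished interval $I=[x,x+a]$ of length $a$; no cut point lies inside $I$; and $g(x)\in[-0.6a,-0.4a]$. Given $\epsilon<0.01\,a$ this already forces $|g(z)|>\epsilon$ for every $z\notin(x,x+a)$ (for $z\le x$ because $g(z)\le g(x)\le-0.4a<-\epsilon$, and for $z\ge x+a$ because $g(x+a)=g(x)+\int_I(1+v_2)>-0.6a+1.7a=1.1a>\epsilon$), and it places the unique equitable cut inside $I$; hence, as long as the invariant holds, no allocation using only discovered cut points (together with $0,1$) is $\epsilon$-equitable. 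The starting configuration takes, e.g., $I_0=[0.45,0.5]$, declares $0.45$ and $0.5$ as initial cut points, and chooses $v_2$ so that $g(0.45)=-0.025$ and $v_2\equiv1$ on $I_0$; one checks that a completion with density in $(1/\sqrt2,\sqrt2)$ exists.

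The heart of the argument is a one-query lemma analogous to Lemma~\ref{lem:hide} and Lemma~\ref{lem:perfect_induction}: any query can be answered so that the invariant is preserved with a new interval $I'\subseteq I$ of length $a'=a/100$ and $g(x')\in[-0.6a',-0.4a']$. An Eval query concerns a previously made cut point, hence a point outside $I$, and is answered from the already-fixed densities; likewise a cut query whose (possibly forced) answer falls outside $I$. The only real case is a cut query whose answer must fall inside $I$ --- a forced point $\alpha\in I$ for a Cut query to player~$1$, or a point we have partial freedom to place for a Cut query to player~$2$. Since player~$1$ is uniform, the equitable cut sits at distance $\Theta(|g(x)|)=\Theta(a)$ from $x$, comfortably inside $I$; I then re-shape $v_2$ on $I$ --- permitted because no prior commitment pins $v_2$ on the interior of $I$, the only constraint being that $V_2(I)$ (fixed, since $x,x+a$ are cut points) is preserved and the density stays in $(1/\sqrt2,\sqrt2)$ --- so as to move the equitable cut into $I'=[x',x'+a']$ with $v_2\equiv1$ on $I'$. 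A short computation shows that $g(x')\in[-0.6a',-0.4a']$ forces $x'-x$ to lie in a window of width $\Theta(a)$ near $x$; choosing $x'$ in that window, we dodge the single new cut point (removing a sub-interval of length $a'=a/100\ll\Theta(a)$), obtain $I'\subseteq I$, and verify the density of $v_2$ remains in bounds on $[x,x']$, on $I'$, and on $[x'+a',x+a]$.

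Iterating from the starting configuration, after $t$ queries the adversary still holds a valid configuration with interval length $a_0/100^{\,t}$, so for $\epsilon<0.01\,a_0/100^{\,t}$ no allocation supported on cuts discovered in $t$ queries is $\epsilon$-equitable; therefore any correct protocol makes $\Omega\!\left(\log\frac1\epsilon\right)$ queries, which is Theorem~\ref{thm:equitable_lb}. The main obstacle is the re-shaping step: one must check that the rigidity forced by player~$1$'s uniform density nonetheless leaves enough slack --- in the position of $I'$ and in the range of $v_2$ --- to re-hide the equitable cut, avoid the fresh cut point, and respect the density bounds all at once, i.e., that the $\Theta(a)$-wide window for $x'$ and the density margins survive the factor-$100$ contraction; fitting these constants together is the crux, exactly as in the analogous lemmas for envy-free and perfect allocations.
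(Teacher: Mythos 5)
Your proposal is correct and follows essentially the same adversary argument as the paper: maintain an interval containing the unique equitable cut with no discovered cut points inside, show cuts outside it leave an equitability gap proportional to the interval's scale, and shrink that scale by at most a constant factor (100) per query. The only differences are cosmetic bookkeeping — you fix player 1 uniform and track the interval length $a$ with $g(x)=\Theta(a)$, whereas the paper tracks the value gap $b-a$ at the two endpoints directly — and your feasibility window of width $\Theta(a)$ versus an obstacle of size $a/100$ checks out.
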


For two hungry players, the connected equitable and proportional allocation is unique.

\begin{lemma} \label{lem:ep_unique}
	For two players with hungry valuations, the cut point of the equitable allocation is unique.
\end{lemma}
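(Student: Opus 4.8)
The plan is to reduce the statement to the strict monotonicity of a single real function and then apply the intermediate value theorem. Fix the order of the two pieces; without loss of generality player $1$ receives the left piece $[0,z]$ and player $2$ receives the right piece $[z,1]$, where $z\in[0,1]$ is the cut point. By definition, such an allocation is equitable exactly when the common-value condition $V_1([0,z]) = V_2([z,1])$ holds, so it suffices to show this equation has a unique solution in $z$. To this end, set $g(z) = V_1([0,z]) - V_2([z,1])$ for $z\in[0,1]$.

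First, $g$ is continuous: each $V_i([0,z]) = \int_0^z v_i(x)\,dx$ is (absolutely) continuous in $z$ since the densities are integrable, and $V_2([z,1]) = 1 - V_2([0,z])$. Second, $g$ is \emph{strictly} increasing: for $0 \le z < z' \le 1$ we have $V_1([0,z']) - V_1([0,z]) = V_1([z,z']) = \int_z^{z'} v_1(x)\,dx > 0$ because player $1$ is hungry (the density $v_1$ is strictly positive, so every subinterval of positive length has positive value), and likewise $V_2([z,1]) - V_2([z',1]) = V_2([z,z']) > 0$ since player $2$ is hungry; adding the two gives $g(z') - g(z) > 0$. Finally, $g(0) = -V_2([0,1]) = -1 < 0$ and $g(1) = V_1([0,1]) = 1 > 0$. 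By the intermediate value theorem $g$ has a zero $z^\ast \in (0,1)$, and by strict monotonicity this zero is unique. Hence the equitable cut point in this order is unique; applying the same argument with the roles of the players exchanged shows the equitable allocation in the opposite order is likewise determined by a unique cut point.

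There is no substantial obstacle here; the only point requiring care is identifying where hungriness enters, namely it is precisely what upgrades the weak monotonicity of $z\mapsto V_1([0,z])$ and $z\mapsto -V_2([z,1])$ to \emph{strict} monotonicity, which forces the zero of $g$ to be a single point rather than a whole interval. (If some player's density vanished on a subinterval, equitability could be realized by a continuum of cut points, so the hypothesis cannot be removed.) This uniqueness is what lets the lower-bound argument for Theorem~\ref{thm:equitable_lb} track a single well-defined target cut point throughout a protocol's execution.
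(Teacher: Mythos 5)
Your analytic argument for a \emph{fixed} order is correct and is actually more self-contained than the paper's, which simply cites \cite{CDP13} for per-order existence and uniqueness: defining $g(z) = V_1(0,z) - V_2(z,1)$, continuity plus strict monotonicity (which is exactly where hungriness enters, as you say) plus $g(0)=-1<0<1=g(1)$ gives a unique zero for the order $(1,2)$. However, your last sentence leaves a genuine gap relative to what the lemma asserts. There are two possible orders of the players, and you only conclude that \emph{each} order has a unique cut point; a priori these could be two different points, in which case ``the cut point of the equitable allocation'' would not be unique. The paper's proof is devoted precisely to this cross-order step: if $V_1(0,x) = V_2(x,1) = c$, then $V_2(0,x) = V_1(x,1) = 1-c$, so the same point $x$ is also the equitable cut for the order $(2,1)$, and hence (by per-order uniqueness) the two orders share one cut point.

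The fix is immediate from your own setup and worth making explicit: $g(z) = V_1(0,z) - V_2(z,1) = V_1(0,z) + V_2(0,z) - 1$ is symmetric in the two players, and the equitability condition for the order $(2,1)$, namely $V_2(0,z) = V_1(z,1)$, is the \emph{same} equation $g(z)=0$. So the two orders have identical solution sets, and your unique zero $z^\ast$ is the unique equitable cut point overall. With that one line added, your proof is complete and arguably preferable to the paper's, since it does not outsource the per-order uniqueness to a citation.
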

\begin{proof}
	A unique equitable allocation exists for each order of the players \cite{CDP13}. Let $x$ be the cut point of  the equitable allocation when the player order is $(1,2)$. Then there exists $c$ such that $V_1(0,x) = V_2(x,1) = c$, and so $V_2(0,x) = V_1(x,1) = 1-c$. Thus the cut point of the equitable allocation is the same for each order of the players.
\end{proof}

Next we show that when the valuations of the players are as in the next table and no cuts may be used from the interval $(x,y)$, the distance from a connected equitable allocation is high.

		\begin{figure}[h!]
	\centering
	\includegraphics[scale=0.85]{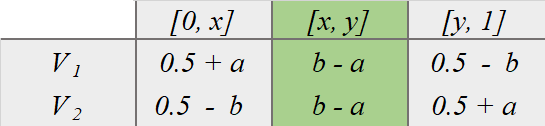}
	\caption{Construction for equitable lower bound. The distance from a connected equitable and proportional allocation is $b-a$, where $0 < a < b < 0.5$ and $0 < x < y < 1$.}
	\label{fig:equitable_main}
\end{figure}

\begin{lemma} \label{lem:ep_distance}
	Consider a two player problem where there exist points $0 < x < y < 1$ and values $0 < a < b < 0.5$ such that the valuations are consistent with Figure \ref{fig:equitable_main}, where $V_1(0,x) = 0.5+a = V_2(y,1)$, $V_2(x,1) = 0.5+b = V_1(0,y)$.
	 Then any connected allocation that can be formed with cut points outside the interval $(x,y)$ has distance at least $b-a$ from equitability.
\end{lemma}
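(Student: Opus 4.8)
The plan is to reduce the statement to a short case analysis over the single cut point defining the connected allocation. A connected allocation for two players is specified by one cut point $z\in[0,1]$ together with a choice of which player receives $[0,z]$ (the other receiving $[z,1]$), and the hypothesis that the cut points lie outside $(x,y)$ means $z\le x$ or $z\ge y$. First I would record the marginal values read off from Figure~\ref{fig:equitable_main}: together with normalization, $V_1(0,x)=0.5+a$, $V_1(0,y)=0.5+b$, $V_2(0,x)=0.5-b$, $V_2(0,y)=0.5-a$ give $V_1(x,1)=0.5-a$, $V_1(y,1)=0.5-b$, $V_2(x,1)=0.5+b$, $V_2(y,1)=0.5+a$. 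The only structural fact needed beyond these numbers is that $z\mapsto V_i(0,z)$ is non-decreasing, which holds because each density $v_i$ is non-negative.

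Next I would run the four cases. Suppose player $1$ gets $A_1=[0,z]$ and player $2$ gets $A_2=[z,1]$. If $z\le x$, monotonicity gives $V_1(A_1)\le V_1(0,x)=0.5+a$ while $V_2(A_2)\ge V_2(x,1)=0.5+b$, so $V_2(A_2)-V_1(A_1)\ge b-a$; if $z\ge y$, then $V_1(A_1)\ge V_1(0,y)=0.5+b$ while $V_2(A_2)\le V_2(y,1)=0.5+a$, so $V_1(A_1)-V_2(A_2)\ge b-a$. Symmetrically, suppose player $2$ gets $A_2=[0,z]$ and player $1$ gets $A_1=[z,1]$. If $z\le x$, then $V_2(A_2)\le V_2(0,x)=0.5-b$ while $V_1(A_1)\ge V_1(x,1)=0.5-a$, so $V_1(A_1)-V_2(A_2)\ge b-a$; if $z\ge y$, then $V_2(A_2)\ge V_2(0,y)=0.5-a$ while $V_1(A_1)\le V_1(y,1)=0.5-b$, so $V_2(A_2)-V_1(A_1)\ge b-a$. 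In every case $\lvert V_1(A_1)-V_2(A_2)\rvert\ge b-a$, which for two players is exactly the distance of $A$ from equitability, and the proof is complete.

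There is essentially no analytic obstacle here; the only thing to be careful about is the bookkeeping — covering both assignments of pieces to players and keeping the direction of each inequality straight so that the absolute value is bounded \emph{below} rather than above — and observing that the boundary cuts $z=x$ and $z=y$ are permitted and attain the bound with equality, so $b-a$ is the exact gap. If one adopts instead the normalized notion of distance from equitability, $\min_c\max_i\lvert V_i(A_i)-c\rvert$, the identical computation yields $(b-a)/2$; I would fix the convention explicitly so that the constant propagates correctly into the proof of Theorem~\ref{thm:equitable_lb}.
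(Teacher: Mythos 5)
Your proof is correct and follows essentially the same route as the paper's: a case analysis over the position of the cut ($z\le x$ versus $z\ge y$) crossed with the two possible assignments of pieces, using monotonicity of $z\mapsto V_i(0,z)$ to push the boundary cases $z=x$ and $z=y$ to arbitrary cuts outside $(x,y)$. Your closing remark about fixing the convention for ``distance from equitability'' is a fair point of hygiene, but the paper implicitly uses the same convention $\lvert V_1(A_1)-V_2(A_2)\rvert$ that you adopt, so nothing further is needed.
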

\begin{proof}
	Consider first allocations that can be obtained by cutting at $x$. If the player order is $(1,2)$, then $|V_1(0,x) - V_2(x,1)| = |(0.5+a) - (0.5+b)| = b-a$. Moreover, for any point $0 < z < x$, the order $(1,2)$ gives player $1$ less than $w+a$ and player $2$ more than $w+b$, which leads to distance at least $b-a$ from equitability. On the other hand, if the player order is $(2,1)$, then $V_2(0,x) = 0.5-b$, while $V_1(x,1) = 0.5-a$. This allocation has distance $|V_2(0,x) - V_1(x,1)| = b -a $ from equitability. The distance only increases for any cut $z < x$, since $|V_2(0,z) - V_1(z,1)| = V_1(z,1) - V_2(0,z) > (0.5-a) - (0.5-b) = b - a$.
	
	Similarly, if the cut is at $y$, then the player order $(1,2)$ gives distance $|V_1(0,y) - V_2(y,1)| = |(0.5+b) - (0.5+a)| = b-a$. For any cut point $z > y$, the distance only increases, since $a<b$ and player $1$ gets more than $0.5+b$, while player $2$ gets less than $0.5+a$. Finally, if the player order is $(2,1)$, the cut point $y$ leads to distance $|V_2(0,y) - V_1(y,1)| = |(0.5-a) - (0.5-b)| = b -a$. For any cut point $z > y$, $|V_2(0,z) - V_1(z,1)| = V_2(0,z) - V_1(z,1) > (0.5-a) - (0.5-b) = b-a$.
\end{proof}

Given such a configuration, queries can be handled in a way that preserves the symmetry and the distance to equitability gets reduced by a constant factor.
\begin{lemma} \label{lem:ep_reduce}
	Consider a two player problem where there exist points $0 < x < y < 1$ and values $0 < a < b < 0.5$ such that $V_1(0,x) = 0.5+a = V_2(y,1)$, $V_2(x,1) = 0.5+b = V_1(0,y)$. Then any Cut query (addressed to either player) can be answered so that the new configuration has two new points $z < t$ such that $z,t \in (x,y)$, the valuations satisfy $V_1(0,z) = 0.5+a'=V_2(t,1)$, $V_2(z,1)= 0.5+b'=V_1(0,t)$, and $b' - a' \geq (b-a)/100$.
\end{lemma}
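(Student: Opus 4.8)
The plan is to exploit that the configuration of Figure~\ref{fig:equitable_main} is self-reducible: when it is encountered during an execution, the interval $(x,y)$ still contains no discovered cut point and only the totals $V_1(x,y)=V_2(x,y)=b-a$ have been committed there, so we are free to redistribute both densities on $(x,y)$. First I would record what the target equalities force. If $z<t$ lie in $(x,y)$ with $V_1(0,z)=0.5+a'$, $V_1(0,t)=0.5+b'$, $V_2(z,1)=0.5+b'$ and $V_2(t,1)=0.5+a'$, then, using $V_1(0,x)=0.5+a$, $V_1(0,y)=0.5+b$, $V_2(0,x)=0.5-b$ and $V_2(0,y)=0.5-a$, a short computation shows the masses of $[x,z],[z,t],[t,y]$ are $(a'-a,\ b'-a',\ b-b')$ for player~$1$ and $(b-b',\ b'-a',\ a'-a)$ for player~$2$, each triple summing to $b-a$. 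Thus $[z,t]$ is again a symmetric gap of common mass $b'-a'$, and the only requirement on the parameters is $a<a'<b'<b$; once $a',b'$ and positive densities on $(x,y)$ realizing those six sub-masses are chosen, the new configuration has the required form and (modifying only $(x,y)$ while preserving each player's total cake value) is a valid instance consistent with the query history.

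Next I would answer the query (Eval queries are answered from stored values, and a fresh symmetric gap with $a'-a=b-b'=(b-a)/3$, hence $b'-a'=(b-a)/3$, is planted inside $(x,y)$). Take $\mathrm{Cut}_1(\alpha)$; the case $\mathrm{Cut}_2(\alpha)$ is symmetric, interchanging the two players and the two ends of $(x,y)$. If $\alpha\le 0.5+a$ the cut point is at $x$ or to its left, and if $\alpha\ge 0.5+b$ it is at $y$ or to its right, so in either case it lies outside $(x,y)$; answer consistently with any positive-density completion and plant a fresh symmetric gap inside $(x,y)$ as above. If $\alpha\in(0.5+a,0.5+b)$, the cut point $k$ is forced into $(x,y)$ with $V_1(x,k)=\beta:=\alpha-(0.5+a)\in(0,b-a)$; the sub-case $\beta>(b-a)/2$ is symmetric (retreat to the left of $k$), so assume $\beta\le(b-a)/2$. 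Declare $v_1$ uniform on $(x,y)$ (pinning the location of $k$), set $a'=a+\beta+(b-a)/10$ and $b'=b-(b-a)/10$, and let $z<t$ be the points with $V_1(x,z)=a'-a$ and $V_1(x,t)=b'-a$; since $\beta<a'-a<b'-a<b-a$ we get $x<k<z<t<y$, so $[z,t]$ avoids $k$ and contains no cut point. Finally choose $v_2$ on $(x,y)$ realizing $V_2(x,k)=(b-a)/100$, $V_2(k,z)=(b-b')-(b-a)/100$, $V_2(z,t)=b'-a'$ and $V_2(t,y)=a'-a$; all four are positive, so a positive piecewise-constant density works, and $V_2(x,z)=b-b'\ge V_2(x,k)$ is consistent with $k<z$. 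A direct check gives the four equalities, and $b'-a'=(b-a)-(a'-a)-(b-b')=4(b-a)/5-\beta\ge 3(b-a)/10\ge (b-a)/100$.

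The delicate point is exactly this last case: the new symmetric gap $[z,t]$ must simultaneously avoid the freshly created cut $k$ and keep $b'-a'$ at least a fixed fraction of $b-a$, and these pull against each other, since dodging $k$ confines $[z,t]$ to one component of $(x,y)\setminus\{k\}$ and so caps its mass for \emph{both} players at once. The resolution is to spend both degrees of freedom on $(x,y)$: $v_1$'s freedom puts $k$ on the $v_1$-lighter side ($\beta\le(b-a)/2$), while $v_2$'s freedom makes $v_2$'s mass of the discarded sliver $[x,k]$ negligible, so that retreating just past $k$ costs each player only about $\beta+(b-a)/10$ of the gap, leaving $b'-a'\ge 3(b-a)/10$. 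The rest is routine bookkeeping; if one additionally wants all densities bounded away from $0$ and $\infty$ (as in the rigid-measure-system constructions), the same plan works once one checks that the chosen sub-interval masses are commensurate with their lengths, which holds thanks to the constant slacks built into $a'$ and $b'$.
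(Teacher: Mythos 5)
Your proof is correct and follows essentially the same strategy as the paper's: split on which half of the value-gap $(0.5+a,0.5+b)$ the query $\alpha$ falls into, plant the new symmetric gap $[z,t]$ on the opposite side of the resulting cut point, and check $a<a'<b'<b$ together with the four defining equalities. The only difference is cosmetic — the paper fixes $a',b'$ near the midpoint $(a+b)/2$ so that $b'-a'=(b-a)/100$ exactly, whereas your $\beta$-dependent choice yields the stronger bound $b'-a'\geq 3(b-a)/10$; both satisfy the lemma.
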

\begin{proof}
	Suppose that player $1$ receives a cut query $Cut_1(\alpha)$. If $\alpha < 0.5+a$ or $\alpha > 0.5+b$, then answer for both players in a way that is consistent with the history (e.g. uniform on the interval where the unique point determined by the answer of player $1$ to the query falls). Otherwise, define new cuts $z,t$ such that $x < z < t < y$ and consider two subcases: 
	\begin{itemize}
		\item $\alpha \in \left(0.5 +a, 0.5 + (a+b)/2\right.\left.\right]$.
		Let $a' = (a+b)/2 + (b-a)/100$ and $b' = (a+b)/2 + (b-a)/50$. Then $a < (a+b)/2 < a' < b' < b$. 
		
		\item $\alpha \in \left(0.5 +  (a+b)/2, \alpha < 0.5 + b\right)$. Let $a' = (a+b)/2 - (b-a)/50$, $b' = (a+b)/2 - (b-a)/100$. Then $a < a' < b' < (a+b)/2 < b$. 
	\end{itemize} 
	In both cases, $b' - a' = (b-a)/100$. Set $V_1(0,z) = 0.5 + a'$, $V_2(0,z) = 0.5 - b'$, $V_1(0,t) = 0.5 + b'$, and $V_2(0,t) = 0.5-a'$. In the first case, the answer to the query falls to the left of $z$ and the value can be set in any way consistent with the total value of player $2$ for $[x,z]$, while in the second case the cut point falls to the right of $t$ and can similarly be handled arbitrarily on $[t,y]$. Afterwards, fit the value of player $1$ for the generated cut point in a way that is consistent with its valuation for $[0,z]$ and $[0,t]$.
	
	If player $2$ receives instead a cut query $Cut_2(\alpha)$, then when $\alpha < 0.5 - b$ or $\alpha > 0.5-a$, the query can be answered arbitrarily in a way that is consistent with the history. Otherwise, define cuts $z,t$ such that $x < z < t < y$ and consider the subcases:
	\begin{itemize}
		\item $\alpha \in \left(0.5 - b, 0.5 - (a+b)/2\right.\left.\right]$.
		Let $a' = (a+b)/2 - (b-a)/50$ and $b' = (a+b)/2 - (b-a)/100$. Then $a < a' < b' < (a+b)/2 < b$ and $0.5 - (a+b)/2 < 0.5 -b'$.
		
		\item $\alpha \in \left(0.5 - (a+b)/2, \alpha < 0.5 - a\right)$. Consider values $a' = (a+b)/2 + (b-a)/100$, $b' = (a+b)/2 + (b-a)/50$. Then $a < (a+b)/2 < a' < b' < b$ and $0.5 - a' < 0.5 - (a+b)/2$.
	\end{itemize}
	We have $b'-a' = (b-a)/100$ and the valuations of players $1$ and $2$ for the pieces $[0,z]$ and $[0,t]$ can be defined as in case 1, when player $1$ received the Cut query.
\end{proof}

The proof of Theorem \ref{thm:equitable_lb} follows from the previous lemmas. 

\begin{proof} (of Theorem \ref{thm:equitable_lb})
	Start with cut points $x = 0.4$, $y = 0.6$, and values $a = 0.05$, $b = 0.06$, such that $V_1(0,x) = 0.55$, $V_2(0,x) = 0.44$, $V_1(0,y) = 0.56$, $V_2(0,y) = 0.45$. By Lemma \ref{lem:ep_distance}, the distance to an equitable and proportional allocation by using cuts outside $(x,y)$ is at least $b-a = 0.01$. By applying Lemma \ref{lem:ep_reduce} with every Cut query received, we get that the distance is reduced by a factor of 100 in every round. For $\epsilon$-equitability to hold in round $k$, the condition $0.01/100^k \leq \epsilon$ must be met, and so the number of rounds is $\Omega\left(\log \frac{1}{\epsilon} \right)$.
\end{proof}

\section{Moving Knife Protocols} \label{sec:movingknife}

We will consider a family of protocols that seems to, on one hand, capture all types of 
protocols that have so far been called ``moving knife'' procedures and, on the other hand, be
simple enough for a transparent simulation. An important ingredient of the definition is that knife positions must be continuous. To ensure that ``cut queries" fall within the definition,
we will only require continuity for hungry valuation functions. The formal definition together with the omitted proofs of this section can be found in Appendix \ref{app:movingknife}.

\begin{theorem} \label{thm:mainsimulation_robust}
	Consider a cake cutting problem where the value densities are bounded from above and below by strictly positive constants.
	Let $\mathcal{M}$ be an RW moving knife protocol with at most $r$ steps, such that $\mathcal{M}$ outputs $\mathcal{F}$-fair allocations demarcated by at most a constant number $C$ of cuts.
	
	Then for each $\epsilon > 0$, there is an RW protocol $\mathcal{M}_{\epsilon}$ that uses $O\left(\log \frac{1}{\epsilon}\right)$ queries and computes $\epsilon$-$\mathcal{F}$-fair partitions demarcated with at most $C$ cuts.
\end{theorem}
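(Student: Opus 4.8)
The plan is to simulate the moving knife protocol $\mathcal{M}$ step by step in the RW model, replacing each continuous "knife sweep" by a binary search that localizes the relevant stopping position to within a tiny interval. Concretely, I would first fix notation for the formal definition of an RW moving knife protocol: it proceeds in at most $r$ steps, where in each step a fixed number of knives move continuously across the cake (or across a previously fixed subinterval), each knife position being a continuous function of a single real parameter $\theta$ (continuity guaranteed for hungry valuations), and the step terminates when some predicate on the current knife positions and the players' \emph{Eval} values first becomes true; the final allocation is demarcated by at most $C$ of the resulting cut points. Since densities are bounded above and below by positive constants $D_1 \le v_i(x) \le D_2$, the map from the sweep parameter $\theta$ to any player's value $V_i$ of the swept region is bi-Lipschitz: a change of $\delta$ in a cut position changes any $V_i$-value by between $D_1\delta$ and $D_2\delta$. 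This is the quantitative backbone of the whole argument.

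Next I would handle a single step. The state entering a step is a finite set of already-discovered cut points, together with the \emph{Eval} values of all players at those points (all known exactly up to the simulation's accumulated error). A sweep is parametrized by $\theta \in [0,1]$ and monotone in each knife's value for the appropriate player; the stopping predicate flips from false to true at some unknown $\theta^\*$. Using $O(\log \tfrac{1}{\eta})$ \emph{Cut} queries I can binary-search on $\theta$: at each candidate $\theta$ I issue the (constantly many) \emph{Cut} queries that define the knife positions for that $\theta$, issue the (constantly many) \emph{Eval} queries needed to test the predicate, and recurse into the half-interval containing $\theta^\*$. After $O(\log\tfrac1\eta)$ rounds I have an interval $[\theta_-,\theta_+]$ of length $\le \eta$ bracketing $\theta^\*$; by the bi-Lipschitz bound, every player's value for every piece demarcated by the corresponding knife positions differs from its value at the true stopping configuration by at most $D_2\eta$. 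I then carry forward the knife positions at (say) $\theta_+$ as the "simulated" outcome of the step. Because there are at most $r$ steps and $r$ is a constant, composing the $r$ searches costs $O(r\log\tfrac1\eta) = O(\log\tfrac1\eta)$ queries, and the errors compose additively: if each step is simulated to accuracy $D_2\eta$ and predicates are tested with a matching slack so that the simulated execution follows the same branch as the true one, the final piece values are within $O(r D_2 \eta)$ of the true $\mathcal{F}$-fair allocation's values. Choosing $\eta = \Theta(\epsilon / (r D_2 C))$ makes this at most $\epsilon/2$; the abstract-fairness condition (Definition 1) then upgrades the true allocation's $\mathcal{F}$-fairness to $\epsilon$-$\mathcal{F}$-fairness of the simulated allocation with respect to the actual valuations, and the output is demarcated by the same $\le C$ cuts.

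The main obstacle I expect is not the binary search itself but \emph{branching robustness}: the true protocol $\mathcal{M}$ may make discrete decisions (which knife to stop, which player gets which piece, which sub-sweep to run next) based on comparisons of \emph{Eval} values that are only known approximately in the simulation, so a naive simulation could diverge onto a different branch than $\mathcal{M}$ would take, after which the fairness guarantee is lost. Handling this requires either (i) assuming/observing that the stopping predicates are "well-separated" — a genuine stopping event crosses the predicate boundary transversally at a rate bounded below, which follows again from the density lower bound $D_1$ — so that with accuracy $\eta$ small enough the simulated branch provably agrees with the true one; or (ii) arguing that \emph{any} branch consistent with the approximate observations leads to an allocation whose piece-values are within $\epsilon/2$ of \emph{some} genuine $\mathcal{F}$-fair allocation. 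I would pursue (i), pushing the required smallness of $\eta$ into the (constant) parameters $r, C, D_1, D_2$ so that it still only costs $O(\log\tfrac1\epsilon)$ queries, and I would note that this is exactly why the theorem restricts to a fixed number of players and devices and to densities bounded on both sides. A secondary technical point is that \emph{Eval} queries in the RW model may only be asked at previously made cut points; since every point whose value I need has just been created by a \emph{Cut} query in the current search round, this is automatically satisfied, but it is worth stating explicitly in the write-up.
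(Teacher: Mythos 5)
Your core simulation technique is the same as the paper's: Theorem~\ref{lem:lipschitzknives} simulates a single moving knife step by binary search on the position of the first knife (using player $1$'s valuation as a proxy for time, since the center cannot cut at an arbitrary physical point in RW), and the Lipschitz dependence functions together with the two-sided density bounds give exactly the bi-Lipschitz control you describe, yielding an $\epsilon$-outcome in $O(\log\frac{1}{\epsilon})$ queries per step and $O(r\log\frac{1}{\epsilon})$ overall.

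The genuine divergence is in how branching is handled, and your proposed resolution (i) does not work. The paper does not \emph{prove} that the approximately simulated execution follows the same branch as the exact one; it makes this an \emph{assumption}, by building a ``robustness'' clause into the very definition of an RW moving knife protocol: replacing each step's outcome by an $\epsilon$-outcome must still yield an $\epsilon$-$\mathcal{F}$-fair partition with at most $C$ cuts. Given that definition, the proof of the theorem is a one-liner from Theorem~\ref{lem:lipschitzknives}. Your plan to derive branch-agreement from a transversality/well-separation property ``which follows again from the density lower bound $D_1$'' is false in general: the paper's own footnote exhibits a brittle variant of Austin's procedure (branch on whether $V_1(x,y)=V_2(x,y)=1/2$ holds \emph{exactly}) whose densities can be bounded on both sides but for which no accuracy $\eta>0$ forces the simulation onto the correct branch, so the simulated run gives the whole cake to player $1$. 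Well-separation of the stopping predicates is an additional hypothesis that does not follow from $D_1,D_2,r,C$, and without either it or the paper's robustness axiom your error analysis only bounds the piece values \emph{conditional on} taking the right branch. Your alternative (ii) is essentially the paper's robustness condition restated; had you pursued that and stated it as a hypothesis on $\mathcal{M}$, your argument would match the paper's.
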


\begin{theorem}\label{thm:mainsimulation_all}
	The Austin, Austin's extension, Barbanel-Brams, Stromquist, Webb, Brams-Taylor-Zwicker, and Saberi-Wang moving knife procedures
	can be simulated with $O\left(\log{\frac{1}{ \epsilon}}\right)$ RW queries when the value densities are 
	bounded from above and below by positive constants.
\end{theorem}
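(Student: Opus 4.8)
The plan is to derive Theorem~\ref{thm:mainsimulation_all} as a direct corollary of the robust simulation result, Theorem~\ref{thm:mainsimulation_robust}. For each of the seven named procedures I would first exhibit it as an RW moving knife protocol in the sense of the formal definition of Appendix~\ref{app:movingknife}: identify the knives, check that their positions are continuous functions of the sweep parameters and previous query answers whenever the valuations are hungry, verify that the stopping conditions are of the permitted comparison type (equality/inequality between players' valuations of the enclosed pieces), and bound both the number of steps $r$ and the number of output cuts $C$ by absolute constants, using that the number of players is fixed in each case. Once this is done, Theorem~\ref{thm:mainsimulation_robust} immediately yields an RW protocol using $O(\log \frac{1}{\epsilon})$ queries that outputs an $\epsilon$-$\mathcal{F}$-fair partition with at most $C$ cuts, where $\mathcal{F}$ is the relevant notion: perfection for Austin and Austin's extension, envy-freeness for Barbanel-Brams, Stromquist, Webb, Brams-Taylor-Zwicker, and Saberi-Wang. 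Each of these is an instance of abstract fairness, so the robustness hypothesis of Theorem~\ref{thm:mainsimulation_robust} is satisfied.

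Concretely I would treat the procedures in increasing order of complexity. Austin's procedure for two players consists of one sweep to position a pair of knives so that the enclosed piece is worth exactly $1/2$ to player~$1$, followed by a sliding phase maintaining that invariant until player~$2$ also values the piece at $1/2$; this is a constant number of monotone sweeps with $C=2$, so $r$ and $C$ are absolute constants. Austin's extension to $k$ pieces for two players iterates this a constant number of times. Barbanel-Brams and Stromquist each use a constant number of simultaneously moving knives over three players with a single comparison-based stopping condition, and I would recall the precise knife dynamics and invoke the monotone (hence continuous) dependence of ``value to the left of a point'' on the point for hungry densities. Webb's procedure and Brams-Taylor-Zwicker are finite compositions of such sweeps for a fixed number of players, possibly with trimming substeps, and Saberi-Wang, although more intricate, likewise decomposes into a constant number of stages, each a monotone sweep with a comparison-based stopping rule. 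In every case the stage count and the number of output cuts depend only on $n$, not on $\epsilon$ or the measures.

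The main obstacle is the bookkeeping of this first step: one must verify, procedure by procedure, that the formal definition in Appendix~\ref{app:movingknife} is flexible enough to encode the particular control flow — nested sweeps, conditional branches, and trimming steps as in Brams-Taylor-Zwicker and Saberi-Wang — while keeping the number of steps constant, and that the continuity requirement, which is imposed only for hungry valuations, holds at every sweep, including the subroutines that reset knife positions. Since the density bounds $0 < D \le v_i(x) \le D' < \infty$ make each ``value to the left'' map a bi-Lipschitz homeomorphism of $[0,1]$, every knife position defined implicitly through a value equation is continuous, so this requirement is met automatically; the remaining content is purely the verification of the step and cut counts, after which Theorem~\ref{thm:mainsimulation_robust} applies verbatim to each procedure and gives the claimed $O(\log \frac{1}{\epsilon})$ bound.
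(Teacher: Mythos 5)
Your overall strategy coincides with the paper's: cast each procedure as an RW moving knife protocol in the sense of Appendix~\ref{app:movingknife} and then invoke the general simulation theorem. However, two points you treat as automatic are exactly where the paper has to do real work, and as stated your argument has a gap at both.

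First, you claim that robustness holds because perfection and envy-freeness are instances of abstract fairness. That is not sufficient. Robustness is a property of the \emph{protocol}, not of the fairness notion: it requires that replacing every exact step outcome by an $\epsilon$-outcome still leads the protocol down a branch that terminates with an $\epsilon$-fair partition. The paper's footnote to the definition of an RW moving knife protocol gives a brittle variant of Austin's procedure that computes an exact perfect allocation but, under $\epsilon$-approximate outcomes, branches to ``player $1$ receives the whole cake.'' So for each of the seven procedures one must check that no branch is conditioned on an exact equality that an approximate outcome can miss; this is a per-procedure verification, not a consequence of the abstract fairness definition.

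Second, your treatment of Stromquist as ``a monotone sweep with a single comparison-based stopping condition'' elides the main difficulty. The stopping rule is a disjunction --- the \emph{first} of three players to shout stop --- and none of the three individual triggers $V_i(0,x)-\max\{V_i(x,y),V_i(y,1)\}$ is guaranteed to switch signs over the sweep, which is what the simulation theorem needs. The paper resolves this by constructing a fourth, composite trigger (defined by a four-way case distinction on the signs of the three player triggers) that is continuous, provably negative at $t=0$ and positive at $t=1$, and from whose approximate zero an $\epsilon$-envy-free assignment of the three pieces can be extracted by a further case analysis. Without some such device your reduction does not go through for Stromquist. The remaining procedures (Austin, its extension, Barbanel-Brams, Webb, Brams-Taylor-Zwicker, Saberi-Wang) are handled essentially as you describe, by exhibiting a constant number of devices whose triggers are simple differences of valuations and which do switch signs.
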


\noindent \textbf{An Equitable Protocol}:
Next we show a simple moving knife protocol in the Robertson-Webb model for computing equitable allocations for any number of hungry players.
A more complex moving knife procedure for computing exact equitable allocations that is not in the RW model but works even when the valuations are not hungry was discovered independently by \cite{SS17}.

\bigskip


	\emph{\textbf{Equitable Protocol}} : 
	{\em Player $1$ slides a knife continuously across the cake, from $0$ to $1$. 
		For each position $x_1$ of the knife, player $1$ is asked for its value of the piece $[0,x_1]$; then 
		each player $i = 2 \ldots n$ iteratively positions its own knife at a point $x_i \in [x_{i-1}, 1]$ with $V_i(x_{i-1}, x_{i})$ $=$ $V_1(0,x_1)$ if possible, and at $x_i = 1$ otherwise. 
		
		Player $n$ shouts ``Stop!" when its own knife reaches the right endpoint of the cake (i.e., $x_n = 1$).
		The cake is allocated in the order $1 \ldots n$, with cuts at $x_1 \ldots x_{n-1}$.
	}

\bigskip

\begin{theorem} \label{thm:equit_step}
	There is is an RW moving knife protocol that computes a connected equitable allocation for any number $n$ of hungry players.
\end{theorem}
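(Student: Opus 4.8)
The plan is to show that the Equitable Protocol is (i) well-defined as an RW moving knife protocol, (ii) always terminates, and (iii) outputs an allocation that is equitable. For (i), I would verify that each step fits the moving-knife template: player $1$'s knife position $x_1$ moves continuously from $0$ to $1$, and for each fixed $x_1$ the auxiliary positions $x_2(x_1), \ldots, x_n(x_1)$ are determined by the equations $V_i(x_{i-1}, x_i) = V_1(0,x_1)$ (or $x_i = 1$ if no such point exists). Since the players are hungry, each $V_i$ is strictly increasing and continuous in its right endpoint, so the defining equation for $x_i$ has a unique solution whenever $V_i(x_{i-1},1) \geq V_1(0,x_1)$, and that solution depends continuously on $x_{i-1}$ and on the target value $V_1(0,x_1)$. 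By induction on $i$, each $x_i$ is a continuous function of $x_1$; this is exactly the continuity requirement in the moving-knife definition (which only needs to hold for hungry valuations). The quantity player $n$ monitors, namely whether $x_n = 1$, is therefore a well-defined stopping condition. The cuts output, $x_1, \ldots, x_{n-1}$, number $n-1$, which is the minimum for connected allocations.

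For termination and correctness, I would analyze the function $f(x_1) := x_n(x_1)$, the rightmost knife position, as $x_1$ ranges over $[0,1]$. At $x_1 = 0$ we have $V_1(0,x_1) = 0$, and since the players are hungry each player $i$ can match value $0$ only at $x_i = x_{i-1}$, so all knives coincide at $0$ and $f(0) = 0$. At $x_1 = 1$ we have $V_1(0,x_1) = 1$; then already $V_2(x_1, 1) = 0 < 1$, so player $2$ fails to match and sets $x_2 = 1$, hence $f(1) = 1$. The key structural fact is that as long as the protocol has not yet stopped --- i.e. at the last $x_1$ before $x_n$ first reaches $1$ --- the target value $c := V_1(0,x_1)$ is matched exactly by every player: $V_1(0,x_1) = V_2(x_1,x_2) = \cdots = V_{n-1}(x_{n-2},x_{n-1}) = c$, and at the stopping moment $x_n = 1$ with $V_n(x_{n-1}, 1) = c$ as well (by a continuity argument at the threshold: just before stopping $V_n(x_{n-1}, x_n) = c$ with $x_n < 1$, and we take the limit where $x_n \to 1$). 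Thus at the stopping configuration every player receives exactly $c$, which is the definition of an equitable allocation. Termination follows because $f$ is continuous on $[0,1]$ with $f(0) = 0$ and $f(1) = 1$; more carefully, I want the first time $f$ hits $1$, and since $f$ is continuous and $f(1)=1$, the set $\{x_1 : f(x_1) = 1\}$ is nonempty and closed, so its infimum is attained and the protocol stops there (in finitely many "steps" in the moving-knife sense, since $n$ is fixed).

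The main obstacle I anticipate is handling the moment the protocol stops cleanly: I must argue that at the infimum $x_1^\ast$ of $\{x_1 : x_n(x_1) = 1\}$, all the matching equations still hold with a common value $c = V_1(0, x_1^\ast)$, including $V_n(x_{n-1}, 1) = c$. For $x_1 < x_1^\ast$ this is immediate since every $x_i < 1$ and all equations are exact; the delicate point is passing to the limit $x_1 \uparrow x_1^\ast$ and using continuity of each $x_i(\cdot)$ and of the measures to conclude $x_i(x_1^\ast) = \lim x_i(x_1)$ and $V_n(x_{n-1}(x_1^\ast), 1) = c$. Hungriness is essential here: it guarantees strict monotonicity (so the $x_i$ are genuine functions, not correspondences) and that $x_1 = 0$ forces all knives to $0$, giving the correct "lower" boundary behavior. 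I would also remark that one should check the protocol does not stop prematurely with some $x_i = 1$ for $i < n$ at a value $x_1 < x_1^\ast$; by definition of $x_1^\ast$ and monotonicity of each $x_i$ in $x_1$, for $x_1 < x_1^\ast$ we have $x_n(x_1) < 1$, hence $x_i(x_1) < 1$ for all $i < n$ as well, so every earlier knife genuinely matches value $c$ and the allocation is well-formed.
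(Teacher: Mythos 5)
Your analysis of the function $f(x_1)=x_n(x_1)$ --- continuity of each knife position in $x_1$ under hungriness, the boundary values $f(0)=0$ and $f(1)=1$, and the limit argument showing that at $x_1^\ast=\inf\{x_1: x_n(x_1)=1\}$ all the matching equations hold with a common value $c$ --- is sound, and it is the same intermediate-value argument that underlies the paper's proof. The gap is in your step (i): you have not actually exhibited the protocol as a moving knife step in the sense of Definition \ref{def:knife}. In that formalism the stopping rule must be encoded as a \emph{trigger}, i.e.\ a continuous real-valued device whose sign differs at the start and end times, and the \emph{outcome} of the step is \emph{any} time $t$ at which the trigger vanishes (the definition explicitly allows an arbitrary zero to be returned, and the robustness requirement in the definition of an RW moving knife protocol then demands that every such zero, and every approximate zero, yield an (approximately) fair allocation). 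The natural trigger suggested by your argument, $x_n(t)-1$, fails this: its zero set is the whole interval $[x_1^\ast,1]$, and at any $t>x_1^\ast$ (in particular at $t=1$, where all knives sit at $1$ and player $1$ receives the entire cake) the configuration is not equitable. Your appeal to ``the first time $f$ hits $1$'' is mathematically fine but is not an operation the moving-knife formalism provides, and it is precisely what the simulation of such steps by binary search cannot guarantee.

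The paper closes this gap with an augmented-cake construction: extend the cake to $[0,n]$ with uniform densities on $[1,n]$, set $y_1(t)=x_1(t)$, and define fictitious knives $y_2(t),\dots,y_n(t)$ on the augmented cake by the exact matching conditions $V_i(y_{i-1},y_i)=V_1(0,y_1)$ (which now always have solutions), taking the trigger to be $f(t)=y_n(t)-1$. Then $f(0)=-1$ and $f(1)=n-1$, and $f(t)=0$ holds \emph{only} when all fictitious knives lie in $[0,1]$ and coincide with the real ones, i.e.\ only at exact equitable configurations of the actual cake; hence any zero returned as the outcome is correct. To complete your proof you would need this device, or an equivalent trigger whose entire zero set consists of valid stopping configurations.
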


This immediately implies a moving knife protocol for computing an allocation that is not only equitable, but also proportional; this can be achieved by running Equitable Protocol for every permutation of the players and choosing the one that is proportional.

\section{The Stronger and Weaker Models}
We also discuss two other query models; the proofs for this section are in Appendix \ref{app:weakstrong}.
\medskip

The first one, which we call $RW^+$, is stronger in that the inputs to evaluate queries need not be previous cut points and at the end the protocol can use arbitrary points (i.e. not just cuts discovered through queries) to demarcate the final allocation.

\begin{definition}[$RW^+$ query model]
	An $RW^{+}$ protocol for cake cutting communicates with the players via two types of queries:
	\begin{itemize}
		\item{}$\emph{\textbf{Cut}}_i(\alpha)$: Player $i$ cuts the cake at a point $y$ where $V_{i}([0,y]) = \alpha$, for any $\alpha \in [0,1]$.
		\item{}$\emph{\textbf{Eval}}_i(y)$: Player $i$ returns $V_{i}([0,y])$, for any $y \in [0,1]$.
	\end{itemize}
	At the end of execution an $RW^{+}$ protocol outputs an allocation that can be demarcated by any points of its choice, regardless of whether they have been discovered through queries or not.
\end{definition}

The $RW^+$ model differs from the $RW$ model in subtle ways. For instance, in $RW$ there exists a characterization of truthful protocols (i.e. all truthful protocols are dictatorships for $n=2$ players, with a similar statement for $n \geq 3$ players \cite{BM15}). A similar characterization is not known in the $RW^+$ model. However the $RW^+$ model allows a general simulation of moving knife protocols without requiring that valuations are bounded from below. Our constructions for the lower bounds work against this stronger query model. 

Clearly the discretization technique in Section \ref{sec:gen} still applies, and so do the upper bounds from there for $RW^+$ protocols. We get that in fact the lower bounds hold too since our constructions did not use in any crucial way the fact that the evaluate inputs must come from previous cut queries.

\begin{corollary}
	Computing a connected $\epsilon$-envy-free allocation among $n=3$ players, an $\epsilon$-perfect allocation with two cuts between $n=2$ players, and a connected $\epsilon$-equitable allocation between $n=2$ players in the $RW^+$ models requires $\Theta(\log{\frac{1}{\epsilon}})$ queries. Computing a connected $\epsilon$-envy-free allocation for any fixed number $n$ of players requires $\Omega(\log\frac{1}{\epsilon})$ queries.
\end{corollary}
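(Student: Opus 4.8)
The plan is to obtain both bounds from results already available in the standard $RW$ model. For the upper bound, note that an $RW^+$ protocol is allowed strictly more than an $RW$ protocol, so Theorem~\ref{thm:ef3sim}, Theorem~\ref{thm:perfect_ub}, and the protocol of~\cite{CP12} are themselves $O(\log\frac1\epsilon)$-query $RW^+$ protocols for, respectively, connected $\epsilon$-envy-freeness with $n=3$, $\epsilon$-perfect with two cuts and $n=2$, and connected $\epsilon$-equitable with $n=2$; the discretization of Section~\ref{sec:gen} also carries over unchanged. Hence only the matching lower bounds require work.

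For the lower bounds I would replay, essentially verbatim, the adversary strategies from the proofs of Theorems~\ref{thm:3lb}, \ref{thm:perfect_lb}, \ref{thm:equitable_lb}, and~\ref{thm:lb_envyfree_any}, verifying that the two extra capabilities of an $RW^+$ protocol do not help. The first is that $\mathbf{Eval}_i(y)$ queries may be posed at points $y$ that were never produced by a $\mathbf{Cut}$ query. This is harmless: throughout all of these constructions the adversary can maintain the valuations as piecewise-constant densities — constant on each stable piece of the partial rigid measure system (resp. the perfect/equitable configuration) and constant on the active intervals and their subdivisions — so $V_i([0,y])$ is determined for every $y$, and reporting it neither alters the macroscopic valuation matrices nor the active intervals $I,J$ (resp. $(x,y)$). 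Consequently Lemmas~\ref{lem:hide}, \ref{lem:perfect_induction}, and~\ref{lem:ep_reduce} apply to $\mathbf{Cut}$ queries exactly as in the $RW$ setting.

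The substantive point is the second capability: an $RW^+$ protocol may demarcate its final allocation with arbitrary points, so the output cuts need not lie outside the active intervals, which is precisely what the $RW$ proofs used when invoking the far-from-fair statements (Lemma~\ref{lem:far_ef3}, the distance bound of Lemma~\ref{lem:perfect_induction}, and Lemma~\ref{lem:ep_distance}). To handle this, once the protocol halts with its output the adversary performs a constant number of further applications of the relevant shrinking construction — that of Lemma~\ref{lem:hide}, \ref{lem:perfect_induction}, or~\ref{lem:ep_reduce} — treating each output cut as though it were the answer to one more $\mathbf{Cut}$ query; since that construction always places the new active intervals on the side away from the given point and contracts them by a factor $100$ while keeping them nonempty and keeping all structural invariants (including the density bounds, where relevant), after $O(1)$ such steps every output cut lies outside the active intervals. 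Invoking the far-from-fair lemma on the resulting configuration then shows the output allocation has envy (resp. distance from perfection, resp. distance from equitability) at least a constant times the current active-interval length; after $t$ genuine rounds plus $O(1)$ extra shrinks this length is still $\Omega(100^{-t})$, so achieving $\epsilon$-fairness forces $t=\Omega(\log\frac1\epsilon)$. The same argument with the $n$-player generalized rigid measure system underlying Theorem~\ref{thm:lb_envyfree_any} yields the $\Omega(\log\frac1\epsilon)$ bound for connected $\epsilon$-envy-freeness with any fixed $n$.

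The step I expect to be the main obstacle — and the only genuinely new content relative to the $RW$ proofs — is exactly this last maneuver: checking that a bounded number of extra applications of the shrinking lemma legitimately ``hides'' the target solution away from an adversarially revealed finite set of output points, without ever contradicting the answers already given to the protocol's queries (in particular without needing densities outside the admissible range). Everything else is a routine transcription of the existing arguments.
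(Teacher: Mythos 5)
The upper-bound half and the overall architecture of your plan are fine, but your treatment of arbitrary-point $Eval$ queries is where the argument breaks, and it is not a technicality. You propose to answer $Eval_i(y)$ for $y$ inside an active interval from a \emph{committed} piecewise-constant (uniform) completion, asserting that this ``neither alters the macroscopic valuation matrices nor the active intervals.'' But once the adversary commits to uniform densities on the active intervals, the instance is fully determined by its macroscopic parameters, so a protocol designed against this adversary can compute the exact envy-free (resp.\ perfect, equitable) cuts offline, pin down the value matrix of that allocation with $O(1)$ Eval queries placed at those very points (answered, under your scheme, exactly as the committed completion dictates), and output it — which the $RW^+$ model permits. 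Every instance consistent with the transcript then makes this output exactly fair, so no adversarial completion, in particular not your post-hoc shrinking, can exhibit a bad consistent instance, and the lower bound collapses. This also exposes an internal inconsistency in your plan: the post-hoc applications of Lemmas \ref{lem:hide}, \ref{lem:perfect_induction}, and \ref{lem:ep_reduce} must redistribute mass inside the active intervals, which contradicts Eval answers you have already issued from the uniform completion whenever such a query landed inside them.

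The repair — and what the paper actually does — is to treat an $Eval_i(y)$ query with $y$ inside a hidden interval exactly like a Cut query landing there: invoke the hiding/shrinking lemma so that the \emph{answer itself} is chosen adversarially and the surviving ambiguity is pushed into a new subinterval away from $y$; only queries outside the hidden intervals are answered from a canonical completion. With that in place the interior of the final hidden intervals remains unconstrained up to its total mass, and the protocol's freedom to demarcate its output with arbitrary points is defeated by redistributing that mass \emph{after} the output is announced: the paper pushes the hidden interval's value to one side of each output cut, making it equivalent to a cut at an endpoint, and then invokes Lemma \ref{lem:far_ef3} and the corresponding distance bounds for perfect and equitable. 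Your post-hoc shrinking is a workable variant of that final step, but only once the Eval handling is fixed.
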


In the $RW^+$ model we can simulate moving knife protocols without the requirement that the valuations are bounded from below since the center can reduce (half) the time directly with each iteration, instead of reducing it through the lens of the players' valuations.

\begin{theorem} \label{thm:simulation_robust_rw+}
	Consider a cake cutting problem where the value densities are bounded from above by constant $D>0$.
	Let $\mathcal{M}$ be an $RW^+$ moving knife protocol with at most $r$ steps, such that $\mathcal{M}$ outputs $\mathcal{F}$-fair allocations demarcated by at most a constant number $C$ of cuts.
	
	Then for each $\epsilon > 0$, there is an $RW^+$ protocol $\mathcal{M}_{\epsilon}$ that uses $O\left(\log \frac{1}{\epsilon}\right)$ queries and computes $\epsilon$-$\mathcal{F}$-fair partitions demarcated with at most $C$ cuts.
\end{theorem}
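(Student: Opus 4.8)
The plan is to mimic the simulation behind Theorem~\ref{thm:mainsimulation_robust}, but to replace the search in value space by a binary search carried out \emph{directly on the time parameter} of the procedure, which is exactly what the extra power of $RW^+$ buys us. Recall that an $RW^+$ moving knife protocol $\mathcal{M}$, as formalized in Appendix~\ref{app:movingknife}, proceeds in at most $r$ phases; within a phase the knife positions are continuous — and, by the formalization, Lipschitz with a constant $L$ depending only on the fixed parameters of $\mathcal{M}$ — functions of a (normalized) time parameter $t$, some knives being driven by the center and some by the players, each player-driven knife being the point that keeps a designated value of a designated piece equal to a prescribed target $\tau(t)$; the phase ends at the first time $t^\ast$ at which a designated stopping predicate (which the formalization guarantees to change sign exactly once) flips. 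Since $\mathcal{M}$ has a constant number $C$ of cuts and a constant number of devices, and $r$ is fixed, it suffices to show that a single phase can be carried out to within error $\epsilon'$ — a suitable $\mathrm{poly}(\epsilon)$ fraction to be fixed below — using $O(\log\frac{1}{\epsilon})$ queries, and then to bound how the per-phase errors accumulate across the $r$ phases.

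For a single phase I would maintain a bracketing interval $[t_\ell, t_r]$ for $t^\ast$, initialized to the whole time interval. Given a candidate time $t$, the center recovers the entire knife configuration with $O(1)$ queries: center-driven knives are read off from $t$ directly, and for each player-driven knife with left endpoint $\ell$ and target $\tau(t)$ the center issues $\mathit{Eval}_i(\ell)$ and then $\mathit{Cut}_i\!\left(V_i([0,\ell]) + \tau(t)\right)$ to obtain that knife's exact position at time $t$. With the full configuration in hand the center evaluates the stopping predicate — once more with $O(1)$ Eval queries at the computed knife positions, which $RW^+$ permits at arbitrary points — and bisects, moving $t_\ell$ or $t_r$ to the midpoint according to the sign of the predicate's potential, exactly as in ordinary bisection of a function that changes sign once. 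After $k = O(\log\frac{1}{\epsilon})$ iterations, $|t_r - t_\ell| \le \epsilon'$.

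The step where only an upper bound on the densities is needed — and where this argument is genuinely easier than Theorem~\ref{thm:mainsimulation_robust} — is the conversion from time accuracy to value accuracy. The phase is simulated by outputting the allocation at time $t_\ell$; since $t^\ast \in [t_\ell, t_r]$ and the knife positions are $L$-Lipschitz in $t$, every cut of this allocation lies within $L\epsilon'$ (in cake space) of the corresponding cut of the true allocation $A(t^\ast)$, and because $v_i \le D$ everywhere this gives $|V_i(\tilde A_j) - V_i(A_j(t^\ast))| \le 2DLC\,\epsilon'$ for all $i,j$ — with \emph{no} lower bound on any density entering. In ordinary $RW$ one cannot proceed this way, because there one narrows an interval around a knife in a player's value space and needs a density lower bound to turn a small value into a small length; here the center bisects time itself and never incurs that conversion. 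Choosing $\epsilon'$ to be a small enough constant multiple of $\epsilon/(DLC\cdot r)$ makes the total perturbation across all $r$ phases at most $\epsilon/2$ in every $V_i(\tilde A_j)$, so by the abstract fairness condition the final partition $\tilde A$, demarcated by at most $C$ points (arbitrary ones, as $RW^+$ allows), is $\epsilon$-$\mathcal{F}$-fair, and the query count is $O(\log\frac{1}{\epsilon})$.

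The main obstacle I anticipate is the bookkeeping in that last step: the target values, and even the set of knives, of a later phase typically depend on quantities measured at the end of earlier phases, so one must make precise the sense in which a phase depends Lipschitz-continuously on its (slightly perturbed) starting configuration, ensuring that the $r$ per-phase errors add rather than blow up. Since interval values are Lipschitz in their endpoints when densities are bounded above, and $r$ is constant, I expect this to go through, but it is the one place calling for care; the binary search and the query accounting are otherwise routine.
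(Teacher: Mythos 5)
Your proposal is essentially the paper's own argument: the paper proves this theorem by remarking that the $RW$ simulation (Theorem~\ref{lem:lipschitzknives}) goes through verbatim except that the center bisects \emph{time} directly (recall the first device is a knife whose position equals time, and $RW^+$ lets the center evaluate at arbitrary points), rather than halving the bracketing interval through player~$1$'s valuation --- which is exactly where the density lower bound entered in the $RW$ case and exactly the step you identify as no longer needed. One remark on the issue you flag at the end: the accumulation of per-step errors across the $r$ steps is not something you need to (or in general can) establish by a Lipschitz argument, since a protocol may branch discontinuously on the outcome of an earlier step; the paper instead builds a \emph{robustness} requirement into the very definition of an $RW$/$RW^+$ moving knife protocol (replacing each step's outcome by an $\epsilon$-outcome must still yield an $\epsilon$-$\mathcal{F}$-fair partition in at most $r$ steps), so the cross-step bookkeeping is discharged by hypothesis rather than proved. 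With that substitution your argument matches the intended proof.
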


\medskip
We also introduce a weaker model, which we call $RW^{-}$, where the protocol can ask the players only the evaluate type of query.

\begin{definition}[$RW^-$ query model]
	An $RW^{-}$ protocol for cake cutting communicates with the players via queries of the form
	\begin{itemize}
		\item $\emph{\textbf{Eval}}_i(y)$: Player $i$ returns $V_{i}([0,y])$, where $y \in [0,1]$ is arbitrarily chosen by the center.
	\end{itemize}
	At the end an $RW^-$ protocol outputs an allocation that can be demarcated by any points.
\end{definition}

If the valuations are arbitrary, then an $RW^-$ protocol may be unable to find any fair allocation at all. The reason is that no matter what queries an $RW^-$ protocol asks, one can hide the entire instance in a small interval that has value $1$ for all the players; this interval will shrink as more queries are issued, but can be set to remain of non-zero length until the end of execution. 

\medskip

However, if the valuations are bounded from above\footnote{There exist other types of valuations on which the $RW^-$ model may be useful, such as piecewise constant valuations defined on a grid, with the demarcations between intervals of different height known to the protocol.}, then an $RW^-$ protocol is quite powerful.

\begin{theorem}
	Suppose the valuations of the players are bounded from above by a constant $D > 0$. Then any $RW^+$ query can be answered within $\epsilon$-error using $O(\log{\frac{1}{\epsilon}})$ $RW^-$ queries. 
\end{theorem}
\begin{proof}
	Let there be an instance with arbitrary valuations $v_1 \ldots v_n$ such that $v_i(x) < D$ for all $x \in [0,1]$ and $i \in N$.
	Since an $RW^-$ protocol can use the same type of evaluate queries as an $RW^+$ protocol, the simulation has to handle the case where the incoming query is a cut. Let this be $Cut_i(\alpha)$ and denote by $x$ the correct answer to the query. In order to find an approximate answer using only evaluate queries, initialize $\ell = 0$, $r=1$, and search for the correct answer: $(*)$
	Let $m = (\ell + r)/2$.
	Ask player $i$ the query $Eval_i(m)$ and let $w$ be the answer given. If $|w - \alpha| \leq \epsilon$, return $m$. 
	Otherwise, if $m > \alpha$, set $r = m$, and if $m < \alpha$, set $\ell = m$; return to $(*)$. This procedure halves the interval $[\ell,r]$ with every iteration. Moreover, from the bound on the valuations, an interval of length $\epsilon/D$ cannot be worth more than $\epsilon$ to any player. Thus the search stops in $O(\log\frac{1}{\epsilon})$ rounds.
\end{proof}

\section{Discussion}

An important open question is to obtain stronger lower bounds for $n\geq 4$ players for computing connected envy-free allocations and for $n\geq 3$ players for perfect allocations with minimal number of cuts. We conjecture that unlike equitability, which remains logarithmic in $1/\epsilon$ for any number of players, computing a connected $\epsilon$-envy-free allocation for $n=4$ players and an $\epsilon$-perfect allocation with minimal cuts for $n=3$ players will require $\Omega(\frac{1}{\epsilon})$ queries. Since moving knife protocols can be simulated with $O(\log{\frac{1}{\epsilon}})$ queries, this would imply that no moving knife protocol exists for computing an envy-free allocation for $n \geq 4$ players or a perfect allocation for $n\geq 3$ players (the existence of such procedures has been posed as an open question, e.g. in \cite{BTZ95}).

\section{Acknowledgements}
We thank the participants of the workshop on fair division in St Petersburg for useful discussion, in particular Erel Segal-Halevi and Francis Su.

\addcontentsline{toc}{section}{\protect\numberline{}References}%
\bibliographystyle{alpha}
\bibliography{cake_q}

\newcommand{\etalchar}[1]{$^{#1}$}
\begin{thebibliography}{CDUE{\etalchar{+}}06}

\bibitem[AFG{\etalchar{+}}17]{AFGS+17}
R.~Alijani, M.~Farhadi, M.~Ghodsi, M.~Seddighin, and A.~S. Tajik.
\newblock Envy-free mechanisms with minimum number of cuts.
\newblock In {\em AAAI}, pages 312--318, 2017.

\bibitem[Alo87]{ALON1987247}
N.~Alon.
\newblock Splitting necklaces.
\newblock {\em Advances in Mathematics}, 63(3):247--253, 1987.

\bibitem[AM16]{AM16}
H.~Aziz and S.~Mackenzie.
\newblock A discrete and bounded envy-free cake cutting protocol for any number
  of agents.
\newblock In {\em FOCS}, pages 416--427, 2016.

\bibitem[AMNS15]{AMNS15}
G.~Amanatidis, E.~Markakis, A.~Nikzad, and A.~Saberi.
\newblock Approximation algorithms for computing maximin share allocations.
\newblock In {\em ICALP}, pages 39--51, 2015.

\bibitem[Aus82]{Austin82}
A.K. Austin.
\newblock Sharing a cake.
\newblock {\em The Mathematical Gazette}, 66(437):212--215, 1982.

\bibitem[BB04]{BB04}
J.~Barbanel and S.~Brams.
\newblock Cake division with minimal cuts: Envy-free procedures for 3 persons,
  4 persons, and beyond.
\newblock {\em Math. Soc. Sci.}, 48:251--269, 2004.

\bibitem[BBKP14]{BBKP14}
E.~Balkanski, S.~Branzei, D.~Kurokawa, and A.~D. Procaccia.
\newblock Simultaneous cake cutting.
\newblock In {\em AAAI}, pages 566--572, 2014.

\bibitem[BCE{\etalchar{+}}16]{socialchoice_book}
F.~Brandt, V.~Conitzer, U.~Endriss, J.~Lang, and A.~D. Procaccia, editors.
\newblock {\em {Handbook of Computational Social Choice}}.
\newblock Cambridge University Press, 1 edition, 2016.

\bibitem[BCE{\etalchar{+}}17]{BCEI+17}
S.~Bouveret, K.~Cechl{\'{a}}rov{\'{a}}, E.~Elkind, A.~Igarashi, and D.~Peters.
\newblock Fair division of a graph.
\newblock In {\em IJCAI}, pages 135--141, 2017.

\bibitem[BHM15]{BHM15}
S.~Br{\^{a}}nzei, H.~Hosseini, and P.~B. Miltersen.
\newblock Characterization and computation of equilibria for indivisible goods.
\newblock In {\em SAGT}, pages 244--255, 2015.

\bibitem[BJK08]{Brams2008}
S.~J. Brams, M.~A. Jones, and C.~Klamler.
\newblock Proportional pie-cutting.
\newblock {\em Internat. J. Game Theory}, 36(3):353--367, 2008.

\bibitem[BLM16]{BLM16}
S.~Br\^{a}nzei, Y.~Lv, and R.~Mehta.
\newblock To give or not to give: Fair division for single minded valuations.
\newblock In {\em IJCAI}, pages 123--129. AAAI Press, 2016.

\bibitem[BM15]{BM15}
S.~Branzei and P.~B. Miltersen.
\newblock A dictatorship theorem for cake cutting.
\newblock In {\em IJCAI}, pages 482--488, 2015.

\bibitem[Br{\^a}15]{sim15}
S.~Br{\^a}nzei.
\newblock A note on envy-free cake cutting with polynomial valuations.
\newblock {\em Information Processing Letters}, 115(2):93--95, 2015.

\bibitem[BT95]{BT95}
S.~J. Brams and A.~D. Taylor.
\newblock An envy-free cake division protocol.
\newblock {\em Am. Math. Mon}, 102(1):9--18, 1995.

\bibitem[BT96]{BT96}
S.~Brams and A.~Taylor.
\newblock {\em Fair Division: from cake cutting to dispute resolution}.
\newblock Cambridge University Press, 1996.

\bibitem[BTZ97]{BTZ95}
S.~Brams, A.~D. Taylor, and W.~S. Zwicker.
\newblock A moving-knife solution to the four-person envy-free cake-division
  problem.
\newblock {\em Proceedings of the American Mathematical Society},
  125(2):547--554, 1997.

\bibitem[Bud11]{Budish11}
E.~Budish.
\newblock The combinatorial assignment problem: Approximate competitive
  equilibrium from equal incomes.
\newblock {\em Journal of Political Economy}, (119):1061--1103, 2011.

\bibitem[CDG{\etalchar{+}}17]{CDGK+17}
R.~Cole, N.~Devanur, V.~Gkatzelis, K.~Jain, T.~Mai, V.~V. Vazirani, and
  S.~Yazdanbod.
\newblock Convex program duality, fisher markets, and nash social welfare.
\newblock In {\em EC}, pages 459--460, 2017.

\bibitem[CDP13]{CDP13}
K.~Cechlarova, J.~Dobos, and E.~Pillarova.
\newblock On the existence of equitable cake divisions.
\newblock {\em J. Inf. Sci.}, 228:239--245, 2013.

\bibitem[CDUE{\etalchar{+}}06]{Chevaleyre}
Y.~Chevaleyre, P.~E. Dunne, M.~Lemaitre U.~Endriss, J.~Lang, N.~Maudet,
  J.~Padget, S.~Phelps, J.~A. Rodriguez-Aguilar, and P.~Sousa.
\newblock Issues in multiagent resource allocation.
\newblock {\em Informatica}, (30):3–31, 2006.

\bibitem[Che17]{Cheze17}
G.~Cheze.
\newblock Existence of a simple and equitable fair division: A short proof.
\newblock {\em Math. Soc. Sci.}, 87:92--93, 2017.

\bibitem[CKKK12]{CKKK12}
I.~Caragiannis, C.~Kaklamanis, P.~Kanellopoulos, and M.~Kyropoulou.
\newblock The efficiency of fair division.
\newblock {\em Theory of Computing Systems}, 50(4):589--610, 2012.

\bibitem[CP12]{CP12}
K.~Cechlarova and E.~Pillarova.
\newblock On the computability of equitable divisions.
\newblock {\em Discrete Optimization}, 9:249--257, 2012.

\bibitem[DFHY18]{DFHY18}
S.~Dehghani, A.~Farhadi, M.~T. Hajiaghayi, and H.~Yami.
\newblock Envy-free chore division for an arbitrary number of agents.
\newblock In {\em SODA}, pages 2564--2583, 2018.

\bibitem[DQS12]{DQS12}
X.~Deng, Q.~Qi, and A.~Saberi.
\newblock Algorithmic solutions for envy-free cake cutting.
\newblock {\em Oper. Res}, 60(6):1461--1476, 2012.

\bibitem[DS61]{DS61}
L.~E. Dubins and E.~H. Spanier.
\newblock How to cut a cake fairly.
\newblock {\em Am. Math. Mon}, 68(1):1--17, 1961.

\bibitem[EP84]{EP84}
S.~Even and A.~Paz.
\newblock A note on cake cutting.
\newblock {\em Discrete Applied Mathematics}, 7(3):285 -- 296, 1984.

\bibitem[EP06]{EP06b}
J.~Edmonds and K.~Pruhs.
\newblock Cake cutting really is not a piece of cake.
\newblock In {\em SODA}, pages 271--278, 2006.

\bibitem[FRG18]{FG18}
A.~Filos-Ratsikas and P.~Goldberg.
\newblock Consensus halving is ppa-complete.
\newblock In {\em STOC}, 2018.
\newblock Forthcoming. Preprint: https://arxiv.org/abs/1711.04503.

\bibitem[GP14]{GP14}
J.~Goldman and A.~D. Procaccia.
\newblock Spliddit: Unleashing fair division algorithms.
\newblock {\em ACM SIG. Exch}, 13(2):41--46, 2014.

\bibitem[GS58]{GS58}
G.~Gamow and M.~Stern.
\newblock {\em Puzzle math}.
\newblock Viking Adult, 1958.

\bibitem[GZH{\etalchar{+}}11]{GZHK+11}
A.~Ghodsi, M.~Zaharia, B.~Hindman, A.~Konwinski, S.~Shenker, and I.~Stoica.
\newblock Dominant resource fairness: Fair allocation of multiple resource
  types.
\newblock In {\em NSDI}, pages 323--336, 2011.

\bibitem[Mou03]{Moulin03}
H.~Moulin.
\newblock {\em Fair Division and Collective Welfare}.
\newblock The MIT Press, 2003.

\bibitem[Ney46]{Neyman46}
J.~Neyman.
\newblock Un theorem d'existence.
\newblock {\em C. R. Acad. Sci. Paris}, 222:843--845, 1946.

\bibitem[OPR16]{OPR16}
A.~Othman, C.~Papadimitriou, and A.~Rubinstein.
\newblock The complexity of fairness through equilibrium.
\newblock {\em ACM Trans. Econ. Comput.}, 4(4):20:1--20:19, 2016.

\bibitem[PR18]{PT18}
B.~Plaut and T.~Roughgarden.
\newblock Almost envy-freeness with general valuations.
\newblock In {\em SODA}, pages 2584--2603, 2018.

\bibitem[Pro09]{Pro09}
A.~D. Procaccia.
\newblock Thou shalt covet thy neighbor's cake.
\newblock In {\em IJCAI}, pages 239--244, 2009.

\bibitem[Pro13]{Pro13}
A.~D. Procaccia.
\newblock Cake cutting: Not just child's play.
\newblock {\em Communications of the ACM}, 56(7):78--87, 2013.

\bibitem[PW17]{PW17}
A.~D. Procaccia and J.~Wang.
\newblock A lower bound for equitable cake cutting.
\newblock In {\em ACM EC}, pages 479--495, 2017.

\bibitem[RW98]{RW98}
J.~M. Robertson and W.~A. Webb.
\newblock {\em Cake Cutting Algorithms: Be Fair If You Can}.
\newblock A. K. Peters, 1998.

\bibitem[SHS17]{SS17}
E.~Segal-Halevi and B.~Sziklai.
\newblock Resource-monotonicity and population-monotonicity in connected
  cake-cutting.
\newblock 2017.
\newblock arXiv preprint: 1703.08928.

\bibitem[Sim80]{simmons80}
F.~W. Simmons.
\newblock Private communication to michael starbird.
\newblock 1980.

\bibitem[Ste48]{Steinhaus48}
H.~Steinhaus.
\newblock The problem of fair division.
\newblock {\em Econometrica}, 16:101--104, 1948.

\bibitem[Str80]{Stromquist80}
W.~Stromquist.
\newblock How to cut a cake fairly.
\newblock {\em Am. Math. Mon}, (8):640--644, 1980.
\newblock Addendum, vol. 88, no. 8 (1981). 613-614.

\bibitem[Str08]{Stromquist08}
W.~Stromquist.
\newblock Envy-free cake divisions cannot be found by finite protocols.
\newblock {\em Electron. J. Combin.}, 15, 2008.

\bibitem[Su99]{Su99}
F.~E. Su.
\newblock Rental harmony: Sperner's lemma in fair division.
\newblock {\em Am. Math. Mon}, 106:930--942, 1999.

\bibitem[SW09]{SW09}
A.~Saberi and Y.~Wang.
\newblock Cutting a cake for five people.
\newblock In {\em AAIM}, pages 292--300, 2009.

\bibitem[Web78]{Webb}
W.~Webb.
\newblock But he got a bigger piece than i did, 1978.
\newblock preprint, n.d.

\bibitem[Wel85]{Weller}
D.~Weller.
\newblock Fair division of a measurable space.
\newblock {\em Journal of Mathematical Economics}, 14(5), 1985.

\bibitem[WS07]{WS07}
G.~J. Woeginger and J.~Sgall.
\newblock On the complexity of cake cutting.
\newblock {\em Discrete Optimization}, 4:213--220, 2007.

\end{thebibliography}

\appendix

\section{Simulation of Partitions} \label{app:sim_partition}
	
A general technique useful for computing approximately fair allocations in the RW model is based on asking the players to submit a discretization of the cake in many small cells via Cut queries, and reassembling them offline in a way that satisfies approximately the desired solution. The next theorem statement implies that it is possible to approximate in the RW model allocations that exist with a bounded number of cuts, which implies upper bounds for computing approximately fair solutions for several fairness concepts.

\medskip

\noindent \textbf{Lemma \ref{lem:simgen}} (restated)
\emph{Consider a cake cutting problem for $n$ players.
	Suppose there exists a partition $A = (A_1, \ldots, A_m)$, where each piece $A_j$ is demarcated with at most $K$ cuts \footnote{A connected piece $X =[a,b]$ is demarcated by two cuts (namely its endpoints $a$, $b$), a piece $X' = [a,b] \cup [c,d]$ where $a < b < c < d$ is demarcated with four cuts ($a,b,c,d$), and so on. We are interested in the minimum number of points that can be used to demarcate a piece.} and worth $w_{i,j}$ to each player $i$. Then for all $\epsilon > 0$, a partition 
	$\tilde{A} = (\tilde{A}_1, \ldots, \tilde{A}_m)$ where each piece $A_j$ is demarcated with at most $K$ cuts and $|V_i(\tilde{A}_j) - w_{i,j}| \leq \epsilon$ for all $i,j$ can be found with $O(K \cdot n/\epsilon)$ queries.}
\begin{proof}
Let $0 < x_1 < \ldots < x_T < 1$ be the minimum set of points required to demarcate the partition $A = (A_1, \ldots, A_m)$ and consider the RW protocol:
		\begin{itemize}
			\item Ask each player $i$ a number of $B = \lceil 4K/\epsilon \rceil$ Cut queries to partition the cake in $B$ intervals each worth $1/B$ to $i$.\footnote{The queries are: $Cut_i(\epsilon), Cut_i(2\epsilon),\ldots, Cut_i((B-1)\epsilon)$.}
			\item For each partition $\tilde{A} = (\tilde{A}_1, \ldots, \tilde{A}_m)$ that can be attained by assembling offline the resulting $n \cdot B$ cells in $m$ pieces, each demarcated by at most $K$ cut points: 
			\begin{description}
				\item[$\hspace{4mm}\diamond$] Let $u_{i,j}$ be the value of player $i$ for the piece $A_j'$ obtained by rounding the cuts demarcating $\tilde{A}_j$ to player $i$'s own cut points. 
				If $u_{i,j} \in \left[w_{i,j} - \epsilon/2, w_{i,j} + \epsilon/2\right]$ for all $i,j$, then output $\tilde{A}$ and exit.
			\end{description}
		\end{itemize}
		To see that existence of such a partition $\tilde{A}$ is guaranteed, note that by rounding the points $x_1 \ldots x_T$ to the nearest point on the grid submitted by the players and allocating from left to right the resulting intervals in the same order as in $A$ (including empty pieces if rounded cuts overlap), we obtain a partition $\tilde{A}$ in which every piece is demarcated with at most $K$ cuts. The rounding error from each point is $1/B$, thus the overall difference for any piece $A_j$ from the point of view of any player $i$ is bounded by $2K /B \leq \epsilon/2$, so $V_i(\tilde{A}_j) \in \left[w_{i,j} - \epsilon, w_{i,j} + \epsilon\right]$ as required.
	\end{proof}
	
	As a corollary, we get bounds for computing approximately fair allocations in the RW model.  \\
	
		\noindent \textbf{Corollary \ref{cor:apx_general}} (restated)
		\emph{For each $\epsilon > 0$ and number of players $n$ with arbitrary value densities, a partition that is}
		\begin{itemize}
				\item \emph{$\epsilon$-envy-free and connected can be computed with $O(n/\epsilon)$ queries}
			\item \emph{$\epsilon$-equitable, proportional, and connected can be computed with $O(n/\epsilon)$ queries}
			\item \emph{an $(\epsilon,k)$-measure splitting with $k$ pieces can be computed with $O(k\cdot n^2/\epsilon)$ queries.}
			\item \emph{$\epsilon$-perfect can be computed with $O(n^3/\epsilon)$ queries.}
		\end{itemize}

\section{Envy-Free Allocations} \label{app:ef}

In this section we include the proofs for computing lower and upper bounds for the computation of envy-free allocations.

\subsection{Upper Bound}
To obtain a logarithmic upper bound for computing connected $\epsilon$-envy-free allocations, we will simulate from the point of view of each player a moving knife procedure due to \cite{BB04}.

\begin{quote}
	\textbf{\emph{Barbanel-Brams procedure}}:
	\emph{Ask each player $i$ to return the point such that one third of the cake is to the right of it.} 
	\emph{If an envy-free allocation can be formed with the pieces demarcated by the player who had the rightmost mark (say 1)---$\{[0,\ell], [\ell,r],[r,1]\}$---output it. Otherwise there are two cases:} 
	
	\hspace{6mm} \emph{Case 1: Both players $2$ and $3$ strictly prefer the piece $[\ell, r]$. Then move a sword continuously from $r$ towards $\ell$, keeping for each position $z$ of the sword, the point $t$ for which $V_1(0,t) = V_1(z,1)$. By the intermediate value theorem, there exists position of the sword such that one of $2$, $3$ is indifferent between two pieces, and an envy-free allocation exists at cuts $t$,$z$.}
	
	\hspace{6mm} \emph{Case 2: Both players $2$ and $3$ strictly prefer the piece $[0, \ell]$. Then move a sword continuously from $\ell$ towards $0$, keeping for each position $z$ of the sword the point $t$ such that $V_1(z,t) = V_1(t,1)$. By the intermediate value theorem, there exists position of the sword such that one of the players $2$ and $3$ is indifferent between the piece $[0,z]$ and one of $\{[z,t], [t,1]\}$, which yields an envy-free allocation with cuts $t,z$.}
\end{quote}

\noindent \textbf{Theorem \ref{thm:ef3sim}} (restated)
\emph{A connected $\epsilon$-envy-free allocation for three players can be computed with $O\left(\log{\frac{1}{\epsilon}}\right)$ queries.}\\
\begin{proof} 
	We will compute an $\epsilon$-envy-free allocation using several steps, not all of which are not included in the Barbanel-Brams protocol. The first step of the Barbanel-Brams protocol is discrete, and so it can be executed with $O(1)$ RW queries. 
	
	If the instance falls in Case 1, we will maintaining the next property: 
	\begin{enumerate}[\hspace{4mm}(a)]	
		\item there exist points $0 < w < z < 1$, such that there for some points $a < b$, $a,b \in [0,w)$ we have $V_1(0,a) = V_1(z,1)$, $V_1(0,b) = V_1(w,1)$, and players $2$ and $3$ agree that $(i)$ the piece $[a,z]$ is larger than any of $[0,a]$ and $[z,1]$ by more than $\epsilon$, while $(ii)$ the piece $[b,w]$ is smaller than one of $[0,b]$ and $[w,1]$ by more than $\epsilon$.
	\end{enumerate}
	
	Consider the following procedure:
	\begin{enumerate}
		\item Initialize $w$ and $z$ by setting $z = r$ and asking player $1$ a Cut query to identify its midpoint $w$ of the cake. Clearly the allocation made with cut points $(\ell,r)$ has the property that both players $2$ and $3$ prefer the middle piece, $[\ell,r]$, while if the cuts overlap on $w$, then both $2$ and $3$ prefer one of the pieces $[0,w]$ or $[w,1]$.
		\begin{figure}[h!]
			\centering
			\includegraphics[scale=0.7]{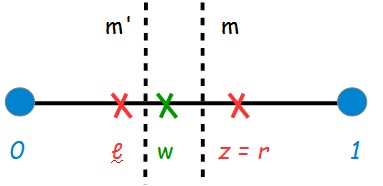}
			\caption{Case 1 of the simulation.}
			\label{fig:bbcase1}
		\end{figure}
		
		\item Given points $w,z$ that satisfy invariant $(a)$, for which $V_1(w,z) \geq \epsilon$, ask player $1$ iteratively a Cut query to determine the midpoint $m \in [w,z]$, for which $V_1(w,m) = V_1(m,z)$. Then ask player $1$ a Cut query as to return the point $m'$ for which $V_1(0,m') = V_1(m,1)$. If there is an $\epsilon$-envy-free allocation with cuts $m'$ and $m$, output it. Otherwise, if both players $2$ and $3$ evaluate the piece $[m',m]$ as the largest among $\{[0,m'], [m',m], [m,1]\}$, then set $z = m$. Else, it must be the case that both $2$ and $3$ estimate the piece $[m',m]$ as strictly smaller than at least one of $[0,m']$ and $[m,1]$ by more than $\epsilon$; set $w = m$. Note the new points $w,z$ still satisfy property $(a)$.
	\end{enumerate}
	Step 1 requires a constant number of queries, while Step 2 is executed at most $\log{\frac{1}{\epsilon}}$ times, since the valuation of player $1$ for the interval $[w,z]$ halves with each round; each round of Step 2 requires a constant number of queries. If an $\epsilon$-envy-free allocation has been found after completing Steps 1 and 2, the case is complete. Otherwise, we will further reduce the interval $[w,z]$ until it becomes small also from the point of view of player $2$. Note that 
	the invariant still holds after completing the previous steps, so there exist $a < b < w$ such that 
	$V_1(w,z) = V_1(a,b) = \delta < \epsilon$. By the intermediate value theorem, we are guaranteed to have an $\epsilon$ envy-free solution with cuts $x_1 \in [a,b]$ and $x_2 \in [w,z]$. Moreover,
	since $V_1([a,z]) \leq 1/3$, any allocation obtained with cuts $x_1 \in [a,b]$, $x_2 \in [w,z]$ such that player $1$ receives one of the pieces $[0,x_1]$, $[x_2, 1]$ is $\epsilon$-envy-free for player $1$. 
	
	Recall the piece $[a,z]$ is strictly larger than $\epsilon$ than $[0,a]$ and $[z,1]$ in the estimation of players $2$ and $3$. We have three subcases, depending on the piece $[b,z]$. 
	\begin{itemize}
		\item Exactly one of players $2$ and $3$ views $[b,z]$ as the largest piece among $[0,b]$, $[b,z]$, and $[z,1]$. Then an $\epsilon$-envy-free allocation can be obtained with cuts $b$ and $z$.
		\item Both players $2$ and $3$ view $[b,z]$ as larger than $[0,b]$ and $[z,1]$. Then there must exist an $\epsilon$-envy-free solution with cuts $b$ and $x_2 \in [w,z]$. Such a solution can be found with binary search on $[w,z]$, using the valuation of player $2$ to half the interval $[w,z]$ in each iteration. The solution $x_2$ reached this way will have the property that player $2$ is indifferent (within $\epsilon$) between $[b,x_2]$ and one of the outside pieces $[0,b]$ or $[x_2,1]$. 
		\item Otherwise, both players $2$ and $3$ view the piece $[b,z]$ as smaller than one of $[0,b]$ and $[z,1]$. Then by the intermediate value theorem there exists an envy-free allocation with cuts $x_1 \in [a,b]$ and $z$. We can find an approximate solution with binary search on the interval $[a,b]$ using the valuation of player $2$ to repeatedly identify the midpoint of $[a,b]$.
	\end{itemize}  
	
	Each subcase completes with $O\left(\log{\frac{1}{\epsilon}}\right)$ queries, which gives a bound of $O\left(\log{\frac{1}{\epsilon}}\right)$ for Case 1.
	Otherwise, we enter Case 2, for which we maintain the invariant:
	\begin{enumerate}[\hspace{4mm}(b)]	
		\item there exist points $0 < w < z < 1$, such that for some points $a,b$ with $w < a < b \leq 1$ we have $V_1(w,a) = V_1(a,1)$, $V_1(z,b) = V_1(b,z)$, and players $2$ and $3$ agree that $(i)$ the piece $[0,z]$ is larger than any of $[z,b]$ and $[b,1]$ by more than $\epsilon$, while $(ii)$ the piece $[0,w]$ is smaller than one of $[w,a]$ and $[a,1]$ by more than $\epsilon$.
	\end{enumerate}
	The steps for simulating Case 2 are:
	
	\begin{enumerate}
		\item[3.] Initialize $w=0$ and $z = \ell$. Ask player $1$ a Cut query to identify the midpoint $q$ of the cake in its estimation. An allocation made with cuts $\ell$ and $r$ has the property that both players $2$ and $3$ view the leftmost piece $[0,\ell]$ as the largest, while the allocation made with pieces $\{[0,0], [0,q], [q,1]\}$ has the property that none of the players $2$ and $3$ want the (now empty) leftmost piece.
		\item[4.] Given points $w,z$ that satisfy invariant $a$, for which $V_1(w,z) \geq \epsilon$, iteratively ask player $1$ a Cut query to determine point $m \in [w,z]$ for which $V_1(w,m) = V_1(m,z)$. Then find, via another Cut query, the point $m' \in [z,1]$ for which $V_1(z,m') = V_1(m',1)$. If an $\epsilon$-envy-free allocation exists with cuts $m$ and $m'$, output it. Otherwise, if both players $2$ and $3$ strictly prefer piece $[0,m]$ to any of $[m,m']$ and $[m',1]$, then update $z = m$. Else, both $2$ and $3$ view $[0,m]$ as strictly smaller than at least one of $[m,m']$ and $[m',1]$ by more than $\epsilon$; set $w = m$.
		\begin{figure}[h!]
			\centering
			\includegraphics[scale=0.7]{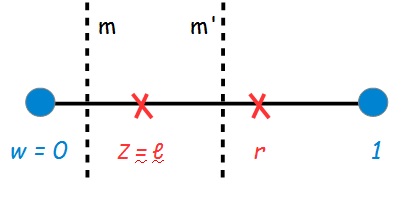}
			\caption{Case 2 of the simulation.}
			\label{fig:bbcase2}
		\end{figure}
	\end{enumerate}
	Step 3 requires a constant number of queries, while Step 4 at most $\log\frac{1}{\epsilon}$ queries. If an $\epsilon$-envy-free allocation has not been found after completing steps $3$-$4$, then since the interval $[w,z]$ is worth less than $\epsilon$ to player $1$, we can again reduce the problem to finding an agreement among players $2$ and $3$, as was done in Case 1. This completes the proof.
\end{proof}

\subsection{Lower Bound}
 
In this section we develop the lemmas for the lower bound. 

\medskip

\noindent \textbf{Theorem \ref{thm:3lb}} (restated).
\emph{Computing a connected $\epsilon$-envy-free allocation for three players requires $\Omega\left(\log \frac{1}{\epsilon}\right)$ queries.}\\

Recall the class of valuations that will be used for the lower bound is that of generalized rigid measure systems.

\medskip

\noindent \textbf{Definition \ref{def:rms}} (restated).
\emph{A tuple of value densities $(v_1, \ldots, v_n)$ is a generalized rigid measure system if:
	\begin{itemize}
		\item the density of each measure is bounded by: $\frac{1}{\sqrt{2}} < v_i(x) < \sqrt{2}$, for all $i =1 \ldots n$ and $x \in [0,1]$.
		\item there exist points $0 = x_0 < x_1< \ldots< x_{n-1}< x_n =1$ and values $s_i,t_i$ for each player $i$ such that $0 < s_i < 1/n < t_i < 1/2$ and the valuations of the players satisfy the constraints: $V_j(x_{j-1}, x_j) = V_j(x_j, x_{j+1}) = t_j$ for all $j = 1 \ldots n-1$ and $V_n(x_{n-1}, x_n)= V_n(0,x_1)=t_n$. 
	\end{itemize}
}
An example for three players is given next.

\begin{table}[h!]
	\label{tab:grms}
	\begin{center}
		\begin{tabular}{l | c | c | r}
			
			& $[0,x]$ & $[x,y]$ & $[y,1]$ \\ \hline
			$V_1$ & $t_1$ & $t_1$ & $s_1$ \\
			$V_2$ & $s_2$ & $t_2$ & $t_2$ \\
			$V_3$ & $t_3$ & $s_3$ & $t_3$ \\
		\end{tabular}
	\end{center}
\end{table}

Generalized rigid measure systems satisfy the property that the valuations of the players for any given piece cannot differ too much. 

\medskip

\noindent \textbf{Lemma \ref{lem:density}} (restated)
\emph{Consider any cake cutting problem where for two players $i$ and $j$ 
	where there exist $a,b>0$ such that for all $x \in [0,1]$, $1/a < v_i(x) < b$ and $1/a < v_j(x)  < b$.
	Then for any two pieces $S_1,S_2$ of the cake, if
	$V_i(S_1) \geq ab \cdot V_i(S_2)$, it follows that $ab \cdot V_j(S_1) > V_j(S_2)$.}
\begin{proof}
	Let $S_1$ and $S_2$ be two pieces of lengths $\ell_1$ and $\ell_2$, respectively, such that $V_i(S_1) \geq ab \cdot V_i(S_2)$. By using the constraints on the densities, we get:
	$
	ab \cdot V_j(S_1) > ab \cdot \left(\frac{\ell_1}{a} \right) = b \cdot \ell_1 > V_i(S_1) \geq ab \cdot V_i(S_2) > ab \cdot \left( \frac{\ell_2}{a} \right)= b \cdot \ell_2 > V_j(S_2)
	$ as needed.
\end{proof}

A useful notion to measure how close a protocol is to discovering an approximately envy-free solution on a given instance will be that of a partial rigid measure system, which we define for three players; the $n>3$ definition works similarly.

\medskip

\noindent \textbf{Definition \ref{def:prms}} (restated)
\emph{A tuple of value densities $(v_1, v_2, v_3)$ is a partial rigid measure system if}
	\begin{itemize}
		\item \emph{the density of each player $i$ is bounded everywhere: $\frac{1}{\sqrt{2}} < v_i(x) < \sqrt{2}$, for all $x \in [0,1]$.}
		\item \emph{there exist values $k > 0$ and $1/2 > \ell_i > 1/3 > m_i > 0$ for each player $i$, and points $x,x',y,y' \in [0,1]$, so that the matrix of valuations for pieces demarcated by these points is:}
		\begin{table}[h!]
			\label{tab:prms}
			\begin{center}
				\begin{tabular}{l | c | c | c | c | c r}
					
					& $[0,x]$ & $[x,x']$ & $[x',y]$ & $[y,y']$ & $[y',1]$ \\ \hline
					$V_1$ & $\ell_1$ & $k$ & $\ell_1$ & $k$ & $m_1$ \\
					$V_2$ & $m_2$ & $k$ & $\ell_2$ & $k$ & $\ell_2$ \\
					$V_3$ & $\ell_3$ & $k$ & $m_3$ & $k$ & $\ell_3$ 
				\end{tabular}
			\end{center}
		\end{table}
	\end{itemize}
\vspace{-8mm}

Partial rigid measure systems have the property that if there is a collection of cut points $\mathcal{P} = \{z_1, \ldots, z_k\} \subset [0,1]$, such that there are no points from $\mathcal{P}$ in the intervals $(x,x')$ and $(y,y')$, then any connected partition attainable using cut points from $\mathcal{P}$ has envy at least $0.01 k$.

\medskip

\noindent \textbf{Lemma \ref{lem:far_ef3}} (restated)
\emph{Let $(v_1, v_2, v_3)$ be a partial rigid measure system with parameters $k, m_i, \ell_i$ so that for each $i$, $V_i([x,x']) = V_i([y, y']) = k$ for  some $x,x',y,y'$. 
	If $\mathcal{P} = \{z_1, \ldots, z_{t}\} \subset [0,1]$ is a collection of cut points such that $x,x',y,y' \in \mathcal{P}$ and there are no points from $\mathcal{P}$ in the intervals $(x,x')$ and $(y,y')$, then any connected partition demarcated by points in $\mathcal{P}$ has envy at least $0.01 k$.}
\begin{proof}
Suppose towards a contradiction there exists an allocation with cut points $\hat{x},\hat{y} \in \mathcal{P}$ so that each player is not envious by more than $0.01k$. Recall that $\hat{x}, \hat{y} \not\in (x,x') \cup (y,y')$. We show that every assignment leads to high envy or to a contradiction. There are a few cases: 
\medskip	

\noindent \emph{Case 1}: $\hat{x} \leq x$. 
Regardless of who owns the piece $[0,\hat{x}]$, it cannot be the case that $\hat{y} \leq x$, since that player would envy the right remainder of the cake, namely the interval $[\hat{y},1]$, by an amount much larger than $k$. We consider several scenarios, depending on the owner of $[0, \hat{x}]$. 

\begin{description}
\item[]\emph{1.a)} Player $1$ receives $[0,\hat{x}]$. There are two subcases: 
\begin{itemize}
\item $\hat{y} \in [x', y]$. 
Let $S_1 = [\hat{x}, x]$ and $S_2 = [\hat{y}, y]$. 
Since player $1$ does not envy any other piece by more than $0.01k$, we have 
\begin{eqnarray*}
V_1(0,\hat{x}) & \geq & V_1(\hat{x}, \hat{y}) - 0.01 k \iff \ell_1 - V_1(S_1) = V_1(0,x) - V_1(S_1) \\
& \geq & V_1(x,y) + V_1(S_1) - V_1(S_2) - 0.01 k = \ell_1 + k + V_1(S_1) - V_1(S_2) - 0.01 k  \\
&\iff & V_1(S_2) \geq  2\left(V_1(S_1) + 0.495 k\right)
\end{eqnarray*}

By Lemma \ref{lem:density}, we have that $2 V_j(S_2) > V_j(S_1) + 0.495 k$ for all $j \in \{2,3\}$.
Then both players $2$ and $3$ will prefer the rightmost piece, $[\hat{y}, 1]$, by a margin of at least $0.01 k$. For player $2$, we have
\begin{eqnarray*}
V_2(\hat{y},1) - V_2(\hat{x}, \hat{y}) & = & V_2(\hat{y}, y) + k + V_2(y',1) - (V_2(x,y) + V_2(S_1) - V_2(S_2)) \\
& = & k + V_2(S_2) + \ell_2 - 
(\ell_2 + k + V_2(S_1) - V_2(S_2)) = 2 V_2(S_2) - V_2(S_1) \\
& > & 0.495 k > 0.01k
\end{eqnarray*}
For player $3$, we have	
\begin{eqnarray*}
V_3(\hat{y},1) - V_3(\hat{x}, \hat{y}) & = & V_3(\hat{y}, y) + k + V_3(y',1) - (V_3(x,y) + V_3(S_1) - V_3(S_2)) \\
& = & 2V_3(S_2) - V_3(S_1) + (\ell_3 - m_3) > 0.495 k + \ell_3 - m_3 > 0.01 k
\end{eqnarray*}
\item $\hat{y} \geq y'$.
Then player $1$ will envy the middle piece by at least $2k$:
\begin{eqnarray*}
V_1(\hat{x}, \hat{y}) - V_1(0,\hat{x}) = V_1(\hat{x},x) + k + \ell_1 + k + V_1(y',\hat{y}) - V_1(0,\hat{x}) \geq \ell_1 + 2k - \ell_1 = 2k
\end{eqnarray*}
\end{itemize} 

\item[]\emph{1.b)} Player $2$ receives $[0,\hat{x}]$. Then by approximate envy-freeness, we have 
$$V_2(0,\hat{x}) \geq V_2(\hat{x},\hat{y}) - 0.01 k \; \; \mbox{and} \; \;
V_2(0,\hat{x}) \geq V_2(\hat{y}, 1) - 0.01 k
$$
By summing up the two inequalities, we get 
\begin{eqnarray*}
V_2(0,\hat{x}) & \geq & V_2(\hat{x}, 1)/2 - 0.01 k = (1 - V_2(0,\hat{x}))/2 - 0.01 k \iff 3 V_2(0,\hat{x})/2 \geq 1/2 - 0.01 k \\
& \iff & V_2(0,\hat{x}) \geq 1/3 - 0.02 k /3
\end{eqnarray*}

Since $m_2 \geq V_2(0,\hat{x})$, $2 \ell_2 + m_2 + 2k = 1$, and $\ell_2 > 1/3$, we get $m_2 = 1 - 2 \ell_2 - 2k < 1/3 - 2k$, so
$
1/3 - 2k > m_2 \geq V_2(0,\hat{x}) \geq 1/3 - 0.02k/3$.
Contradiction, so the case cannot happen. 

\item[]\emph{1.c)}	Player $3$ receives $[0,\hat{x}]$. We have the following subcases:
\begin{itemize}
\item $\hat{y} \in [x', y]$. Then $V_3(0,\hat{x}) \leq \ell_3$, while $V_3(\hat{y},1) \geq \ell_3 + k$, thus player $3$ envies the piece $[\hat{y},1]$ by more than $0.01 k$.
\item $\hat{y} \geq y'$. Then both players $1$ and $2$ value the middle piece $[\hat{x},\hat{y}]$ more than the rightmost piece $[\hat{y},1]$, and the difference is larger than $0.01 k$. For player $2$ we have 
$V_2(\hat{x},\hat{y}) \geq \ell_2 + 2k$, while $V_2(\hat{y},1) \leq \ell_2$, and for player $1$ we have $V_1(\hat{x},\hat{y}) \geq \ell_1 + 2k$, while $V_1(\hat{y},1) \leq m_1 < \ell_1$.
\end{itemize} 

Thus no player can accept the leftmost piece, $[0,\hat{x}]$, which completes Case 1.
\end{description}

\noindent \emph{Case 2}: $\hat{x} \in [x',y]$. Then $\hat{y} \geq y'$, since no player would accept (even within envy $0.01 k$) a piece smaller than $[x',y]$. Moreover, players $1$ and $3$ would not accept the piece $[\hat{y}, 1]$ since they would envy the player owning $[0,\hat{x}]$ by more than $k$. Thus the piece $[\hat{y},1]$ can only be assigned to player $2$.
Let $S_1 = [x',\hat{x}]$ and $S_2 = [y',\hat{y}]$. By approximate envy-freeness of player $2$, we have 
\begin{eqnarray*}
V_2(\hat{y},1) \geq V_2(\hat{x},\hat{y}) - 0.01 k = V_2(x',y) - V_2(S_1) + V_2(S_2) + k - 0.01 k = \ell_2 - V_2(S_1) + V_2(S_2) + 0.99k
\end{eqnarray*}
Since $V_2(\hat{y},1) = \ell_2 - V_2(S_2)$, we get that 
$$
\ell_2 - V_2(S_2) \geq \ell_2 - V_2(S_1) + V_2(S_2) + 0.99k \iff V_2(S_1) \geq 2 \left(V_2(S_2) + 0.495 k \right)
$$
By Lemma \ref{lem:density}, we have $2 V_j(S_1) \geq V_j(S_2) + 0.99k$ for all $j \in \{1,3\}$.

We wish to show that both players $1$ and $3$ would only accept the leftmost piece $[0,\hat{x}]$ from the remaining pieces. For player $1$, we have 
\begin{eqnarray*}
V_1(0,\hat{x}) - V_1(\hat{x},\hat{y}) & = & V_1(0,x) + V_1(x,x') + V_1(S_1) - (V_1(x',y) - V_1(S_1) + k + V_1(S_2)) \\
& = & \ell_1 + k + V_1(S_1) - \ell_1 + V_1(S_1) - k - V_1(S_2) = 2 V_1(S_1) - V_1(S_2)\\
& \geq & 0.495k > 0.01 k
\end{eqnarray*}
For player $3$, using the fact that $\ell_3 > m_3$, we have 
\begin{eqnarray*}
V_3(0,\hat{x}) - V_3(\hat{x},\hat{y}) & = & V_3(0,x) + V_3(x,x') + V_3(S_1) - (V_3(x',y) - V_3(S_1) + k + V_3(S_2)) \\
& = & \ell_3 + k + V_3(S_1) - m_3 + V_3(S_1) - k - V_3(S_2) = 2 V_3(S_1) - V_3(S_2) + \ell_3 - m_3\\
& \geq & 0.495 k + \ell_3 - m_3 > 0.01 k
\end{eqnarray*}

\noindent \emph{Case 3}: $\hat{x} \geq y'$. This scenario is clearly infeasible, since all the players would envy the owner of the piece $[0,\hat{x}]$ by at least $2k$. 

In all the cases we obtained a contradiction, so any partition attained with cut points from $\mathcal{P}$ has envy at least $0.01 k$. This completes the proof.
\end{proof}

\medskip

Next we show how queries can be answered one at a time so that the valuations remain consistent with (some) partial rigid measure system throughout the execution of a protocol. \\

\noindent \textbf{Lemma \ref{lem:hide}} (restated)
\emph{Suppose that at some point during the execution of an $RW$ protocol for three players the valuations and cuts discovered are consistent with a partial rigid measure system with parameters $k > 0$, $0 < m_i < 1/3 < \ell_i < 1/2$ for each $i$, and points $x,y$, so that the valuations are:}
	\begin{figure}[h!] 
		\centering
		\includegraphics[scale=0.85]{partial_main.png}
		\label{fig:partial_main}
	\end{figure}	

\emph{If the intervals $I = [x,x+k]$ and $J = [y,y+k]$ have no cut points inside, then a new cut query can be answered so that the valuations remain
	consistent with a partial rigid measure system where two new intervals $I' \subseteq I$ and $J' \subseteq J$ have no cuts inside, length $0.01k$, and the densities of all the players are uniform on $I',J'$.}\\

We will show how to set the values when the query is addressed to player $1$. The other cases, when players $2$ and $3$ have to answer, are similar, but included for completeness.

\bigskip

\noindent \textbf{Part a of Lemma \ref{lem:hide}} 

\begin{proof}
We show how to set the values when the query is addressed to player $1$: $Cut_1(\alpha)$. If the answer to the cut query falls outside $I$ and $J$, we can answer in a way that is arbitrary but consistent with the history. The more difficult scenario is when the answer must be a point inside $I$ or $J$, for which we must hide the solution inside smaller, but not too small, new intervals $I' \subseteq I$ and $J' \subseteq J$. 	Let $\mathcal{P}$ be the collection of cut points that have been discovered by $\mathcal{A}$ before the query is received. By symmetry, it will be sufficient to solve the problem where the new cut query falls in interval $I$; thus $\alpha \in (\ell_1, \ell_1 +k)$.
If the cut query falls on the left side of interval $I$, we hide the solution in a subinterval $I'$ (of length 100 times smaller than $I$) on the right side of $I$, and viceversa. \\

\noindent \emph{Case 1}: $\alpha \in \left(\ell_1, \ell_1 + 2k/3\right]$. We will hide the new interval $I'$ on the right side of the point $x + 2k/3$. Let 
$m = x + 2k/3$, $n = x + 2k/3 + 0.01k$, and $I' = [m,n]$. Similarly let
$p = y + 1.03k/3$, $q = y + 1.03k/3 + 0.01k$, and $J' = [p,q]$. Update the collection of cut points to $\mathcal{P}' = \mathcal{P} \cup \{m,n,p,q\}$. Let the value of each player $i$ for the intervals $I'$ and $J'$ be exactly $0.01k$. Since $I'$ and $J'$ have length $0.01k$, the densities remain uniform in these intervals. Set the values of player $1$ for the other new intervals to 
$V_1(x,m) = 2k/3, \; V_1(n,x+k) = 0.97k/3, \; V_1(y,p) = 1.03k/3$, and $V_1(q,y+k) = 1.94k/3.$

Denote the values of player $2$ for the unknown intervals by 
$V_2(x,m) = 0.99k - \lambda k, \; V_2(n,x+k) = \lambda k, \; V_2(y,p) = \mu k$, $V_2(q,y+k) = 0.99k - \mu k,$ where $0 < \lambda, \mu < 0.99$. Note the values add up to the weight of $I$ and $J$ for player $2$:
\begin{itemize} 
\item $V_2(x,x+k) = V_2(x,m) + V_2(m,n) + V_2(n,x+k) = 0.99k - \lambda k + 0.01k + \lambda k = k$
\item $V_2(y,y+k) = V_2(y,p) + V_2(p,q)+ V_2(q,y+k) = \mu k + 0.01k + 0.99k - \mu k = k
$
\end{itemize}
For player $3$, the values of the unknown intervals are
$
V_3(x,m) = 0.99k - \phi k, \; V_3(n,x+k) = \phi k, \; V_3(y,p) = \psi k, \; \mbox{and} \; V_3(q,y+k) = 0.99k - \psi k,
$
where $0 < \phi, \psi < 0.99k$. The weights add up to the value of player $3$ for the intervals $I$ and $J$ through a check similar to the one for player $2$.

\bigskip

We obtain the configuration in Figure \ref{fig:partial_1a}, where the parameters $\lambda, \mu, \phi, \psi$ must be determined.

\begin{figure}[h!]
\centering
\includegraphics[scale=0.85]{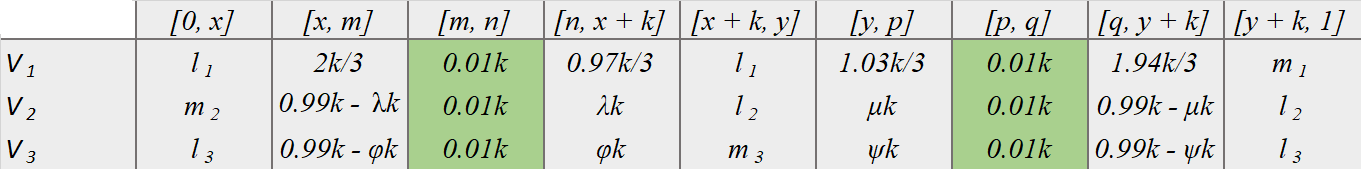}
\caption{Partial rigid measure system for $\alpha \in \left(\ell_1, \ell_1 + 2k/3\right]$. The break points are $m = x + 2k/3$, $n = x + 2k/3 + 0.01k$, $p = y + 1.03k/3$, $q = y + 1.03k/3 + 0.01k$. The densities are uniform on $[m,n]$, $[p,q]$.}
\label{fig:partial_1a}
\end{figure}
The remaining requirements for the new configuration to form a partial rigid measure system are that the values yield a new configuration with parameters $\ell_i', m_i'$ and all the densities on $[x,m]$, $[n,x+k]$, $[y,p]$, and $[q,y+k]$ are in the required bounds of $1/\sqrt{2}$ and $\sqrt{2}$. 

For player $1$, by choice of values we have that $\ell_1' = \ell_1 + 2k/3 = 0.97k/3 + \ell_1 + 1.03k/3$ and $m_1' = m_1 + 1.94k/3$.
Moreover, it can be verified that player $1$'s density will be uniform on all of the new intervals, which clearly belongs to the range
$\left(1/\sqrt{2}, \sqrt{2}\right)$.
For player $2$ we must find $0 < \lambda,\mu < 0.99$ such that the values still form a partial rigid measure system. Thus
$\lambda k + \ell_2 + \mu k = 0.99k - \mu k + \ell_2 \iff \lambda + 2 \mu = 0.99 \iff \mu = 0.495 - \lambda/2.$
By choice of the points $m,n,p,q$ and the valuations, the density of player $2$ on the different intervals is: $(297 - 300 \lambda)/200$ on $[x,m]$, $300 \lambda/97$ on $[n,x+k]$,
$297/206 - 150 \lambda / 103$ on $[y,p]$, and $297/388 + 150 \lambda/194 $ on $[q,y+k]$.

Set $\lambda = 0.25$. Then the density is in $(1/\sqrt{2}, \sqrt{2})$ and there exists a solution for player $2$. Let $\mu = 0.37$, $m_2' = m_2 + 0.99k - 0.25k$, and $\ell_2' = \lambda k + \ell_2 + \mu k = 99k/100 - \mu k + \ell_2$. 

Finally, for player $3$, we must find $0 < \phi, \psi < 0.99$ so that
$
\ell_3 + 0.99k - \phi k = 0.99k - \psi k + \ell_3 \iff
\phi = \psi
$. The density of player $3$ is: 
$1.485 - 1.5 \phi$ on $[x,m]$, $300 \phi/97$ on $[n,x+k]$, $300 \phi/103$ on $[y,p]$, and $297/194 - 300 \phi/194$ on $[q,y+k]$.
By setting $\phi = \psi = 0.25$, we obtain correct range for the density of player $3$. Let $\ell_3' = \ell_3 + 0.99k - 0.25k$ and $m_3' = m_3 + 0.5k$.

Now we can answer the query. Since $\ell_1 < \alpha < \ell_1 + 2k/3$, and the values of all the players on the interval $[x,m]$ have been set, find the point $z$ with the property that $V_1(x,z) = \alpha$ and the density is uniform for player $1$ on $[x,z]$. Then fit the answers for the other two players, proportional with their average density on $[x,z]$, and update $\mathcal{P}'$ to include the point $z$. \\

\noindent \emph{Case 2}: $\alpha \in \left(\ell_1 + 2k/3, \ell_1 + k\right)$. This time the interval $I'$ will be hidden on the left side of $x + 2k/3$.
Define
$
m = x + 2k/3 - 0.01k$ and $n = x + 2k/3$.
Set $I' = [m,n]$. Let 
$p = y + 0.97k/3$ and $q = y + k/3$.
Set $J' = [p,q]$. Update the collection of cut points to $\mathcal{P}' = \mathcal{P} \cup \{m,n,p,q\}$. Let the value of each player $i$ for the intervals $I'$ and $J'$ be exactly $0.01k$; again the densities remain uniform on $I'$ and $J'$. Set the values of player $1$ for the other intervals to 
$V_1(x,m) = 2k/3 - 0.01k, \; V_1(n,x+k) = k/3, \; V_1(y,p) = 0.97k/3$, and $V_1(q,y+k) = 2k/3.$ It can be verified that player $1$'s density is uniform on all the new intervals. Update $\ell_1' = \ell_1 + 2k/3 - 0.01k$ and $m_1' = m_1 + 2k/3$.

For players $2$ and $3$ we must find parameters $0 < \lambda, \mu, \phi, \psi < 0.99$ such that the matrix of valuations in Figure \ref{fig:partial_1b} is consistent with a partial rigid measure system.

\begin{figure}[h!]
\centering
\includegraphics[scale=0.85]{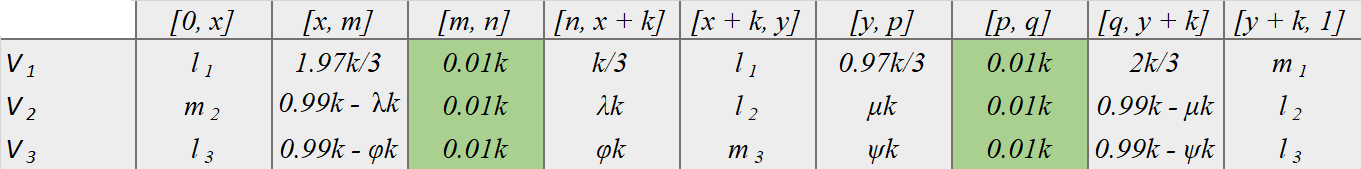}
\caption{Partial rigid measure system for $\alpha \in \left(\ell_1 + 2k/3, \ell_1 + k\right)$. The break points are $m = x + 2k/3 - 0.01k$, $n = x + 2k/3$, $p = y + 0.97k/3$, $q = y + k/3$; uniform densities on $[m,n]$ and $[p,q]$.}
\label{fig:partial_1b}
\end{figure}

\noindent For player $2$ we obtain from Case 1 that $\mu = 99/200 - \lambda/2$. The density of player $2$ is: $297/197 - 300 \lambda / 197$ on $[x,m]$,  $3 \lambda$ on $[n,x+k]$, $300 \mu / 97$ on 
$[y,p]$, and $297/200 - 3\mu/2$ on $[q,y+k]$.
Setting $\lambda = 0.25$ ensures the density on each of these intervals is in $(1/\sqrt{2}, \sqrt{2})$. Then $\mu = 0.37$. Update $\ell_2' = \ell_2 + 0.62k$ and $\mu_2' = m_2 + 0.74k$.

For player $3$, by symmetry with Case 1, we have $\phi = \psi$ and $0 < \phi < 0.99$.
The density of player $3$ is $297/197 - 300 \phi/197$ on $[x,m]$,$ 3 \phi$ on $[n,x+k]$, $300 \phi/103$ on $[y,p]$, and $297/200 - 3\phi/2$ on $[q,y+k]$.
Setting $\phi = \psi = 0.25$ ensures the density on each interval is in the $(1\sqrt{2}, \sqrt{2})$ range. Update $\ell_3' = \ell_3 + 0.74k$ and $m_3' = m_3 + 0.5k$.

We can now answer the query addressed to player $1$. The interval $[0,m]$ has the property that $V_1(0,m) = \ell_1 + 2k/3$, and so $\alpha > V_1(0,m)$. Thus we can return a point $z \in (n, x+k)$ with the property that player $1$'s density is uniform on $[n,z]$. Add $z$ to $\mathcal{P}'$ and report the answers of the other players for the piece $[n,z]$ in a way that is proportional to their average density on $[n,x+k]$. 

Thus if player $1$ receives a query falling inside interval $I$, we can find answers so that the new configuration is still a partial rigid measure system with the properties required by the lemma.
\end{proof}

\medskip

\noindent \textbf{Part b of Lemma \ref{lem:hide}}

\begin{proof} 
	Here the query is for player $2$, say $Cut_2(\alpha)$. We have two cases: \\
	
	\noindent \emph{Case 1}: $\alpha \in \left(m_2, m_2 + 2k/3\right]$. We will hide the interval at the right of the point $x + 2k/3$. Define 
	$m = x + 2k/3$ and $n = x + 2k/3 + 0.01k$.
	Set $I' = [m,n]$. Let 
	$
	p = y + k/3$ and $q = y + k/3 + 0.01 k
	$
	Set $J' = [p,q]$. Let the values of all the players be $0.01k$ in the intervals $I'$ and $J'$. Since the length of these intervals is exactly $0.01 k$, it follows that everyone's densities are uniform herein.
	
	Set the density of player $2$ uniform on each of the generated subintervals. Then $0.97k/3 + \ell_2 + k/3 = 1.97k/3 + \ell_2.$
	Thus for player $2$  the values are consistent with a partial rigid measure system. Set $\ell_2' = \ell_2 + 1.97k/3$ and $m_2' = m_2 + 2k/3$.
	
	We must now fit the valuations of players $1$ and $3$, which implies again finding parameters $0 < \lambda, \mu, \phi, \psi < 0.99$ so that the matrix of valuations is:
		\begin{figure}[h!]
	\centering
	\includegraphics[scale=0.85]{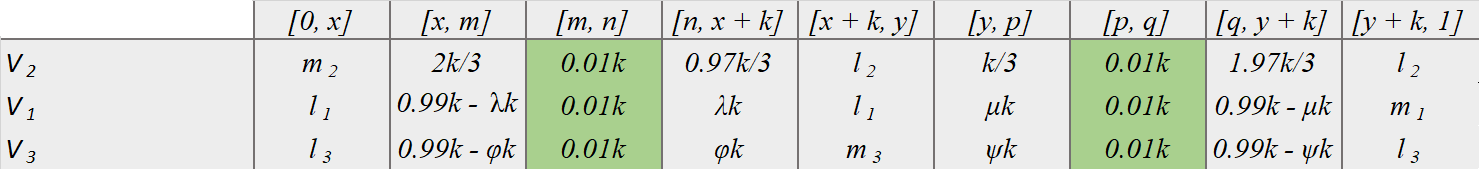}
	\label{fig:partial_2a}
\end{figure}

	For the valuation of player $1$ to be consistent with a partial rigid measure system, 
	$
	\ell_1 + 0.99k - \lambda k = \lambda k + \ell_1 + \mu_k \iff \mu = 0.99 - 2 \lambda
	$.
	The density of player $1$ must lie in $\left(1/\sqrt{2}, \sqrt{2}\right)$ and is given by: $297/200 - 3 \lambda/2$ on the interval $[x,m]$, $300 \lambda/97$ on $[n,x+k]$, $297/100 - 6 \lambda$ on $[y,p]$, and $600 \lambda/197$ on $[q,y+k]$.
	Set $\lambda = 0.26$
	and $\mu = 0.47$. Update $\ell_1' = \ell_1 + 0.73k$ and $m_1' = m_1 + 0.52k$.
	
	Finally let $\phi = \psi$. Player $3$'s density is: $297/200 - 3 \phi/2$ on $[x,m]$, $300 \phi/97$ on $[n,x+k]$, $3 \phi$ on $[y,p]$, and $(0.99k - \phi k)/(1.97k/3)$ on $[q,y+k]$.
	For $\phi = \psi = 0.25$, the density is in $ \left(1/\sqrt{2}, \sqrt{2}\right)$. Then $\ell_3' = \ell_3 + 0.75k$ and $m_3' = m_3 + 0.5k$. \\
	
	\noindent \emph{Case 2}: $\alpha \in \left(m_2 + 2k/3, m_2 + k\right)$. This time we will hide the interval to the left of the point $x + 2k/3$.
	Define 
	$
	m = x + 2k/3 - 0.01k$, $n = x + 2k/3
	$,$
	p = y + 1.97k/6$, and $q = y + 2.03k/6
	$.
	Set $I' = [m,n]$ and $J' = [p,q]$ and let the densities of all the players be uniform on $I'$ and $J'$. It can be verified that $k/3 + \ell_2 + 1.97k/6 = 3.97k/6 + \ell_2$; moreover its density is uniform on all the new intervals. Update $m_2' = m_2 + 2k/3 - 0.01k$ and $\ell_2' = \ell_2 + 3.97k/6$.
	
	The goal is to fit the valuations of players $1$ and $3$, which implies computing values $0 < \lambda, \mu, \phi, \psi < 0.99$ so that the matrix of valuations is:
			\begin{figure}[h!]
		\centering
		\includegraphics[scale=0.85]{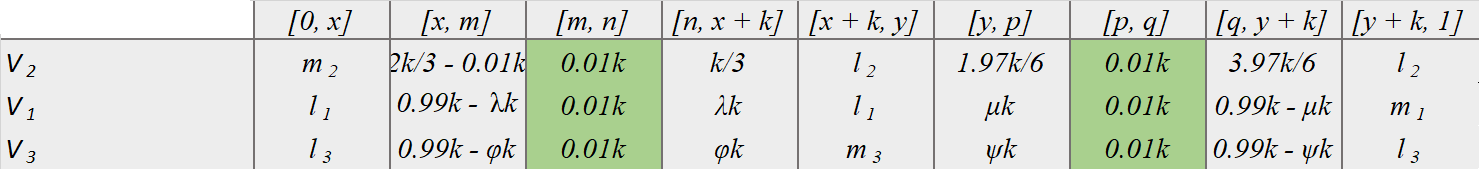}
		\label{fig:partial_2b}
	\end{figure}

	For player $1$, we have $\mu = 0.99 - 2 \lambda$ and density: $(0.99 - \lambda)/(197/300)$ on the interval $[x,m]$, $\lambda k/(k/3) = 3 \lambda $ on $[n,x+k]$, $\mu k/(1.97k/6)= 600/197 \cdot \left(0.99 - 2 \lambda\right)$ on $[y,p]$, and  $(0.99k - \mu k)/(3.97k/6) = 1200 \lambda/297$ on $[q,y+k]$.
	Setting $\lambda = 0.265$ and $\mu = 0.46$ gives the required density bounds. Update $\ell_1' = \ell_1 + 0.725 k$ and $m_1' = m_1 + 0.53k$.
	
	For player $3$ we have $\phi = \psi$ and the densities: $(0.99 - \phi)/(1.97/3)$ on $[x,m]$, $\phi k/(k/3) = 3 \phi $ on $[n,x+k]$, $
	\phi k/(1.97k/6)  = 600 \phi/197$ on $[y,p]$, and $(0.99 - \phi)/(397/600)$ on $[q,y+k]$.
	For $\phi = \psi = 0.25$, the density is in the required range. Set $\ell_3' = \ell_3 + 0.74k$ and $m_3' = m_3 + 0.5k$.
	
	Similarly to part I of the proof, the query asked by the protocol falls outside the new intervals $I'$ and $J'$, and so it can be answered uniformly for player $2$ and proportionally to the weight on the respective interval for players $1$ and $3$. This completes the second part of the proof.
\end{proof}

\newpage
\noindent \textbf{Part c of Lemma \ref{lem:hide}}

\begin{proof} 
	We analyze the situation where the protocol $\mathcal{A}$ addresses a query to player $3$. Let the query be $Cut_3(\alpha)$ and consider two cases: \\
	
	\noindent \emph{Case 1}: $\alpha \in \left(\ell_3, \ell_3 + 2k/3\right]$. We will hide the interval at the right of the point $x + 2k/3$. Define 
	$
	m = x + 2k/3$,$n = x + 2k/3 + 0.01k
	$, $p = y + k/3 - 0.01k$, and $q = y + k/3$.
	
	Set $I' = [m,n]$ and $J' = [p,q]$. Let the value of each player be $0.01k$ for the intervals $I'$ and $J'$. Again all densities are uniform on $I'$ and $J'$. Set the density of player $3$ uniform on all the new intervals and update $\ell_3' = \ell_3 + 2k/3$ and $m_3' = m_3 + 1.94k/3$. The goal is to find appropriate valuations for players $1$ and $2$, that is, appropriate values of the parameters $0 < \lambda, \mu, \phi, \psi < 0.99$ consistent with the next matrix.
			\begin{figure}[h!]
		\centering
		\includegraphics[scale=0.85]{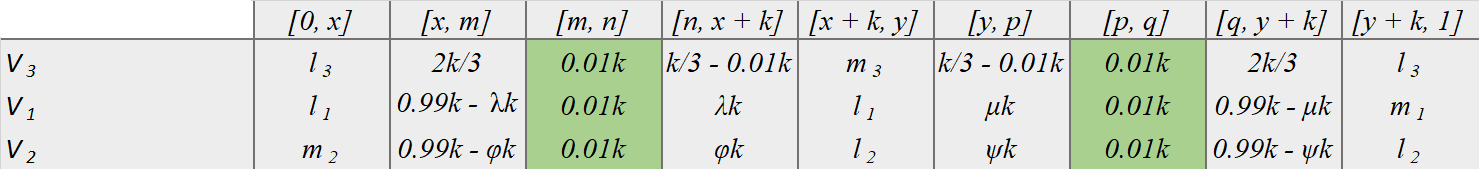}
		\label{fig:partial_3a}
	\end{figure}
	
	Note that 
	$
	\ell_1 + 0.99k - \lambda k = \lambda k + \ell_1 + \mu_k$ if and only if $ \mu = 0.99 - 2 \lambda
	$. Player $1$'s density is $297/200 - 3 \lambda/2$ on $[x,m]$, $300 \lambda/97$ on $[n,x+k]$, $300/97 \cdot \left( 0.99- 2 \lambda \right)$ on $[y,p]$, and $3 \lambda$ on $[q,y+k]$.
	Setting $\lambda = 0.267$ ensures the required density bound. Then $\mu = 0.456$. Update $\ell_1' =\ell_1 + 0.723k$ and $m_1' = m_1 + 0.534k$.
	
	For player $2$,
	$
	\phi k + \ell_2 + \psi k = 0.99k - \psi k + \ell_2$ if and only if $ \psi = 99/200 - \phi/2
	$.
	Player $2$'s density is $297/200 - 3 \phi/2$ on $[x,m]$, $300 \phi/97$ on $[n,x+k]$, $300/97 \cdot \left( 99/200 - \phi/2\right)$ on $[y,p]$, and $297/400 +
	3 \phi/4 $ on $[q,y+k]$.
	Setting $\phi = 0.25$ works. Then $\psi = 0.37$. Update $\ell_2' = \ell_2 + 0.62k$ and $m_2' = m_2 + 0.74k$.\\
	
	\noindent \emph{Case 2}: $\alpha \in \left(\ell_3 + 2k/3, \ell_3 + k\right)$. We will hide the interval at the left of the point
 $x + 2k/3$. Define 
	$
	m = x + 2k/3 - 0.01k$, $n = x + 2k/3
	$, $p = y + k/3$, and $q = y + k/3 + 0.01k$.
	Set $I' = [m,n]$, $J' = [p,q]$.
	
	Let the values of all the players be $0.01k$ for the entire intervals $I'$ and $J'$. Again all densities are uniform on $I'$ and $J'$. Set the density of player $3$ uniform on all the new intervals and update $\ell_3' = \ell_3 + 2k/3 - 0.01k$ and $m_3' = m_3 + 2k/6$. The goal is to find the appropriate values for players $1$ and $2$, or equivalently, $0 < \lambda, \mu, \phi, \psi < 0.99$ as in the next matrix.
			\begin{figure}[h!]
		\centering
		\includegraphics[scale=0.85]{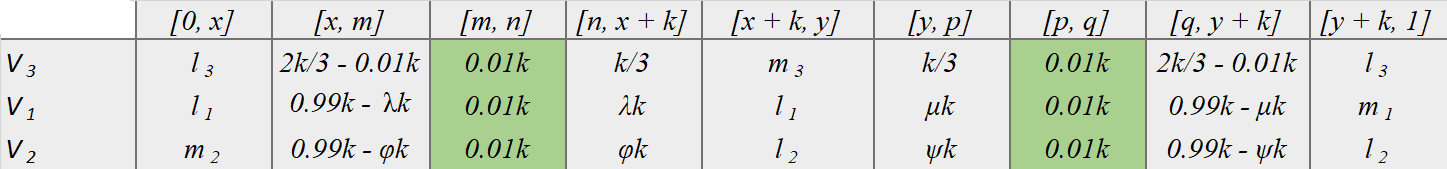}
		\label{fig:partial_3b}
	\end{figure}

	For player $1$ we get $\mu = 0.99 - 2 \lambda$ and density $(297 - 300 \lambda)/197$ on $[x,m]$, $3 \lambda $ on $[n,x+k]$, $
	297/100 - 6 \lambda$ on $[y,p]$, and $600 \lambda/197$ on $[q,y+k]$.
	Setting $\lambda = 0.26$ ensures $2$'s density is in the required range. Then $\mu = 0.47$. Update $\ell_1' = \ell_1 + 0.73k$ and $m_1' = m_1 + 0.52k$.
	
	Finally, we must fit the answers of player $2$. We get $\psi = 99/200 - \phi/2$ and density $(297 - 300 \phi)/197$ on 
	$[x,m]$, $3 \phi$ on 
	$[n,x+k]$, $297/200- 3 \phi/2$ on $[y,p]$, and $(99/200 + \phi/2)/(197/300)$ on $[q,y+k]$.
	Let $\phi = 0.25$. Then $\psi = 0.37$. Update $\ell_2' = \ell_2 + 0.62k$ and $m_2' = m_2 + 0.74k$.
	
	In both cases the query to player $3$ falls outside the interval $I'$, so the query can be answered for all the players in a way that is proportional to their density on the respective interval.
\end{proof}

We can now prove the lower bound.

\begin{proof}(of Theorem \ref{thm:3lb})
	Set the initial configuration to a partial rigid measure system as in the next table, where $k = 0.01$ and $\ell_i = 0.35$,
	$m_i=0.28$ for each player $i$. The initial cuts are at $0.34, 0.35, 0.67, 0.68$, with $I = [0.34,0.35]$ and $J = [0.67,0.68]$. It can be verified that these have the required densities.
	
	\begin{table}[h!]
			\label{tab:3eflb_initial}
		\begin{center}
			\begin{tabular}{l | c | c | c | c | c  r}
				& $[0,0.34]$ & $[0.34, 0.35]$ & $[0.35, 0.67]$ & $[0.67,0.68]$ & $[0.68,1]$\\ \hline
				$V_1$ & 0.35 & 0.01 & 0.35 & 0.01 & 0.28 \\
				$V_2$ & 0.28 & 0.01 & 0.35 & 0.01 & 0.35 \\
				$V_3$ & 0.35 & 0.01 & 0.28 & 0.01 & 0.35 \\
			\end{tabular}
		\end{center}
	\caption{Initial configuration for envy-free lower bound.}
	\end{table}
	
	By iteratively applying Lemma \ref{lem:hide} with every Cut query, a protocol discovers with every cut query a partial rigid system, where the intervals $I$ and $J$ always have uniform density, and their length cannot be diminished by a factor larger than 100 in each iteration.
	By Lemma \ref{lem:density}, if a protocol encounters a partial rigid measure system for which there are no cuts inside $I$ and $J$, where $|I| = |J| = k$, then any configuration attainable with the existing cuts leads to envy of at least $0.01k$.
	To get $\epsilon$-envy, we need $k/100 < \epsilon$, and so the number of queries is $\Omega\left( \log{\frac{1}{\epsilon}}\right)$. 
\end{proof}

\medskip

\noindent \textbf{Theorem} \ref{thm:lb_envyfree_any} (restated)
\emph{Computing a connected $\epsilon$-envy-free allocation for $n\geq 3$ players requires $\Omega \left(\log{\frac{1}{\epsilon}}\right)$ queries.}
\begin{proof}
	For ease of exposition, we assume the number of players is divisible by $3$. Let $K = n/3$ and divide the players in disjoint sets of three, such that each group $S_i = \{3i-2, 3i-1, 3i\}$, for $i \in \{1, \ldots, K\}$, the players in $S_i$ are only interested in the piece
	$J_i = [(i-1)/K,i/K]$, and their valuations form a generalized rigid measure system on $J_i$ with higher densities, such that $K/\sqrt{2} < v_j(x) < K \sqrt{2}$, for each player $j \in S_i$ and $x \in J_i$. By applying Lemma \ref{lem:density} for $a = \sqrt{2}/K$ and $b = K \sqrt{2}$, we get that for any two disjoint pieces $S_1, S_2 \subset J_i$, if the valuation of player $i$ satisfies $V_i(S_1) \geq 2 V_i(S_2)$, then the valuation of another player $j$ in the same group as $i$ satisfies $2 V_j(S_1) \geq V_j(S_2)$. Thus Lemma \ref{lem:far_ef3} still applies for each group $S_i$ and interval $J_i$.
	The queries are handled as follows. Whenever a player $j \in S_i$ receives a cut query outside the piece they are interested in, the answer is given so as to not introduce new cut points. On the other hand, if player $j$ receives a cut query in the interval $S_i$, the answer is given as in the construction of Theorem \ref{thm:3lb}, where the points are scaled to reside in $J_i$. 
	
	Consider the final allocation computed by an RW protocol $\mathcal{A}$, and let $x_i$ be the cut point that separates group $S_i$ from group $S_{i+1}$.
	If $x_1 \geq 1/K$, then the allocation $A$ is $\epsilon$-envy-free among the players in $S_1$ if and only if the algorithm discovers the generalized rigid measure system on $[0,1/K]$, since the piece $[1/K, x_1]$ is worth zero to all the players in $S_1$. Otherwise, $x_1 \leq 1/K$. If $x_2 \geq 2/K$, then for the allocation to be $\epsilon$-envy-free among the players in $S_2$, $\mathcal{A}$ must discover (within $\epsilon$ error) the measure system among the players in $S_2$ on interval $J_2$. Otherwise, we have $x_2 \leq 2/K$. Iteratively, we either find an interval $i \leq K-1$ where the algorithm must solve the problem where the solution is unique among the group $S_i$, or reach $i = K$ with $x_{K-1} \leq (K-1)/K$, case in which $\mathcal{A}$ must find the measure system among the players $S_K$ on $J_K$. Since finding an $\epsilon$-envy-free allocation on $J_i$ among $S_i$ requires $\Omega\left(\log{\frac{1}{\epsilon}}\right)$ queries for all $i \in \{1, \ldots, K\}$, which implies the required lower bound. The cases where the number of players is of the form $n = 3K+1$ and $n = 3K+2$ can be solved by extending the lemmas for three players to four and five players, respectively, by observing that the cases that appear in both Lemma \ref{lem:far_ef3} and Lemma \ref{lem:hide} rely on a number of combinations that are independent of the number of players (in the case of Lemma \ref{lem:far_ef3}, whether a player gets allocated a piece with two columns, one, or none, while in the case of Lemma \ref{lem:hide}, whether the cut falls in an interval worth $m_i$ or $\ell_i$ to a player, and whether the new interval maintained is hidden on the left or right side of the cut).
	Then when $n = 3K+1$, the group $S_1 = \{1,2,3,4\}$, while when $n = 3K+2$, $S_1 = \{1,2,3,4,5\}$.
\end{proof}%

The lower bound of $\Omega \left(\log{\frac{1}{\epsilon}}\right)$ is in fact tight for the class of generalized rigid measure systems, for any (fixed) number of players.

\medskip

\noindent \textbf{Theorem \ref{thm:rms_sim}} (restated).
\emph{
	For the class of generalized rigid measure systems, a connected $\epsilon$-envy-free allocation  can be computed with $O\left( \log{\frac{1}{\epsilon}}\right)$ queries for any fixed number $n$ of players.}
\begin{proof}
	The following moving knife protocol computes an exact envy-free allocation for any number of players with valuations given by a generalized rigid measure system.
		\begin{description} 	
\item[$\bullet$] Let $x_{\alpha} = Cut_{1}(1/n)$ and $x_{\omega} = Cut_{1}(1/2)$. 
	\item[$\bullet$] Player $1$ continuously moves a knife from $x_{\alpha}$ to $x_{\omega}$. For each position $x_1$ of the first knife:
\begin{itemize}
	\item Player $1$ positions a second knife at $x_2 = Cut_1(2 \cdot t_1)$, where $t_1 = V_1(0,x_1)$. Define $s_1 = 1 - 2 \cdot t_1$. 
	\item For each $k = 3 \ldots n-1$, player $1$ positions its $k$-th knife at $x_k = Cut_1(2 \cdot t_1 + (k-2) \cdot s_1)$ if possible, and at $x_k=1$ otherwise.
\end{itemize}		
\item[$\bullet$]If a connected envy-free allocation can be obtained with cuts $x_1, \ldots, x_{n-1}$, player $1$ shouts stop and the cake is divided according to the envy-free allocation.
\end{description}
Since player $1$ goes over all possible choices of $t_1$, if the input is a generalized RMS, there will exist a choice for which the partition demarcated by player $1$ reveals the correct $s_i,t_i$ parameters of all the players. 
	This moving knife protocol can be simulated approximately in the RW model when $n$ is fixed by doing binary search on the parameter $t_1$ for player $1$ and checking for each choice if the tentative allocation is $\epsilon$-envy-free, so computing envy-free allocations for generalized rigid measure systems can be solved with $O \left(\log{\frac{1}{\epsilon}}\right)$ queries.
\end{proof}

\section{Perfect Allocations} \label{apdx:perfect}

In this section we provide the lower and upper bounds for computing perfect allocations for two players. An upper bound of $O(\log{\epsilon^{-1}})$ for this problem can be obtained simulating Austin's moving procedure in the RW model.

\medskip

\noindent \textbf{Theorem \ref{thm:perfect_ub}} (restated)
\emph{An $\epsilon$-perfect allocation for two players can be computed with $O(\log{\frac{1}{\epsilon}})$ queries.}
\begin{proof}
	The main idea is to simulate Austin's moving knife procedure in the RW query model, searching first by the valuation of the first player.
	\medskip
	\begin{quote}
		\textbf{Austin's procedure}:
		\emph{A referee slowly moves a knife from left to right across the cake. At any
			point, a player can call stop.
			When a player called, a second knife is placed at the left edge of the cake.
			The player that shouted stop -- say 1 -- then moves both knives parallel
			to each other.
			While the two knives are moving, player 2 can call stop at any time.
			After 2 called stop, a randomly selected player gets the portion between player
			1's knives, while the other one gets the two outside pieces.
		}
	\end{quote}
	
	In the RW model, we start by asking both players to reveal the midpoint of the cake. If the midpoints coincide within $\epsilon$, we reached an $\epsilon$-perfect allocation. Otherwise, without loss of generality, assume the rightmost midpoint is reported by player $1$ (the case of player $2$ is similar); denote this cut by $z$. Then $V_1(0,z) = V_1(z,1) = 1/2$, while $V_2(0,z) > 1/2$.
	Initialize $w = 1$. 
	
	\begin{figure}[h]
		\centering
		\includegraphics[scale=0.7]{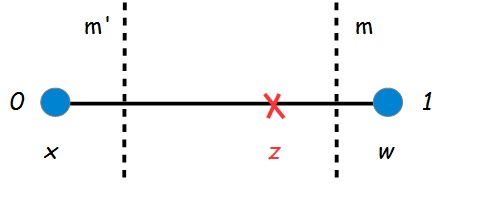}
		\caption{Approximate computation of $\epsilon$-perfect partition. Maintain two points $z$ and $w$, such that the rightmost cut must be situated in the interval $[z,w]$}
		\label{fig:austin}
	\end{figure}
	
	In the RW model we maintain the following invariant:
	
	\begin{enumerate}[(a)]
		\item There exist cut points $0 \leq z < w \leq 1$, such that the piece $[a,z]$ for which $V_1(a,z) = 1/2$ is worth strictly more than $1/2 + \epsilon$ to player $2$, while the piece $[b,w]$ for which $V_1(b,w) = 1/2$ is worth strictly less than $1/2 - \epsilon$ to player $2$.
	\end{enumerate}
	
	Iteratively, given points $w,z$ satisfying property $(a)$, such that $V_1(w,z) \geq \epsilon$, ask player $1$ a Cut query to determine the midpoint $m$ of $[w,z]$, i.e. such that $V_1(z,m) = V_2(m,w)$, and then find through another Cut query the point $m'$ for which $V_1(m',m) = 1/2$. If there exists an $\epsilon$-perfect allocation with cuts $m$ and $m'$ then output it. Otherwise, if $V_2(m',m) > 1/2$, set $z = m$. Else, it must be the case that $V_2(m',m) < 1/2$; set $w = m$.
	
	Each step requires a constant number of queries, and the number of iterations is $O\left( \log{\frac{1}{\epsilon}} \right)$. 
	
	If the interval $[w,z]$ is worth strictly less than $\epsilon$ to player $1$, but an $\epsilon$-perfect allocation has not been found, let $a$ be such that $V_1(a,w) = 1/2$. Any partition with cuts $a$ and $x \in [w,z]$ is $\epsilon$-perfect for player $1$. Then we can search for $ x \in [w,z]$ using the valuation of player $2$, halving the interval $[w,z]$ in each round. A solution is guaranteed to exist and the maximum number of queries addressed to player $2$ is 
	$O\left( \log{\frac{1}{\epsilon}} \right)$.
\end{proof}

As we show next, this bound is optimal.

\medskip

\noindent \textbf{Theorem \ref{thm:perfect_lb}} (restated)
\emph{Computing an $\epsilon$-perfect allocation with the minimum number of cuts for two players requires $\Omega\left(\log\frac{1}{\epsilon}\right)$ queries.} \\

We prove the lower bound by maintaining throughout the execution of any protocol two intervals in which the cuts of the perfect allocation must be situated, such that the distance to a perfect partition cannot decrease too much with any cut query.

\medskip

\noindent \textbf{Lemma} \ref{lem:perfect_induction} (restated) 
\emph{Let $\epsilon > 0$. Consider a two player instance consistent with Figure \ref{fig:perfect_main2},
	where }
\begin{description}
	\item[\hspace{4mm}\emph{1.}] $\epsilon < 0.001 \min\{a,d\}$.
	\item[\hspace{4mm}\emph{2.}] any allocation obtained with cuts $0 < k < \ell < 1$ that is $\epsilon$-perfect from the point of view of player $1$ is worth to player $2$ less than $0.5 - d/100-\epsilon$ when $k < x$ and more than $0.5 + d/100 + \epsilon$ when $k > x+a$.
	\item[\hspace{4mm}\emph{3.}] there are no cut points inside the intervals $I = [x,x+a]$ and $J = [y,y+a]$.
\end{description}
\begin{figure}[h!]
	\centering
	\includegraphics[scale=0.85]{perfect_main.png}
	\caption{Construction for perfect. Player $1$ has uniform density everywhere; $y= x+0.5$, $0 < a,d \leq 0.1$, $x,b,c,e > 0$, $x + a + b = 0.5$ and $c + 2d + e = 0.5$.}
	\label{fig:perfect_main2}
\end{figure}
\emph{
	Then a new query can be handled so that the valuations remain consistent with Figure \ref{fig:perfect_main2}, such that condition $2$ holds with respect to new intervals $I' = [x',x'+a'] \subseteq I, \;J' = [y',y'+a'] \subseteq J$ and parameters $x',a'=a/100$, $d'=d/100$, and the intervals $I'$ and $J'$ have no cuts inside.}
\begin{proof}
	The valuations can be assumed to be made public outside $I,J$, so queries that fall outside these intervals are handled by the conditions in the lemma. Otherwise, suppose player $2$ receives a query $Cut_2(\alpha)$ in one of the intervals $I,J$. The scenario where player $1$ receives the query will follow from the analysis for player $2$. The new intervals maintained will be $I' = [m,n]$, $J' =[p,q]$, where $m,n,p,q$ are defined depending on the query.
Set $a' = a/100$, $d'=d/8$, and let $0 < k < \ell < 1$ be the defining cuts of a partition that is $\epsilon$-perfect from the point of view of player $1$. Then $0.5 - \epsilon \leq \ell - k \leq 0.5 + \epsilon$. We show that if $k,\ell \not \in I',J'$, then $[k,\ell]$ is either too small or too large for player $2$. 	
		Consider the first scenario, where player $2$'s answer is in the interval $I$. 
	
	\medskip
	
	\noindent \emph{Case 1.a}. $\alpha \in (c,c+d/2]$. Let $m = x+a/2$, $n=x+51a/100$, $p=y+a/2$, $q=y+51a/100$, so the valuations are consistent with the next matrix.

		\begin{figure}[h!]
		\centering
		\includegraphics[scale=0.85]{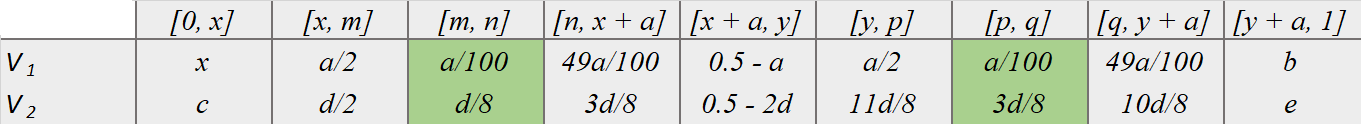}
		\label{fig:perfect_1a}
	\end{figure}
	
	Next we show that if $k \leq m$, then the piece $[k,\ell]$ is worth less than $0.5 - d'/100 - \epsilon$ to player $2$, while if $k > n$, then $[k,\ell]$ is worth more than $0.5 + d'/100 + \epsilon$ to player $2$. If $k < x-\epsilon$, then $\ell \leq y$, and the claim holds by the assumption in the lemma's statement. If $k > x+a+\epsilon$, then $\ell \geq y+a$, and the lemma holds. Otherwise, we have a few cases:
	\begin{itemize}
		\item $k \in [x-\epsilon,x]$. Then $\ell \leq y + \epsilon$. Let $w = V_2(x-\epsilon, x)$. Since $V_2(x-\epsilon,y) = w + V_2(x,y) = w + 0.5-d< 0.5 - d/100 - \epsilon$, we have $w < 99d/100 - \epsilon$. Since $\epsilon < a$, we get
		$$V_2(k, \ell) \leq w + V_2(x,y) + V_2(y,y+\epsilon) \leq (99d/100 - \epsilon)+ (0.5 - d) + \epsilon \cdot \frac{11d/8}{a/2}
		<  0.5 - d'/100 - \epsilon$$
		
		\item $k \in [x,m]$. Let $\delta = k - x \leq a/2$. Then $\ell \leq \min\{p, y + \delta + \epsilon\}$. When $\epsilon < \min\{a,d\}$, we have 
		$$
		V_2(k,\ell) \leq (a/2-\delta) \cdot \frac{d}{a} + 0.5 - 3d/2 + (\delta + \epsilon) \cdot \frac{11d/8}{a/2} = 0.5 - d + \delta \cdot \frac{7d}{4a} \leq 0.5 - d/8 < 0.5 - d'/100 - \epsilon
		$$
		\item $k \in [n,x+a]$. Let $\delta = k - n$. Recall $k+0.5-\epsilon \leq \ell \leq k+0.5+\epsilon$. If $\epsilon > \delta$, then
		\begin{eqnarray*}
			V_2(k,\ell) & \geq & V_2(k,x+a) + V_2(x+a,q)
			\geq \left(3d/8 - \delta \cdot \frac{3d/8}{49a/100}\right) + (0.5 - d/4)  \\
			& \geq & 0.5 + d/8 - \epsilon \cdot \frac{3d/8}{49a/100}
			> 0.5 + d'/100 + \epsilon
		\end{eqnarray*}
		
		Otherwise, $\delta > \epsilon$. We have 
		\begin{eqnarray*}
			V_2(k,\ell) & \geq & V_2(k,x+a) + V_2(x+a,q) + V_2(q,q+\delta - \epsilon) \\
			& \geq & \left(3d/8 - \delta \cdot \frac{3d/8}{49a/100}\right) + (0.5 - d/4) + (\delta - \epsilon) \cdot \frac{10d/8}{49a/100} \\
			& > & 0.5 + d/8 - \epsilon \cdot \frac{10d/8}{49a/100}
			> 0.5 + d'/100 + \epsilon
		\end{eqnarray*}
		\item $k \in [x+a,x+a+\epsilon]$. If $\ell > y+a$, the claim holds. Else, $\ell \in [q,y+a]$. Let $w = V_2(x+a,x+a+\epsilon)$. Then $V_2(x+a+\epsilon,y+a) = 0.5 + d - w > 0.5 + d/100 + \epsilon$, so $w < 99d/100 - \epsilon$. We have $$V_2(k,\ell) \geq V_2(x+a+\epsilon,y+a-\epsilon) \geq 0.5 + d - w - \epsilon \cdot  (10d/8)/(49a/100) > 0.5 + d'/100 + \epsilon$$
	\end{itemize}

	\noindent \emph{Case 1.b}. $\alpha \in (c+d/2,c+d)$. Let $m = x+49a/100$, $n=x+a/2$, $p=y+49a/100$, $q=y+a/2$. Let the valuations outside $I',J'$ be known and consistent with the next matrix.

	\begin{figure}[h!]
	\centering
	\includegraphics[scale=0.85]{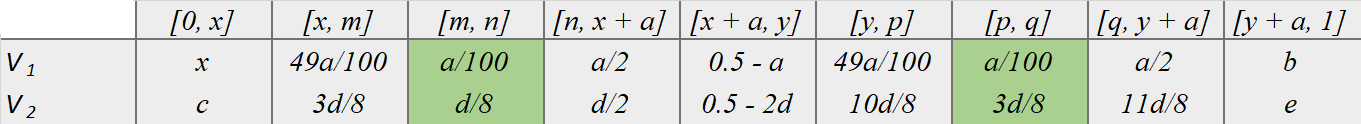}
	\label{fig:perfect_1b}
\end{figure}

	We show the required discrepancy holds for the piece $[k,\ell]$.  If $k < x - \epsilon$ or $k > x+a+\epsilon$, the claim follows by the lemma's condition. The remaining cases are:
	\begin{itemize}
		\item $k \in [x-\epsilon,x]$. We have $w = V_2(x-\epsilon,x) < 99d/100 - \epsilon$. Since $\epsilon < 0.001 a$, we have $V_2(k,\ell) \leq w + V_2(x,y) + \epsilon \cdot (10d/8)/(49a/100) < 0.5 - d'/100 - \epsilon$.
		\item $k \in [x,m]$. Let $\delta = k -x$. Note $V_2(m,y) = 0.5 - 11d/8$. Since $\delta \leq 49a/100 $ and $\epsilon < 0.01 \min\{a,d\}$, we get $$V_2(k,\ell) \leq (49a/100 -\delta) \cdot \frac{3d/8}{49a/100} + 0.5- 11d/8 + (\delta + \epsilon) \cdot \frac{10d/8}{49a/100} < 0.5 - d'/100 - \epsilon 
		$$
		\item $k \in [n,x+a]$. Let $\delta = k -n$. We have $\ell \geq k + 0.5 - \epsilon$ and $\epsilon < 0.01 \min\{a,d\}$. If $\delta \leq \epsilon$, then $$V_2(k,\ell) \geq V_2(n,q) - \epsilon \cdot d/a = 0.5 + d/8 - \epsilon \cdot d/a > 0.5 + d'/100 + \epsilon.$$ Otherwise, $\delta > \epsilon$, thus $\ell \in [q,y+a]$. Since $V_2(x+a,q) = 0.5 - 3d/8$, we get $$
		V_2(k,\ell) \geq d/2 - \delta \cdot d/a + 0.5 - 3d/8 + (\delta - \epsilon) \cdot (11d/8)/(a/2) > 0.5 + d'/100 + \epsilon.
		$$
		\item $k \in [x+a,x+a+\epsilon]$. Using that $w = V_2(x+a,x+a+\epsilon) < 99d/100 - \epsilon$ and $\epsilon < 0.01a$, we get $V_2(k,\ell) \geq 0.5 + d - w - \epsilon \cdot (11d/8)/(a/2) > 0.5 + d'/100 + \epsilon$.
	\end{itemize}
	
	The second scenario, where the answer of player $2$ falls in the interval $J$, has two subcases: 
	
	\medskip
	
	\noindent \emph{Case 2.a}. $\alpha \in (0.5+c-d,0.5+c+d/2]$. Let $m = x+a/2$, $n = x+51a/100$, $p = y + a/2$, $q = y + 51a/100$. Set the valuations consistent with the next matrix.

\begin{figure}[h!]
	\centering
	\includegraphics[scale=0.85]{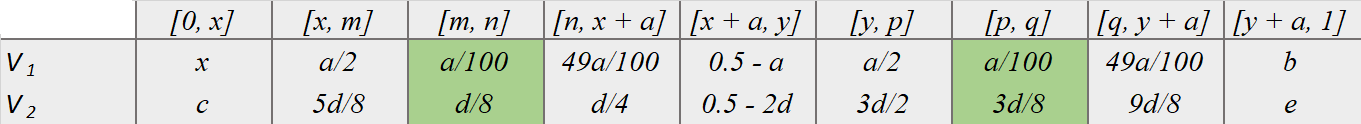}
	\label{fig:perfect_2a}
\end{figure}
	
	If $k < x - \epsilon$ or $k > x+a+\epsilon$, the discrepancy between the valuations of the players for $[k,\ell]$ holds by the lemma's statement. If $k \in [x-\epsilon,x]$, note the change in the density of player $2$ on the interval $[y,p]$ compared to Case 1.a is constant, thus a similar argument works when $\epsilon < 0.001 a$. If $k \in [x+a,x+a+\epsilon]$, the claim also follows from Case 1.a, where the interval $[q,y+a]$ had the same length and higher value density for player $2$ than here. The remaining cases are:
	\begin{itemize}
		\item $k \in [x,m]$. Let $\delta = k - x \leq a/2$. Then $V_2(k,\ell) \leq 0.5 - d + \delta \cdot 7d/(4a) \leq 0.5 - d/8 < 0.5 - d'/100 - \epsilon$.
		\item $k \in [n,x+a]$. Let $\delta = k-n$. If $\delta \leq \epsilon$, the claim follows as in Case 1.a. If $\delta > \epsilon$, then $V_2(k,\ell) \geq 0.5 + d/8 - \epsilon \cdot (9d/8)/(49a/100) > 0.5 + d'/100 + \epsilon$.
	\end{itemize} 
	
	\medskip
	
	\noindent \emph{Case 2.b}. $\alpha \in (0.5+c+d/2,0.5+c+2d)$. Let $m = x+49a/100$, $n = x+a/2$, $p= y+49a/100$, $q = y +a/2$, with valuations as in the matrix on the next page.
	If $k \leq x$, $k\in [x+a,x+a+\epsilon]$, or $k > x+a+\epsilon$, the claim follows as in the previous cases. If 
	\begin{itemize}
		\item $k \in [x,m]$. Let $\delta = k-x \leq 49a/100$. Then $V_2(k,\ell) \leq 0.5 - d - \delta \cdot (d/4)/(49a/100) + (\delta + \epsilon) \cdot (9d/8)/(49a/100) \leq 0.5 -d/8 + \epsilon \cdot (9d/8)/(49a/100) \leq 0.5 - d'/100 - \epsilon$.
		\item $k \in [n,x+a]$. Let $\delta = k-n$. If $\delta \leq \epsilon$ the claim is as in Case 1.b. If $\delta > \epsilon$, $V_2(k,\ell) \geq 0.5 + d/8 + \delta \cdot 290d / (8a) - \epsilon \cdot 300 d / (8a) > 0.5 + d'/100 + \epsilon$.
\begin{figure}[h!]
	\centering
	\includegraphics[scale=0.85]{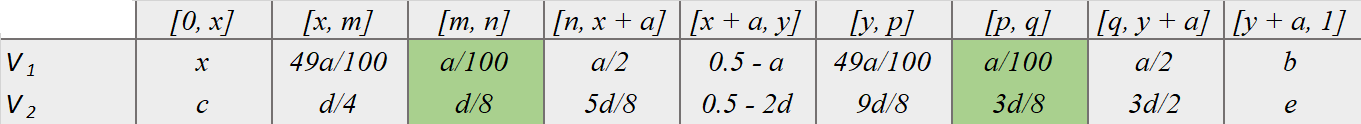}
	\label{fig:perfect_2b}
\end{figure}

	\end{itemize}
	
	Thus each cut query can be answered so that the new partition is still $d'/100$ far from perfect whenever player $1$ believes the middle piece is almost perfect, where the values of $a'$ and $d'$ has been reduced by a constant factor. This completes the proof of the lemma.
\end{proof}

We can now prove the lower bound for perfect allocations.

\medskip

\begin{proof} (of Theorem \ref{thm:perfect_lb})
	Let the initial configuration be defined as follows, with initial parameters $a=d=0.1$, $x=0.2$, $y=0.7$, $b =0.2$, $c=e=0.15$, and intervals $I = [0.2,0.3]$ and $J = [0.7,0.8]$.
	\begin{table}[h!]
		\label{tab:perfect_initial}
		\centering
		\begin{center}
			\begin{tabular}{l | c | c | c | c | c}
				& $[0,0.2]$ & $[0.2,0.3]$ & $[0.3,0.7]$ & $[0.7,0.8]$ & $[0.8,1]$ \\ \hline
				$V_1$ & $0.2$ & $0.1$ & $0.4$ & $0.1$ & $0.2$ \\
				$V_2$ & $0.15$ & $0.1$ & $0.3$ & $0.3$ & $0.15$ \\
			\end{tabular}
		\end{center}
	\end{table}
	Consider any partition $A$ defined by cut points $0 < k < \ell < 1$. Whenever $\epsilon < 0.3$, if $A$ is $\epsilon$ perfect from the point of view of player $1$, then the middle piece is worth 
	\begin{itemize}
		\item less than $0.5 - d/100 - \epsilon$ for player $2$ when $k \leq 0.2$
		\item more than $0.5 + d/100 + \epsilon$ for player $2$ when $k \geq 0.3$.
	\end{itemize}
	
	By iteratively applying Lemma~\ref{lem:perfect_induction} with every cut query received, we obtain that no $\epsilon$-perfect partition can be found as long as $\epsilon < 0.001 \min \{a,d\}$, where the $a$ and $d$ are the values of players $1,2$ for the intervals $I,J$ maintained throughout execution. Since $a,d$ get reduced by a factor of at most $100$ in every iteration, the number of rounds is $\Omega \left( \log{\frac{1}{\epsilon}}\right)$. 
\end{proof}

\section{Moving Knife Protocols} \label{app:movingknife}

A moving knife protocol may have a finite number of ``steps'' where each ``step'' is one of the following: a Cut query, an Eval query, or a Moving Knife step. A moving knife step contains several knives that move continuously across the cake as time passes, as well as several ``triggers'', which are functions of the positions of the knives and become zero when a target configuration has been reached.

\begin{definition} {\bf (An RW Moving Knife Step)}  \label{def:knife}
	There are a constant number $K$ of {\em Devices} some of which have a position on the cake
	and are called 
	{\em Knives} and others can have arbitrary real values and may be called {\em Triggers}.  
	The devices are numbered $1 \ldots K$ and 
	each device $j$ is controlled by a player $i_j$ and has a real value $x_j$ that changes continuously as {\em time} proceeds from $\alpha$ to $\omega$, where $0 \leq \alpha \leq \omega \leq 1$.
	Thus the value of each knife $j$ (i.e. its position on the cake) is given by some function 
	$x_j(t) \in [0,1]$, while the value of each trigger $j$
	is given by a function $x_j(t) \in \Re$.

	The first device is a knife with position equal to time, while
	the value of each additional device,
	$x_j(t) = F_j\left(t, x_1(t), \ldots, x_{j-1}(t) \right)$, may be obtained using at most $\ell$ RW queries to $j$'s owner. \footnote{The values of the devices, once determined, are known to all the players and the center.} The value of a device may depend in an arbitrary way on any information about the players obtained before the moving knife step \footnote{In the RW model, such information means cut points and value labels obtained through Cut queries, Evaluate queries, or moving knife steps executed before the current one.}, but its dependence on the time $t$ and values of previous devices $x_1(t) \ldots x_{j-1}(t)$ is Lipschitz continuous for all hungry valuations.

	An outcome for the moving knife step is the index of a trigger $j$ with different signs at $\alpha$ and $\omega$ (i.e. $x_j(\alpha)\cdot x_j(\omega) \le 0$), together with a time $t$ such that $x_j(t)=0$, as well as the values of all devices $x_{j'}(t)$ at $t$. (If the value of the trigger $x_j$ happens to be monotone then the time $t$ is unique, but in general there may be different such $t$ and any one of them may be given.)
\end{definition}

The information that a protocol retains after executing a moving knife step is the outcome of the step. Moving knife steps cannot always be executed exactly in the RW model, in the sense that it may take infinitely many queries to find a time where a trigger fires,
so we will consider instead approximations. An approximate outcome is such that for any trigger that switches signs from the beginning to the end of time, we get a time and corresponding configuration (i.e. values of all devices) at which the trigger approximately fires.

\begin{definition}[$\epsilon$-Outcome]
	An $\epsilon$-outcome of a moving knife step running from time $\alpha$ to $\omega$ with devices $1 \ldots K$ is
	the index of a trigger $i$ that switches signs from $\alpha$ to $\omega$ (i.e. with $x_i(\alpha)\cdot x_i(\omega) \le 0$), together with a time $t$ and approximate values $\tilde{x}_1(t) \ldots \tilde{x}_K(t)$ of all the devices at this time, so that 
	\begin{itemize}
		\item $x_i(t) \in [-\epsilon, \epsilon]$, and
		\item $|\tilde{x}_j(t) - x_j(t)| \leq \epsilon$, for all $j = 1 \ldots K$. 
	\end{itemize} 
	An $\epsilon$-outcome for an RW query is an answer to the query that is within $\epsilon$ of the correct answer.
\end{definition}

We can now define \emph{moving knife protocols} in the RW model.

\begin{definition}[\textbf{RW Moving Knife Protocol}]
	An RW \emph{moving knife protocol} $\mathcal{M}$ consists of a finite number of steps, each of which is a moving knife step or an RW query. \footnote{We count the answer to an RW query as part of that step.}
	At the end of an execution $\mathcal{M}$ outputs an allocation of the cake
	that depends on the outcomes of the steps it executed.

	We also require robustness: \footnote{Without robustness, moving knife protocols can be very brittle, as can be seen from the following variant of Austin's procedure for $n=2$ players: \emph{
			Player $1$ slides a knife across the cake, from $0$ to $1$. For each position $x$ of the knife, player $1$ positions a second knife at a point $y$ for which $V_1(x,y) = 1/2$. Player $2$ is instructed to shout ``Cut!'' when $V_2(x,y) = 1/2$.
			Given positions $x,y$ of the knives after their movement stops, if $V_1(x,y) = V_2(x,y) = 1/2$, then player $2$ is allocated $[x,y]$ and player $1$ the remainder, else player $1$ receives the whole cake.}
		Casting this procedure in the moving knife framework, we have time running from $0$ to $1$ and three devices, where the first two devices are the knives such that for any time $t \in [0,1]$, their positions are $x_1(t) = t$ and $x_2(t) = x_1(t) + p(t)$, respectively, where $p(t)$ is the smallest point larger than $t$ with $V_2(x_1(t),x_1(t)+p(t)) = 1/2$,
		and the third device is a trigger with value function $x_3(t) = V_2(x,y) - 1/2$.
		If the moving knife step is executed exactly, the procedure always outputs a perfect allocation of the cake.
		The simulation theorem \ref{lem:lipschitzknives} implies that, using $O(\log{\frac{1}{\epsilon}})$ RW queries, we can replace the outcome of the moving knife step with an $\epsilon$-approximate one, consisting of two points $x,y$ for which $V_1(x,y) = 1/2$ and $|V_2(x,y) - 1/2| \leq \epsilon$. But when replacing the exact outcome with the simulated one, we can no longer guarantee that the condition for the first branch is met, and so the execution seen for some valuations will be the second branch, where player $1$ receives the whole cake.}
	if $\mathcal{M}$ outputs $\mathcal{F}$-fair allocations and completes in at most $r$ steps with a partition demarcated by $C$ cuts, then for all $\epsilon > 0$, by iteratively replacing each outcome of a step of $\mathcal{M}$ with an $\epsilon$-outcome \footnote{That is, if the first step is an RW query, then its answer is replaced with an approximate answer to the query, and if it's a moving knife step, it's replaced with an $\epsilon$-outcome of the step. Using this approximate information the protocol then decides what its second step will be, which is also executed approximately, etc.},
	$\mathcal{M}$
	completes in at most $r$ steps with an $\epsilon$-$\mathcal{F}$-fair partition using at most $C$ cuts.
\end{definition}

\subsection{Simulating General Moving Knife Protocols}

\begin{theorem} \label{lem:lipschitzknives}
	Let $\mathcal{M}$ be an RW moving knife step for a cake cutting problem with hungry value densities bounded from above and below by constants.
	
	Then for each $\epsilon > 0$ and every trigger $j$ of $\mathcal{M}$ that switches signs from the start to the end time, we can find an $\epsilon$-outcome associated with this trigger using $O\left(\log \frac{1}{\epsilon}\right)$ RW queries.
	
\end{theorem}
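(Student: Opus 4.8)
The plan is to perform bisection on the time parameter of the moving knife step. Throughout, write $0 < d \le v_i(x) \le D$ for the density bounds guaranteed by the hypothesis.

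\emph{Step 1: a uniform Lipschitz constant for the trigger.} The key observation is that, as a function of time alone, the value of every device is Lipschitz, with a constant depending only on $d,D$ and the structural constants $K,\ell$ of the step --- not on the particular valuations. Indeed, the output of $Cut_i(\cdot)$ is $(1/d)$-Lipschitz in its target (if $Cut_i(\alpha)=y_\alpha$ and $Cut_i(\alpha')=y_{\alpha'}$ then $|\alpha'-\alpha|=\int_{y_\alpha}^{y_{\alpha'}}v_i \ge d\,|y_{\alpha'}-y_\alpha|$), and the output of $Eval_i(\cdot)$ is $D$-Lipschitz in the evaluated point. Device $1$ has position $x_1(t)=t$, which is $1$-Lipschitz. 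For $j\ge 2$, the definition guarantees that $F_j$ depends Lipschitz-continuously on $(t,x_1,\ldots,x_{j-1})$; since $F_j$ is moreover realized by at most $\ell$ RW queries on valuations with density in $[d,D]$, the relevant Lipschitz constant is bounded in terms of $d,D,\ell$ alone. Composing these bounds over the at most $K$ devices, the trigger's value function $g(t):=x_j(t)$ is $L$-Lipschitz on $[\alpha,\omega]$ for a constant $L=L(K,\ell,d,D)$ independent of the valuations.

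\emph{Step 2: bisection.} Since trigger $j$ switches signs, $g(\alpha)\cdot g(\omega)\le 0$. Maintain an interval $[\ell,r]\subseteq[\alpha,\omega]$ with $g(\ell)\cdot g(r)\le 0$, initialized to $[\alpha,\omega]$. At each round, let $m=(\ell+r)/2$ and evaluate $g(m)$ exactly: compute the device positions $x_1(m)=m,\ x_2(m),\ \ldots,\ x_j(m)$ in order, each via its at most $\ell$ RW queries --- legitimate in the RW model because each device position, once determined, is available as a cut point for the Eval queries used in computing the later devices --- for a total of $O(K\ell)=O(1)$ queries. If $g(m)$ has the same sign as $g(\ell)$ (or is $0$), set $\ell\leftarrow m$; otherwise set $r\leftarrow m$. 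This preserves the bracketing property and halves the interval length.

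\emph{Step 3: termination and output.} After $T=\lceil \log_2(L(\omega-\alpha)/\epsilon)\rceil = O(\log\frac{1}{\epsilon})$ rounds, the interval $[\ell_T,r_T]$ has length at most $\epsilon/L$ and, by the intermediate value theorem, contains a zero $t^{\ast}$ of $g$. Put $t:=\ell_T$. Then $|x_j(t)| = |g(t)-g(t^{\ast})| \le L\,|t-t^{\ast}| \le L\cdot(\epsilon/L) = \epsilon$. Finally, compute all device values $x_1(t),\ldots,x_K(t)$ at this time exactly with a further $O(K\ell)=O(1)$ RW queries, and output the triple consisting of the index $j$, the time $t$, and these exact device values $\tilde{x}_{j'}(t)=x_{j'}(t)$; the requirement $|\tilde{x}_{j'}(t)-x_{j'}(t)|\le\epsilon$ then holds with slack. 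The total number of RW queries is $O(1)\cdot T + O(1) = O(\log\frac{1}{\epsilon})$. The delicate point is Step 1: obtaining a Lipschitz constant for the composed device functions that is uniform over valuations, since the hypothesis only supplies that each $F_j$ is Lipschitz ``for all hungry valuations'', a priori with a valuation-dependent modulus; turning this into a bound depending only on $d,D$ (and the fixed $K,\ell$) is exactly what makes the round count --- and hence the query count --- a function of $\epsilon$ alone. A secondary, bookkeeping-level concern is checking that evaluating a device position at a freshly chosen time $m$ respects the RW query discipline, i.e.\ that the intermediate positions generated during that evaluation can serve as the cut points referenced by subsequent Eval queries.
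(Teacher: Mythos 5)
Your overall architecture (a uniform Lipschitz bound on the device values as functions of time, followed by bisection and the intermediate value theorem) is the same as the paper's, and your Step~1 is a reasonable elaboration of the Lipschitz claim the paper asserts. But Step~2 has a genuine gap in the RW model, and it is not the ``bookkeeping-level concern'' you dismiss at the end: you bisect the raw time interval and propose to evaluate the devices at the arithmetic midpoint $m=(\ell+r)/2$. In the RW model the center has no way to make $m$ a cut point. Cut queries return points determined by a player's value for a prefix, and Eval queries may only be issued at previously made cut points; the arithmetic midpoint of two existing cut points is neither, and the center does not know which value $\alpha$ to feed to a Cut query to land on it. So the $O(K\ell)$ queries you invoke to compute $x_2(m),\ldots,x_K(m)$ cannot be issued. (Your argument does go through verbatim in the $RW^+$ model, where Eval at arbitrary points is allowed --- and indeed the paper's $RW^+$ simulation theorem does exactly your direct time-halving.)

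The paper's fix is to bisect in player $1$'s value space rather than in time: since the first device is a knife whose position equals the time, the center sets $w=(V_1(a)+V_1(b))/2$ and issues $z=Cut_1(w)$, obtaining a legitimate cut point $z$ that serves as the new probe time. Player $1$'s value for the bracketing interval halves each round, and the lower density bound $\delta$ converts ``small in player $1$'s value'' into ``small in length,'' after which the Lipschitz bound gives $|x_i(s)-x_i(t)|\le\epsilon$ for every device. Note that this is a second, essential use of the lower density bound beyond the one in your Step~1; in your write-up $\delta$ only enters through the Lipschitz constant of Cut outputs, which is another sign that the argument is implicitly living in $RW^+$. Replacing your midpoint selection with the $Cut_1$-based halving (and adjusting the round count to $O(\log\frac{\zeta}{\epsilon\delta})=O(\log\frac{1}{\epsilon})$) repairs the proof and recovers the paper's argument.
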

\begin{proof}
	Let $1 \ldots K$ be the devices of $\mathcal{M}$,
	$\alpha$ and $\omega$ be the start and end times, $k$ the Lipschitz constant of the moving knife step, $\Delta$ and $\delta$ the upper and lower bounds on the value densities, where $\Delta > \delta > 0$.
	Since the value densities are bounded by constants, the dependence functions of the moving knives are $k$-Lipschitz, the number of devices is constant and each device is discovered using at most a fixed number $\ell$ of RW queries given the values of previous devices, it follows that there exists a constant 
	$\zeta > \max\{1, k, \Delta\}$ such that for any device $i = 1 \ldots K$ and times $s < t$, where $s,t \in [\alpha, \omega]$, 
	
	\begin{equation} \label{ineq:inductive}
	|x_i(t) - x_i(s)| \leq  \zeta \cdot  |t-s|.
	\end{equation}

	We can simulate the moving knife step $\mathcal{M}$ as follows.
	Initialize the position of the first device (knife) when the time is $\alpha$ and $\omega$ respectively, by setting $a = x_1(\alpha)$ and $b = x_1(\omega)$.
	Let $j$ be any trigger of $\mathcal{M}$, where its values at time $\alpha$ and $\omega$ are $y = x_j(\alpha)$ and $y' = x_j(\omega)$, respectively.
	
	Iteratively, ask player $1$ a Cut query to identify the cut at which the interval $[a,b]$ is halved in its estimation; that is, let $w = (V_1(a) + V_1(b))/2$ and $z = Cut_1(w)$.
	Intuitively this corresponds to checking the configuration when half of the time has elapsed, but since there is no continuous time in the RW model, we use the valuation of player $1$ as a proxy. 
	Since the position of the first knife is equal to time, we have that $\tilde{t} = z$ is the time when $\mathcal{M}$ sets the first knife to position $z$.
	By the conditions in the lemma, given the time $\tilde{t}$ and the position $z$ of the first device at $\tilde{t}$, we can iteratively find using at most $\ell$ RW queries the value of each device $2, \ldots, K$, given the current time and the values of the previous devices. Let $x_j(\tilde{t})$ be the value attained by trigger $j$ at position $z$.
	If $x_j(\tilde{t}) \cdot y_a > 0$, update $y = z$. Else, update $y' = z$. Stop after $2\log \left( \frac{1}{\epsilon'} \right)$ steps, where
	$\epsilon'= \frac{\epsilon \cdot \delta}{\zeta}$.
	
	Since the value of player $1$ for the interval $[a,b]$ halves with each iteration, we get that the final interval $[a,b]$ is very small in player $1$'s estimation: $|V_1(a) - V_1(b)| \leq \epsilon'$. The value densities of the players are bounded from below by $\delta$, thus $|V_1(a) - V_1(b)| \geq \delta |a -b |$, which implies $|a - b| \leq 1/\delta \cdot |V_1(a) - V_1(b)| \leq \epsilon'/\delta$.
	Let $s = a$ and $t = b$ denote the times at which the first knife has positions $a$ and $b$, respectively. Then $|a - b| = |t -s |$.
	By inequality \ref{ineq:inductive}, for each device $i$:
	\begin{eqnarray*}
		|x_i(s) - x_i(t)| & \leq & \zeta \cdot |t - s| \leq \epsilon
	\end{eqnarray*}
	
	This inequality holds in particular for device $j$, case in which $|y - y'| = |x_j(s) - x_j(t)| < \epsilon$.
	There are two cases left. If $y > 0$ and $y'< 0$, then $0 < y \leq y' + \epsilon < \epsilon$, and we can return the time $s$ and the values of the devices at time $s$, $x_1(s) \ldots x_K(s)$. Otherwise, $y < 0$ and $y' > 0$, and so $0 < y' < y + \epsilon < \epsilon$, case in which we return time $t$ and device values $x_1(t) \ldots x_K(t)$.
	Thus for each trigger $j$ we can find an approximate solution with  $O\left(\log \frac{1}{\epsilon}\right)$ RW queries, such that the positions of the other devices are also $\epsilon$-close to their positions at a nearby time where trigger $j$ is exactly zero. This completes the proof.
\end{proof}

We can simulate RW moving knife protocols in the RW model as follows.

\begin{theorem} \label{thm:simulation_robust}[\ref{thm:mainsimulation_robust} in main text]
	Consider a cake cutting problem where the value densities are bounded from above and below by strictly positive constants.
	Let $\mathcal{M}$ be an RW moving knife protocol with at most $r$ steps, such that $\mathcal{M}$ outputs $\mathcal{F}$-fair allocations demarcated by at most a constant number $C$ of cuts.
	
	Then for each $\epsilon > 0$, there is an RW protocol $\mathcal{M}_{\epsilon}$ that uses $O\left(\log \frac{1}{\epsilon}\right)$ queries and computes $\epsilon$-$\mathcal{F}$-fair partitions demarcated with at most $C$ cuts.
\end{theorem}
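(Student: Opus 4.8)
The plan is to reduce the statement to two ingredients already in hand: Theorem~\ref{lem:lipschitzknives}, which simulates a \emph{single} moving knife step to precision $\epsilon$ using $O(\log\frac1\epsilon)$ RW queries when the value densities are hungry and bounded, and the robustness clause built into the definition of an RW moving knife protocol, which guarantees that if one runs $\mathcal{M}$ while replacing the outcome of each step by an $\epsilon$-outcome, then $\mathcal{M}$ still terminates in at most $r$ steps and outputs an $\epsilon$-$\mathcal{F}$-fair partition demarcated by at most $C$ cuts. So the protocol $\mathcal{M}_\epsilon$ is simply ``run $\mathcal{M}$, but simulate every step to precision $\epsilon$''.

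Concretely, I would process the steps of $\mathcal{M}$ in order. Suppose the first $i-1$ steps have been simulated and the protocol $\mathcal{M}$, fed the approximate outcomes so far, has decided what its $i$-th step is. If this step is an RW query (Cut or Eval), the center asks it and records the exact answer, which costs one query and is in particular an $\epsilon$-outcome. If the step is a moving knife step, I would invoke Theorem~\ref{lem:lipschitzknives}: since the densities are bounded above and below by positive constants---hence hungry and bounded---for the trigger $j$ named by the step's outcome (a trigger switching signs from the start time to the end time) one obtains a time $t$ together with approximate device values $\tilde x_1(t),\dots,\tilde x_K(t)$ forming an $\epsilon$-outcome with $O(\log\frac1\epsilon)$ RW queries. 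Since $\mathcal{M}$ has at most $r$ steps and $r$ is a constant, the total cost of $\mathcal{M}_\epsilon$ is $r\cdot O(\log\frac1\epsilon)=O(\log\frac1\epsilon)$. What has been produced is precisely an execution of $\mathcal{M}$ with every outcome replaced by an $\epsilon$-outcome and subsequent steps chosen from that approximate information, so the robustness clause yields that the output is $\epsilon$-$\mathcal{F}$-fair and uses at most $C$ cuts, finishing the argument.

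The main point requiring care---and the step I expect to be the real obstacle---is that at step $i$ the moving knife step to be simulated is not the ``true'' $i$-th step of $\mathcal{M}$ but a perturbed one, since its device functions $x_j(t)=F_j(t,x_1(t),\dots,x_{j-1}(t))$ may depend on the outcomes of earlier steps and only approximate versions of those are available. I would handle this by noting that Theorem~\ref{lem:lipschitzknives} is stated for an arbitrary RW moving knife step and is agnostic about how the device functions were obtained: all it uses is the Lipschitz dependence of the devices on time and on previous device values together with boundedness of the densities, which hold for the perturbed step just as for the unperturbed one, so the theorem legitimately applies at every step. A related subtlety is that a trigger which switches signs in the exact run of $\mathcal{M}$ need not switch signs in the perturbed run (its endpoint values may be pushed through zero); but this is exactly the cascading error the robustness requirement is designed to absorb, since robustness is phrased in terms of this very iterative replacement process. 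Beyond these conceptual points the argument is bookkeeping: taking the per-step precision uniformly equal to $\epsilon$ suffices because robustness is stated with a single error parameter, and the constant number $r$ of steps keeps the query count logarithmic.
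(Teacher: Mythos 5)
Your proposal is correct and follows essentially the same route as the paper: the paper's proof also consists of replacing each moving knife step by an $\epsilon$-outcome via Theorem~\ref{lem:lipschitzknives} and then invoking the robustness clause of the RW moving knife protocol definition to conclude $\epsilon$-$\mathcal{F}$-fairness. Your additional remarks on why the perturbed steps still satisfy the single-step simulation hypotheses, and on robustness absorbing the cascading errors, are exactly the details the paper leaves implicit.
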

\begin{proof}
	The proof is immediate given the simulation theorem \ref{lem:lipschitzknives}, which guarantees that we can replace each moving knife step of $\mathcal{M}$ with an $\epsilon$-outcome using $O(\log{\frac{1}{\epsilon}})$ RW queries. Since $\mathcal{F}$ is robust, the approximate fairness of the simulated protocol follows.
\end{proof}

\subsection{Simulating Existing Moving Knife Protocols}

The moving knife procedures from the literature include the Dubins-Spanier procedure (which is in fact equivalent to a discrete RW protocol), Austin's procedure for computing perfect allocations, Austin's extension, which finds for $n=2$ players a partition into $k$ pieces such that each piece is worth $1/k$ to both players, and several procedures for computing envy-free allocations due to Barbanel-Brams, Stromquist, Webb, Brams-Taylor-Zwicker \cite{RW98,BT96} and Saberi-Wang \cite{SW09}.

\begin{theorem} \label{thm:simulation_all}[\ref{thm:mainsimulation_all} in main text]
	The Austin, Austin's extension, Barbanel-Brams, Stromquist, Webb, Brams-Taylor-Zwicker, and Saberi-Wang moving knife procedures
	can be simulated with $O\left(\log{\frac{1}{ \epsilon}}\right)$ RW queries when the value densities are 
	bounded from above and below by positive constants.
\end{theorem}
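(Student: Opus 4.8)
The plan is to obtain Theorem~\ref{thm:simulation_all} as a direct instantiation of Theorem~\ref{thm:simulation_robust}: for each of the listed procedures I would (i) cast it as an RW moving knife protocol in the sense of Definition~\ref{def:knife} and the subsequent definitions, with a constant number $r$ of steps and a constant cut bound $C$, outputting $\mathcal{F}$-fair allocations for the relevant notion $\mathcal{F}$ (perfect, $k$-measure splitting, or connected envy-free); (ii) check that the device (knife and trigger) functions depend on the time and on the positions of earlier devices in a Lipschitz-continuous way for all hungry valuations, so that Theorem~\ref{lem:lipschitzknives} applies to every moving knife step; and (iii) verify robustness, i.e.\ that iteratively substituting $\epsilon$-outcomes for the exact outcomes still yields an $\epsilon$-$\mathcal{F}$-fair partition with at most $C$ cuts. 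Given (i)--(iii), Theorem~\ref{thm:simulation_robust} supplies the $O\!\left(\log\frac{1}{\epsilon}\right)$ query bound; note that bounded below by a positive constant forces the densities to be hungry, so the continuity hypothesis (which is only imposed for hungry valuations) is exactly what is available.

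For the castings: Austin's procedure is a single moving knife step with time equal to the position $x_1(t)$ of the first knife, a second knife at the point $x_2(t)$ obtained from one $Cut_1$ query so that $V_1(x_1(t),x_2(t))=1/2$, and a trigger $x_3(t)=V_2(x_1(t),x_2(t))-1/2$ obtained from two $Eval_2$ queries; the output uses $C=2$ cuts. Austin's extension is the same with $k-1$ knives for a fixed $k$, hence $C=k-1$. The Barbanel--Brams, Stromquist, Webb, Brams--Taylor--Zwicker, and Saberi--Wang procedures for $n=3$ players are each a constant-length sequence of $O(1)$ initial RW queries (to obtain the players' fixed marks) followed by one moving knife step in which a sword and a constant number of companion knives are positioned via $Cut$ queries of the sword's owner, with one trigger per remaining player encoding an indifference between two of the candidate pieces; all of them output connected allocations with $C=2$ cuts. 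In every case the number of devices and the number of RW queries used per device are bounded by absolute constants, so $r$ is constant.

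Lipschitz continuity in (ii) is routine from the hypothesis: since the densities satisfy $0<\delta\le v_i\le\Delta$, the map $x\mapsto V_i([0,x])$ and its inverse (the map computed by a $Cut_i$ query) are bi-Lipschitz with constants depending only on $\delta$ and $\Delta$, and each device value is a composition of finitely many such maps with the arithmetic appearing in the trigger and knife definitions, which is itself Lipschitz; hence the dependence of each $x_j(t)$ on $(t,x_1(t),\dots,x_{j-1}(t))$ is Lipschitz for all hungry valuations, as Definition~\ref{def:knife} requires.

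The main obstacle is (iii), robustness, because several of these procedures branch on strict preference comparisons (for instance, Barbanel--Brams splits on whether players $2$ and $3$ both strictly prefer the middle piece or both strictly prefer the left piece), and an $\epsilon$-outcome can flip a borderline comparison and send the execution down a different branch. The resolution is that near such a boundary every branch already yields an $\epsilon$-$\mathcal{F}$-fair allocation: this is precisely the phenomenon exploited in the proof of Theorem~\ref{thm:ef3sim}, where the Barbanel--Brams invariants are shown to persist whichever way a close comparison resolves, and the final allocation is $\epsilon$-envy-free in all of them. The same device works for the remaining procedures --- fold each borderline comparison into the trigger conditions of the moving knife step, so that the branch decision becomes the selection of which trigger fired and an $\epsilon$-outcome only perturbs the firing time, and, before committing to an output, test the finitely many candidate allocations for $\epsilon$-$\mathcal{F}$-fairness and keep one that passes, which exists because $\mathcal{F}$ is an abstract fairness notion and the discovered values are within $\epsilon/2$ of the true ones. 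Carrying this verification out procedure by procedure is the bulk of the work; once it is done, the theorem follows from Theorem~\ref{thm:simulation_robust}.
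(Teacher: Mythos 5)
Your overall strategy is the same as the paper's: cast each procedure as an RW moving knife protocol with a constant number of devices and apply the general simulation theorem, with the Lipschitz verification being routine because the densities are bounded above and below. Your castings of Austin's procedure and its extension match the paper's essentially exactly, and your observation that robustness is the real obstacle is on target; the paper likewise handles Webb, Brams--Taylor--Zwicker and Saberi--Wang by reducing them to Austin-type subroutines.

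There is, however, a genuine gap in your uniform treatment of the remaining procedures, most visibly for Stromquist. You propose ``one trigger per remaining player encoding an indifference,'' and for Stromquist the natural per-player triggers are $x_{k}(t)=V_k(0,x_1(t))-\max\{V_k(x_1(t),y),V_k(y,1)\}$. Each of these switches sign from $t=\alpha$ to $t=\omega$, so the simulation is entitled to return an $\epsilon$-approximate zero of \emph{any one} of them at \emph{any} time where it vanishes --- not the \emph{first} firing. But Stromquist's correctness depends on the firing configuration: the allocation is envy-free only if, at the moment player $i$'s trigger fires, the other two players' triggers are still nonpositive (otherwise two players both want $[0,x]$ and only one can have it). An arbitrary approximate zero of an arbitrary trigger gives you no such guarantee, and your fallback of ``test the finitely many candidate allocations and keep one that passes'' does not save you, because at a bad firing time none of the three candidate assignments need be $\epsilon$-envy-free. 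The paper resolves this by introducing a single composite trigger (its eighth device) whose value is defined by cases on the signs of the three per-player triggers, engineered to be continuous, to change sign over the step, and to vanish only at configurations where a valid assignment exists; it then gives a four-case allocation rule at an approximate zero of that composite trigger. Some analogue of this construction is needed, and your proposal as written does not supply it. A similar (milder) issue arises for Barbanel--Brams, where the paper uses two triggers per remaining player rather than one, so that whichever of the four triggers is reported as approximately firing certifies an indifference usable in the final assignment.
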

\begin{proof}
	Let $\Delta \geq \delta > 0$ be the upper and lower bounds on the valuations, respectively.
	First note that all these procedures have a constant number of players, their moving knife steps have a constant number of devices, and the first device is a knife held by the referee, the position of which is equal to the time. The dependence functions of the known protocols are in the case of the triggers, simple linear functions (that trivially meet the Lipschitz condition) of the form $V_i(S) - V_i(T)$ for some player $i$ and pieces $S,T$, and
	in the case of each knife, the outcomes of a Cut query, where the value given to the query is equal to the value of a player for some demarcated interval.
	Finally, we can reach within $\epsilon/\Delta$ the same positions of the knives as in the continuous procedure, thus approximating within $\epsilon$ the value of each player for every piece demarcated by two adjacent knives. 
	\medskip
	
	$\bullet$ \emph{Austin's procedure}. We will use several steps, each of which is an RW query or a moving knife step, as follows.
	The first step of Austin's procedure is discrete, and can be implemented with a Cut query in the RW model, in which player $1$ is asked to cut the cake in half; let $z = Cut_1(0.5)$. Then ask player $2$ to evaluate the generated piece: $\alpha = Eval_2(z)$. If $\alpha = 0.5$, we have found an exact perfect allocation. 
	Otherwise, w.l.o.g. assume $V_2(0,z) < 0.5$ (the other case is similar).
	We construct a moving knife step with three devices, where the first device will be a knife and represents the value of the referee knife. Its value (i.e. position) is equal to the time that has elapsed as long as the time is less than $z$, and equal to $z$ otherwise.
	The second device is also a knife, and its value can be obtained with an Eval and Cut query to player $1$ as follows: let $v = Eval_1(z)$ and set the value of the second knife to $p = Cut_1(0.5+v)$.
	The third device is a trigger and its value is equal to $V_1(z,p)$, which can be obtained with a single evaluation query to player $2$, $Eval_2(z,p)$.
	Note that $K = 3$ and $\ell = 2$, where $K$ is the number of devices and $\ell$ the number of queries required to find the value of a device given the values of the previous devices. By Lemma \ref{lem:lipschitzknives}, we can simulate the procedure in the RW model to find a configuration where the trigger has value zero with $O\left(\log{\frac{1}{\epsilon}}\right)$ queries, which corresponds to an $\epsilon$-perfect allocation.
	\medskip
	
	$\bullet$ \emph{Austin's extension} \cite{BT96,RW98}. This moving knife procedure is an extension of Austin's procedure for $n=2$ players, which finds a partition into $k$ pieces $A = (A_1, \ldots, A_k)$ into $k$ pieces, such that $V_i(A_j) = 1/k$ for each $i = 1,2$, $k = 1 \ldots k$. The procedure works as follows. Initialize $\ell = k$. \emph{Alice makes $\ell - 1$ cuts that divide the cake into $\ell$ intervals, each worth $1/k$ to Alice. Then there is a piece that Bob values at (weakly) less than $1/k$, and an adjacent piece that Bob values at (weakly) more than $1/k$. 
		Then the referee places a knife at the boundary of one of the pieces and Alice places a second knife so that the interval between the two knives is worth $1/k$ to her. The knives are moved continuously, keeping Alice's value for the piece between them at $/1k$, until meeting the cut points of the other piece. By the intermediate value theorem, there is a point where Bob agrees that the piece between the knives is worth exactly $1/k$. Append this piece to the final allocation, update $\ell = \ell - 1$, and repeat the procedure on the remainder of the cake.} 
	
	For fixed $k$, this procedure contains a constant number of discrete RW queries and moving knife steps that are very similar to Austin's procedure, which can be simulated approximately.
	
	\medskip
	
	$\bullet$ \emph{Barbanel-Brams procedure}. The first steps of this procedure consist of RW queries, which are already in the RW model. Afterwards, we branch in two cases, each of which is a moving knife step. For each such step, we keep track of six devices. We illustrate case 1:  the first device is the knife (corresponding to the referee's knife) with position $x_1(t)= t$, the second device is also a knife, the position $x_2(t)$ of which is determined by asking player $1$ a Cut query, the third device is a trigger with valuation determined by player $2$'s estimation of the difference between pieces $[0,x_1(t)]$ and $[x_1(t),x_2(t)]$ (i.e. $V_2(0,x_1(t)) - V_1(x_1(t),x_2(t))$), the fourth device is a trigger with valuation $V_2(x_1(t),x_2(t)) - V_2(x_2(t),1)$, the fifth and sixth devices are triggers similar to triggers three and four, with the difference that their values are determined by player $3$ instead of player $2$.
	
	\medskip
	
	$\bullet$ \emph{Webb, Brams-Taylor-Zwicker, Saberi-Wang} \cite{BT96,RW98,SW09}:
	The Webb procedure computes an envy-free allocation among $n=3$ players and uses as subroutines the Dubins-Spanier protocol, which can be implemented exactly in the RW model, and Austin's extension, which can be simulated approximately as discussed. 
	The Brams-Taylor-Zwicker protocol uses two calls to Austin's procedure, which can be simulated approximately since the protocol is robust; the same is true of the Saberi-Wang procedure, which also uses Austin's procedure as a subroutine.
	
	\medskip
	
	$\bullet$ \emph{Stromquist's procedure} \cite{RW98}: In this case we will need more complex dependence functions. The procedure is as follows:
	\emph{
		\begin{quote}
			A referee continuously moves a knife from $0$ to $1$. For each position $x$ of the referee knife, every player $i$ positions their own knife at their midpoint $m_i$ of the piece $[x,1]$. Denote $y = \textsc{median}(m_1, m_2,m_3)$. Each player $i$ is instructed to shout stop when $V_i(0,x) = \max\{V_i(x,y),V_i(y,1)\}$. When that happens, the cake is allocated as follows: the player $i$ that called stop receives the piece $[0,x]$, the player $j$ with the  mark $m_j \leq y$ receives the piece $[x,y]$, while the remaining player receives the piece $[y,1]$.
		\end{quote}
	}

	We can implement the procedure as a single moving knife step in our framework using four knives and four triggers as follows. The first knife has a position $x_1(t)$ equals to time and runs from $0$ to $1$. Each knife $k = 2,3,4$ is positioned by player $k-1$ as a result of a cut query $Cut_{k-1}(\alpha_{k-1} + \beta_{k-1})$, where $\alpha_{k-1} = Eval_{k-1}(x_1(t))$ and $\beta_{k-1} = \frac{1 - \alpha_{k-1}}{2}$.
	
	The first three triggers are devices $5,6,7$,
	one for each player $k = 1,2,3$, such that  
	$x_{k+4}(t) = V_k(0,x_1(t)) - \max\left\{V_k\left(x_1(t),y\right), V_k(y,1)\right\}$.
	The fourth trigger is the eighth device with the following dependence function:
	\begin{description}
		\item[\hspace{4mm}(a)] If $x_i(t) < 0$ for $i=5 \ldots 7$, let $x_8(t) = \min_{i=5 \ldots 7}\{ x_i(t)\}$.
		\item[\hspace{4mm}(b)] Else, if $x_i(t) > 0$ for $i = 5 \ldots 7$, sort the triggers by $j,k,\ell$, such that $x_j(t) \geq x_k(t) \geq x_{\ell}(t)$. Define 
		$x_8(t) = x_k(t) + x_{\ell}(t)$.
		\item[\hspace{4mm}(c)] Else, if $x_i(t) \leq 0$ for some trigger $i \in \{5,6,7\}$ and $x_j(t) \geq 0$ for all $j\in \{5,6,7\} \setminus \{i\}$, define 
		$x_8(t) = \min_{j \in \{5,6,7\} \setminus \{i\}} x_j(t)$.
		\item[\hspace{4mm}(d)] Else, let $x_8(t) = 0$. (In this case, we have
		$x_i(t) \geq 0$ for some trigger $i$ and $x_j(t) \leq 0$ for all $j \in \{5,6,7\} \setminus \{i\}$.)
	\end{description}
	
	Clearly the positions of the knives are continuous in time since the median function is continuous. The value of the first triggers are simple linear functions of the valuations for the pieces demarcated by knives and are continuous. The value of the fourth trigger can also be seen to be continuous by checking the combinations of switching points indicated by the four cases in its definition. Moreover, we have that $x_8(0) = -0.5 < 0$ and $x_8(1) = 2 > 0$. Thus there exists a time when the trigger is exactly zero.
	To simulate this moving knife step, the search strategy is that when the trigger value is negative, we must move the referee knife to the right, and when the trigger value is positive, to the left. 
	
	Now suppose we have a time $t$ at which the eighth trigger is approximately zero, i.e. $-\epsilon \leq x_8(t) \leq \epsilon$. Let $y$ be the median at this time. If case $(a)$ holds at time $t$, then there exists a player $i$ such that $-\epsilon \leq w_i - V_i(0,x_1(t)) \leq \epsilon$. Give player $i$ the piece $[0,x_1(t)]$. Let $j \neq i$ be the player whose knife is less than or equal to the median and $k$ the remaining player, whose knife is greater than or equal to the median.
	Then give player $j$ the piece $[x_1(t), y]$ and player $k$ the piece $[y,1]$.
	
	In case $(b)$, since $V_i(0,x_1(t)) > w_i$ for all $i$ and $x_8(t) \in [-\epsilon,\epsilon]$, we have $|V_k(0,x_1(t)) - w_k| \leq \epsilon$ and 
	$|V_{\ell}(0,x_1(t)) - w_{\ell}| \leq \epsilon$. W.l.o.g, let $j$ be the player in $\{j,k\}$ whose knife is less than or equal to the median. Then give $[0,x_1(t)]$ to player $k$, $[x_1(t),y]$ to player $j$, and $[y,1]$ to player $\ell$.
	
	In case $(c)$, let $j \neq i$ be the player with an approximate tie (i.e. $|V_j(0,x_1(t)) - w_j| \leq \epsilon$) and $k \neq i,j$ the remaining player, with $V_k(0,x_1(t)) \geq w_k$. Let $k$ take the piece $[0,x_1(t)]$. If the position of player $i$'s knife is to the left of $y$, the allocation in which player $i$ takes the piece $[x_1(t),y]$ and player $j$ takes $[y,1]$ is envy-free. If player $i$'s knife is located at $y$, let player $j$ pick its favorite between $[x_1(t), y]$ and 
	$[y,1]$, while player $k$ takes the remainder.
	If player $i$'s knife is located to the right of $y$, let player $j$ take $[x_1(t), y]$ and player $i$ the piece $[y,1]$.
	
	In case $(d)$, give player $i$ the piece $[0,x_1(t)]$. An envy-free allocation can be obtained by giving the piece $[x_1(t), y]$ to the player whose knife is less than or equal to $y$, and the remaining player the piece $[y,1]$. This completes the proof.
\end{proof}

Note that our previous simulations of the Barbanel-Brams and Austin procedures do not require that the valuations are bounded at all, since we use the specific formulations of the value functions of the triggers to eliminate the players one by one from consideration. This is not necessarily possible when the dependence functions of the devices are more complex.

\bigskip

\noindent \textit{\textbf{An Equitable Protocol}}. 
Next we show a simple moving knife step in the Robertson-Webb model for computing equitable allocations for any number of hungry players, which can be implemented with $n+1$ devices, one of which is a trigger and the remainder of $n$ are knives. This moving knife step illustrates a trigger whose valuation function depends on time and the physical location of the knives (rather than the values of the players for the pieces demarcated by the knives, as is common in the previous examples).

A more complex moving knife procedure for computing exact equitable allocations that is not in the RW model but works even when the valuations are not hungry was discovered independently by \cite{SS17}. Their protocol is based on the same idea of multiple knives moving in parallel and has more steps designed to detect regions valued at zero by some players. Note the existence of connected equitable allocations for any order of the players was established by Cechlarova, Dobos, and Pillarova \cite{CDP13}, with a simplified proof based on the Borsuk-Ulam theorem given by Cheze \cite{Cheze17}.

\medskip
\begin{quote} 
	\emph{\textbf{Equitable Protocol}} : 
	{\em Player $1$ slides a knife continuously across the cake, from $0$ to $1$. 
		For each position $x_1$ of the knife, player $1$ is asked for its value of the piece $[0,x_1]$; then 
		each player $i = 2 \ldots n$ iteratively positions its own knife at a point $x_i \in [x_{i-1}, 1]$ with $V_i(x_{i-1}, x_{i})$ $=$ $V_1(0,x_1)$ if possible, and at $x_i = 1$ otherwise. 
		
		Player $n$ shouts ``Stop!" when its own knife reaches the right endpoint of the cake (i.e., $x_n = 1$).
		The cake is allocated in the order $1 \ldots n$, with cuts at $x_1 \ldots x_{n-1}$.
	}
\end{quote}

\begin{theorem} \label{thm:appequit_step}[\ref{thm:equit_step} in main text]
There is an RW moving knife protocol that computes connected equitable allocation for any number of $n$ hungry players.
\end{theorem}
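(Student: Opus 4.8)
The plan is to realize the \emph{Equitable Protocol} as a single RW moving knife step and then invoke the intermediate value property built into the moving-knife framework. First I would set up the devices. Device $1$ is a knife whose position equals the time $t$, running from $\alpha=0$ to $\omega=1$. A second device, owned by player $1$, records the number $c(t):=V_1(0,t)$, obtained with one Eval query to player $1$. For $i=2,\dots,n$, device $i+1$ is the knife of player $i$: given the position $x_{i-1}(t)$ of the previous knife (with $x_0(t):=0$ built in as a constant), player $i$ is asked $Eval_i$ of $x_{i-1}(t)$ and then a $Cut_i$ query at level $V_i(0,x_{i-1}(t))+c(t)$, which returns the point $x_i(t)$ with $V_i(x_{i-1}(t),x_i(t))=c(t)$ when that level is at most $1$, and otherwise the knife is set to $x_i(t)=1$. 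Finally there is a trigger, owned by player $n$, with value $T(t):=V_n(x_{n-1}(t),1)-c(t)$, computed from one Eval query, $V_n(0,x_{n-1}(t))$, together with the already-known number $c(t)$. For a fixed number of players this uses a constant number of devices, each determined by at most two RW queries; and for hungry value densities each $V_i(0,\cdot)$ is strictly increasing and continuous, so its inverse is continuous, truncation at $1$ preserves continuity, and $T$ is continuous. Hence this is a legal RW moving knife step.

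Next I would show the trigger changes sign and read off the allocation. At $t=0$ all knives collapse to $0$ (a zero-length interval is worth $0$ to a hungry player), so $T(0)=V_n(0,1)-0=1>0$. At $t=1$ we have $c(1)=1$, which pushes every knife to the right endpoint, so $x_i(1)=1$ for all $i$; in particular $x_{n-1}(1)=1$ and $T(1)=V_n(1,1)-1=-1<0$. By the intermediate value property of a moving knife step there is a time $t^\star$ with $T(t^\star)=0$; since each $x_i(t)$ is nondecreasing in $t$ and $c(t)$ is strictly increasing, $T$ is strictly decreasing on $[0,1]$, so $t^\star$ is unique and is precisely the first moment player $n$'s knife reaches $1$, matching the informal ``shout Stop!'' rule. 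Moreover $T(0)\neq 0$ gives $t^\star>0$. The protocol outputs the connected allocation with cuts $x_1(t^\star),\dots,x_{n-1}(t^\star)$, giving player $i$ the interval $[x_{i-1}(t^\star),x_i(t^\star)]$, where $x_0=0$ and $x_n=1$.

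It remains to check equitability. Because $t^\star>0$ and player $1$ is hungry, $c(t^\star)=V_1(0,t^\star)>0$. From $T(t^\star)=0$ we get $V_n(x_{n-1}(t^\star),1)=c(t^\star)>0$, and since player $n$ is hungry, $V_n(z,1)=0$ exactly when $z=1$; hence $x_{n-1}(t^\star)<1$. The knives satisfy $x_1(t)\le x_2(t)\le\cdots\le x_{n-1}(t)$ at every time, so $x_i(t^\star)<1$ for all $i\le n-1$, which means the truncation never fired for players $1,\dots,n-1$ at time $t^\star$; therefore each such player $i$ receives exactly $V_i(x_{i-1}(t^\star),x_i(t^\star))=c(t^\star)$ (for $i=1$ this is the definition of $c$). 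Player $n$ receives $V_n(x_{n-1}(t^\star),1)=c(t^\star)$ by the trigger equation. So every player values its piece at the common amount $c(t^\star)$, and the allocation is connected, which proves the theorem.

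I expect the main obstacle to be the careful verification that the step is a \emph{legitimate} moving knife step for all hungry valuations --- in particular that the knife positions, including the truncation at $1$, depend continuously on $t$ (cdf inverses of hungry measures need not be Lipschitz, but they are continuous, which is all the intermediate-value argument needs) --- together with the truncation bookkeeping: the one-line deduction $T(t^\star)=0\Rightarrow x_{n-1}(t^\star)<1\Rightarrow$ no truncation for players $1,\dots,n-1$ is exactly what upgrades ``player $n$ is satisfied'' to ``every player is satisfied.'' Note that the bare existence claim does not require the densities to be bounded below; that extra hypothesis would be needed only to additionally simulate the step with $O(\log\frac{1}{\epsilon})$ RW queries via Theorem~\ref{lem:lipschitzknives}.
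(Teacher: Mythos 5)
Your proposal is correct and follows essentially the same route as the paper: both realize the Equitable Protocol as a single RW moving knife step with $n$ parallel knives plus one trigger that fires exactly when player $n$'s knife reaches the right endpoint, and both conclude via the intermediate value theorem. The only substantive difference is bookkeeping: the paper sidesteps your truncation-at-$1$ issue by running fictitious knives on an augmented cake of length $n$ with trigger $y_n(t)-1$, whereas you truncate each knife at $1$ and then explicitly verify (via $T(t^\star)=0\Rightarrow x_{n-1}(t^\star)<1$) that no truncation fired at the firing time — a check the paper leaves implicit.
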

\begin{proof}
	We first note the protocol consists in fact a single moving knife step.
	We first note that given the position of the first knife, the value of player $1$ for the piece $[0,x_1]$, $a = V_1(0,x_1)$, can be obtained with an Eval query, and the position of each knife $j = 2 \ldots n$ can be obtained from the position of knife $j-1$ by using an Eval query addressed to player $j$ for the piece $[0,x_{j-1}]$, i.e. $b := Eval_j(x_{j-1})$, followed by a Cut query:  $Cut_j(a+b)$. Also, at every point in time, the knives are positioned as to give an equitable allocation on the subset $[0,x_n]$ of the cake.
	
	To implement this moving knife step in the framework of Definition \ref{def:knife}, we will use $n+1$ devices and time running from $\alpha=0$ to $\omega=1$. The first $n$ devices are knives, in the order $1 \ldots n$, such that the position of knife $i$ depends on knives $1 \ldots i-1$, while device $n+1$ will be a trigger with values of different signs at time $0$ and $1$, respectively, and that will fire (i.e. turn zero) when an exact equitable allocation is reached. 
	To define the value function for the trigger, consider a (hypothetical) augmented resource, of length $n$, where the valuations of the players on $[1,n]$ are uniform and on $[0,1]$ are as defined in the actual instance. 
	From the Cut and Eval queries in the actual moving knife step, the center can maintain throughout time the positions of some fictitious knives on the hypothetical cake, $y_1(t) \ldots y_n(t)$, which are always positioned in an equitable configuration; that is, $y_1(t) = x_1(t)$, while $y_2(t) \ldots y_n(t)$ are iteratively set such that $V_1(0,y_1(t)) = V_i(y_{i-1}(t),y_i(t))$. Then at time $t=0$ we have $y_1(0) = \ldots = y_n(0) = 0$, while at time $t=1$ we have $y_1(1) = 1, y_2(1) = 2, \ldots, y_n(1) = n$. Define the value function of the trigger as $f(t) = y_n(t) - 1$. Then $f(0) = -1$ and $f(1) =n-1$. Since $f(t)$ is continuous, by the intermediate value theorem there exists a time $t$ when $f(t) = 0$, which corresponds to $y_n(t) = 1$. 
	Thus the protocol is a moving knife step as defined in Definition \ref{def:knife} and computes a connected equitable allocation of the whole cake. This completes the proof.
\end{proof}

\section{The Stronger and Weaker RW Models} \label{app:weakstrong}
We also formalize two other query models. The first one, which we call $RW^+$, differs in that the inputs to evaluate queries need not be previous cut points and the protocol can use arbitrary points to demarcate the final allocation. Our lower bounds carry over for this model.

\begin{definition}[$RW^+$ query model]
	An $RW^{+}$ protocol for cake cutting communicates with the players via two types of queries:
	\begin{itemize}
		\item{}$\emph{\textbf{Cut}}_i(\alpha)$: Player $i$ cuts the cake at a point $y$ where $V_{i}([0,y]) = \alpha$, for any $\alpha \in [0,1]$.
		\item{}$\emph{\textbf{Eval}}_i(y)$: Player $i$ returns $V_{i}([0,y])$, for any $y \in [0,1]$.
	\end{itemize}
	At the end of execution an $RW^{+}$ protocol outputs an allocation that can be demarcated by any points (regardless of whether they are previous cut points or not).
\end{definition}

Another natural model, that we call $RW^{-}$, has only one type of query that the protocol can ask the players.

\begin{definition}[$RW^-$ query model]
	An $RW^{-}$ protocol for cake cutting communicates with the players via queries of the form
	\begin{itemize}
		\item $\emph{\textbf{Eval}}_i(y)$: Player $i$ returns $V_{i}([0,y])$, where $y \in [0,1]$ is arbitrarily chosen by the center.
	\end{itemize}
	At the end an $RW^-$ protocol outputs an allocation that can be demarcated by any points.
\end{definition}

The upper and lower bounds for envy-free, perfect, and equitable allocations also hold in the $RW^+$ model as a corollary of the proofs in the RW model.

\begin{corollary}
	Computing a connected $\epsilon$-envy-free allocation for $n\geq 3$ players in the $RW^+$ model requires $\Omega \left(\log{\frac{1}{\epsilon}}\right)$ queries.
\end{corollary}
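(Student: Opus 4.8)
The plan is to re-run, essentially verbatim, the adversary argument from the proof of Theorem~\ref{thm:lb_envyfree_any} (and of Theorem~\ref{thm:3lb} for the base case $n=3$) against an arbitrary $RW^+$ protocol, and to check that the only two features by which $RW^+$ is stronger than $RW$ do not help the protocol. These features are: (i) an $Eval_i(y)$ query may be posed at a point $y$ that was never produced by an earlier Cut query, and (ii) the final allocation may be demarcated by arbitrary points rather than by discovered cut points. As in the $RW$ proof, the adversary maintains a partial rigid measure system (Definition~\ref{def:prms}) whose two ``free'' intervals $I,J$ carry equal value $k$ for every player, have uniform densities, and contain no point that has been pinned so far; the key quantitative fact is that pinning one more point shrinks $k$ by a factor of at most $100$, \emph{regardless of which type of query triggered the pinning}, so that after $t$ queries $k\ge k_0/100^{t}$ starting from the value $k_0 = 0.01$ of the initial configuration in the proof of Theorem~\ref{thm:3lb}.

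The first step is to upgrade Lemma~\ref{lem:hide} from ``answer a Cut query'' to ``pin an arbitrary point $y$''. If $y\notin I\cup J$ the adversary just discloses $V_i(0,y)$ consistently with the valuations already published outside $I\cup J$, introducing no new structure. If $y$ lies in, say, $I$, the adversary runs precisely the hiding construction of Lemma~\ref{lem:hide}: it declares $V_i(0,y)$ to be a suitable value in $(\ell_i,\ell_i+k)$, places the new free sub-interval $I'\subseteq I$ of value $0.01k$ on whichever side of $y$ has room, places $J'\subseteq J$ correspondingly, and fits the remaining players' densities so that the whole configuration is again a partial rigid measure system with all densities strictly inside $(1/\sqrt{2},\sqrt{2})$. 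The case split is the same one already done in Lemma~\ref{lem:hide} — indexed by which ``column'' of the valuation matrix contains $y$ and by whether $y$ is in the left or right half of that column — except that both of these are now read off directly from $y$, and the argument is in fact slightly simpler because there is no Cut answer to also report, only the single value $V_i(0,y)$. This handles every $RW^+$ query: a Cut query by Lemma~\ref{lem:hide} as is, an Eval query by the upgraded version.

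The second step is the output. When the protocol halts with a connected allocation cut at $\hat{x}\le\hat{y}$, the adversary treats $\hat{x}$ and $\hat{y}$ as two further points to be pinned, applying the upgraded Lemma~\ref{lem:hide} at most twice; this yields a partial rigid measure system whose free intervals $I^{\ast},J^{\ast}$ have value at least $k/100^{2}$ and contain neither $\hat{x}$ nor $\hat{y}$ (nor any earlier pinned point). Since the proof of Lemma~\ref{lem:far_ef3} uses only that the two output cuts avoid the two free intervals, and never uses that they come from a set of previously discovered cut points, Lemma~\ref{lem:far_ef3} applies and the committed allocation has envy at least $0.01\cdot k/100^{2}$. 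Hence after $t$ queries the envy is at least $0.01\,k_0/100^{t+2}$, and for this to fall below $\epsilon$ we need $t=\Omega(\log\frac{1}{\epsilon})$, proving the bound for $n=3$. For $n\ge 4$ one partitions the players into disjoint groups of three (with one group of four or five when $3\nmid n$) living on disjoint sub-cakes, with densities scaled up by the number of groups, exactly as in the proof of Theorem~\ref{thm:lb_envyfree_any}; the adversary runs the above strategy independently on each group, and the same case analysis on the group-separating output cuts forces the protocol to solve some group's problem, giving $\Omega(\log\frac{1}{\epsilon})$ queries overall.

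The one point requiring care is the upgraded Lemma~\ref{lem:hide}: for a pinned point rather than a Cut answer one must re-verify that the freshly chosen densities on the sub-intervals of $I$ and $J$ adjacent to $I'$ and $J'$ still lie strictly between $1/\sqrt{2}$ and $\sqrt{2}$ and that the value weights still add up to $k$ on $I$ and on $J$. But a Cut answer falling inside $I$ and an externally specified point inside $I$ present the \emph{same} extension problem, so this is just the bookkeeping of Lemma~\ref{lem:hide} re-run with $y$ in place of the query's answer, and no genuinely new obstacle appears; the rest of the proof is the $100$-fold shrinkage accounting and the appeal to Lemma~\ref{lem:far_ef3}.
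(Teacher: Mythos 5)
Your proposal is correct and follows essentially the same route as the paper: the paper's proof likewise observes that an $Eval_i(y)$ at an arbitrary point $y$ inside a hidden interval can be absorbed into the case analysis of Lemma~\ref{lem:hide} exactly as a cut landing there (and answered consistently with the published valuations otherwise), so the free intervals still shrink by at most a factor of $100$ per query. The only divergence is in neutralizing the arbitrary output demarcation points: the paper argues directly that a final cut placed inside a hidden interval cannot help, asserting the adversary may make the sliver up to that point worthless, whereas you pin the two output cuts with two further applications of the upgraded hiding lemma and then invoke Lemma~\ref{lem:far_ef3}, losing only the constant factor $100^{2}$ --- a slightly cleaner accounting that, unlike the paper's zero-value assertion, stays within the density bounds $(1/\sqrt{2},\sqrt{2})$ that the construction maintains throughout.
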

\begin{proof}
	Let $\epsilon > 0$ and $\mathcal{P}$ a protocol in the $RW^+$ model.
	The main observation that allows us to conclude the extension of the bound to the $RW^+$ model is that in Lemma \ref{lem:hide}, if during the execution of $\mathcal{P}$ a player gets an Evaluate query $Eval_i(y)$ for some arbitrary point $y$, then
	\begin{itemize}
		\item 	if $y$ is outside the hidden intervals (in which the cuts of the envy-free allocation reside), the query can be answered in the same way as in the RW model, and 
		\item if $y$ is inside one of the hidden intervals, then it can be handled as a $Cut_i(y)$ query in the RW model. Thus any $RW^{+}$ protocol $\mathcal{P}$ does not manage to reduce the length of the hidden interval by more than a factor of $100$ with each query. 
	\end{itemize}
	
	Denote by $I_1 \ldots I_{n-1}$ the hidden intervals (in which the envy-free cut points lie) after $\mathcal{P}$ has finished issuing queries, where $I_j = [x_j, y_j]$ for all $j$. By the way the intervals are maintained during $\mathcal{P}$'s execution, there are no cuts inside any $I_j$. Suppose $\mathcal{P}$ uses a cut $x^* \in (x_j, y_j)$ to demarcate the pieces of two players adjacent in the final allocation, $A$. Then an adversary can set $V_i([x_j, x^*]) = 0$. Thus the only way that $A$ can be $\epsilon$-envy-free is if there exists an $\epsilon$-envy-free allocation $A'$ that does not use cuts inside $I_j$. Thus an $RW^+$ protocol has no advantage over an $RW$ protocol.
\end{proof}

\begin{corollary}
	Computing an $\epsilon$-perfect allocation for $n=2$ players in the $RW^+$ model requires $\Omega \left(\log{\frac{1}{\epsilon}}\right)$ queries.
\end{corollary}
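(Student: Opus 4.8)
The plan is to transfer the $RW$ lower bound of Theorem~\ref{thm:perfect_lb} to the $RW^+$ model essentially verbatim, using the same adversary strategy --- iterate Lemma~\ref{lem:perfect_induction} starting from the configuration in Table~\ref{tab:perfect_initial} --- and then check that the two extra powers of an $RW^+$ protocol (issuing \textbf{Eval} queries at arbitrary points, and demarcating the output with arbitrary points) do not help it. Throughout execution the adversary maintains the invariant of Lemma~\ref{lem:perfect_induction}: player $1$ has uniform density, player $2$'s valuation is public and consistent with Figure~\ref{fig:perfect_main} outside two hidden intervals $I=[x,x+a]$ and $J=[y,y+a]$ on whose interiors no structure of $V_2$ has been revealed (only the totals $V_2(I)=d$, $V_2(J)=3d$), and condition~2 of the lemma holds with the current parameters $a,d$.

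First I would handle the queries. \textbf{Cut} queries are answered exactly as in Lemma~\ref{lem:perfect_induction}, shrinking $a,d$ by a factor at most $100$. An \textbf{Eval}$_i(y)$ query with $y\notin I\cup J$ is answered deterministically from the public valuations, leaving $I,J,a,d$ unchanged. An \textbf{Eval}$_i(y)$ query with $y$ inside $I$ (the case $y\in J$ is symmetric) is the only genuinely new case: for $i=1$ the answer is forced by uniformity and reveals nothing, and for $i=2$, reporting $V_2([0,y])=c+v$ with $v\in(0,d)$ merely commits $V_2$ on the sub-piece $[x,y]$ to total value $v$, which is no more informative than the answer to a \textbf{Cut} query. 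Hence the adversary applies the construction of Lemma~\ref{lem:perfect_induction}: it picks new hidden intervals $I'\subseteq I$, $J'\subseteq J$ of length $a/100$ with an updated $d'=\Theta(d)$ so that condition~2 holds with $a'=a/100$ and $d'$, distributes $V_2$ on $I\setminus I'$ and $J\setminus J'$ as in the lemma, and reports the now-determined value $V_2([0,y])$. The one extra requirement over the lemma is that $I'$ (and then $J'$) must avoid the queried point $y$; this is always possible because the lemma already exhibits valid placements of $I'$ on either side of the midpoint of $I$ and, more generally, admits a range of placements keeping the densities in range, so one simply selects a valid placement disjoint from the single point $y$.

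It then follows that after a halting execution with $q$ queries the invariant holds with $a,d\ge 0.1\cdot 100^{-q}$. Consider the output allocation, given by cuts $0<k^*<\ell^*<1$ with $\ell^*-k^*\in[1/2-\epsilon,1/2+\epsilon]$ (otherwise it is already not $\epsilon$-perfect from player $1$'s uniform viewpoint). If $k^*<x$ or $k^*>x+a$, condition~2 forces $|V_2([k^*,\ell^*])-1/2|>\epsilon$, so the allocation is not $\epsilon$-perfect for player $2$. If $k^*\in[x,x+a]$, then $\ell^*\in[y-\epsilon,y+a+\epsilon]$, and --- exactly as in the last paragraph of the proof of the envy-free $RW^+$ corollary --- the adversary uses its residual freedom inside $I$ and $J$ to push almost all of $V_2$'s mass on $I$ to the left of $k^*$ and almost all of its mass on $J$ to the right of $\ell^*$ (clamping if $\ell^*$ lies outside $J$), which makes $V_2([k^*,\ell^*])$ differ from $1/2$ by at least $\min\{d,2d\}-O(\epsilon)>\epsilon$. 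In all cases some completion of the valuations consistent with the transcript makes the output fail to be $\epsilon$-perfect, contradicting correctness of the protocol unless $100^{-q}=O(\epsilon)$; hence $q=\Omega(\log\tfrac1\epsilon)$. The only point that needs care is the robustness of the case analysis of Lemma~\ref{lem:perfect_induction} to the extra constraint $y\notin I'$ --- that for a queried point anywhere in the hidden interval there is still a valid placement of the next hidden interval avoiding it --- but since $|I'|$ is a hundredth of $|I|$ and the lemma's construction has slack in where the break points $m,n,p,q$ sit, this is routine.
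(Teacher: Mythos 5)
Your proposal is correct and follows essentially the same route as the paper: the paper likewise observes that Lemma~\ref{lem:perfect_induction} survives in the $RW^+$ model by answering an $Eval_i(s)$ query with $s$ in a hidden interval via the appropriate Cut-query case of that lemma (Case 1.a or 1.b depending on which half of $I$ the point $s$ falls in, which automatically places the new hidden interval away from $s$), and then notes that arbitrary demarcation points inside the residual intervals $I,J$ cannot yield an $\epsilon$-perfect allocation while those intervals remain non-negligible to player $2$. Your extra detail on how the adversary exploits its remaining freedom inside $I$ and $J$ against the output cuts is a valid elaboration of the paper's one-line closing remark.
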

\begin{proof}
	The proof follows from the observation that Lemma \ref{lem:perfect_induction} holds in the $RW^+$ model too. If the uniform intervals maintained in the lemma are $I = [x,x+a]$ and $J = [y,y+a]$, with $y = x+0.5$, then any evaluate query outside the intervals $I,J$ can be handled in the usual way (consistent with history). If an evaluate query is received inside one of $I,J$, let $Eval_i(s)$ denote the query. If $s \in (x,x+a/2]$, this can be handled as Case $1.a$ in Lemma \ref{lem:perfect_induction}, while if $s \in (x+a/2, x)$, it can be handled as Case 1.b in Lemma \ref{lem:perfect_induction}. Also note that if the final interval $I$ is not worth less than $\epsilon$ to both players, an $\epsilon$-perfect allocation cannot be found using additional arbitrary cuts inside $I,J$.
\end{proof}

\begin{corollary}
	Computing an $\epsilon$-equitable allocation for $n=2$ players in the $RW^+$ model requires $\Omega \left(\log{\frac{1}{\epsilon}}\right)$ queries.
\end{corollary}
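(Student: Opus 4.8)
The plan is to mirror the two preceding corollaries and transfer the lower bound of Theorem~\ref{thm:equitable_lb} to the $RW^+$ model. The $RW^+$ model gives the protocol two extra powers over $RW$: evaluate queries at arbitrary points, and a final allocation demarcated by arbitrary points instead of discovered cut points. I would show that neither power helps against the adversary of Theorem~\ref{thm:equitable_lb}.

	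First I would check that Lemma~\ref{lem:ep_reduce} still goes through. The adversary maintains a configuration consistent with Figure~\ref{fig:equitable_main} together with a ``live'' interval $(x,y)$ that contains no cut point and no point used in a previous evaluate query, and satisfies $V_1(0,x)=0.5+a=V_2(y,1)$, $V_2(x,1)=0.5+b=V_1(0,y)$. A $\mathbf{Cut}_i(\alpha)$ query is handled exactly as in Lemma~\ref{lem:ep_reduce}. An $\mathbf{Eval}_i(s)$ query with $s\notin(x,y)$ lands in a region whose density has already been fixed, so its answer is determined (consistent with history) and the live interval is untouched. The only new case is $\mathbf{Eval}_i(s)$ with $s\in(x,y)$: since the adversary has not yet committed any density inside $(x,y)$ other than the total $V_i(x,y)=b-a$, it may return any admissible value $V_i(0,s)$, and it returns one lying in the lower sub-range of the relevant Case~1 of Lemma~\ref{lem:ep_reduce} (namely $(0.5+a,\,0.5+(a+b)/2]$ for player $1$, or $(0.5-b,\,0.5-(a+b)/2]$ for player $2$); then it runs that case. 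Because the new break points $z<t$ of Case~1 satisfy $V_i(0,z)>V_i(0,s)$, the point $s$ lies strictly to the left of the new live interval $(z,t)\subset(x,y)$, which therefore still has no cut or evaluate point inside, and the gap contracts by exactly the factor $100$, i.e.\ $b'-a'=(b-a)/100$.

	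Next I would argue that arbitrary final cut points do not help. Suppose the protocol halts with final live interval $(x,y)$ and gap $b-a>\epsilon$, and let its connected output be determined by a single cut $z^*$. If $z^*\notin(x,y)$, then Lemma~\ref{lem:ep_distance} applies (its hypotheses only involve $V_1(0,x),V_1(0,y),V_2(0,x),V_2(0,y)$, which the invariant fixes), so the output is at distance at least $b-a>\epsilon$ from equitability for every completion of the valuations. If $z^*\in(x,y)$, the adversary uses the fact that it still has not committed any density inside $(x,y)$: it places essentially all of the mass $b-a$ of $(x,y)$, for both players, on one side of $z^*$ — say to its right, with a vanishing density on $(x,z^*)$ (or literally zero density if non-hungry valuations are allowed). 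Then $V_1(0,z^*)$ is within $o(1)$ of $0.5+a$ while $V_2(z^*,1)$ is within $o(1)$ of $0.5+b$, so under either player order the two allocated values differ by at least $b-a-o(1)>\epsilon$; the symmetric choice handles a cut pushed to the left. Hence no $\epsilon$-equitable allocation can be produced while $b-a>\epsilon$.

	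Finally I would conclude as in the proof of Theorem~\ref{thm:equitable_lb}: start from $x=0.4$, $y=0.6$, $a=0.05$, $b=0.06$, so the initial gap is $0.01$; apply the extended Lemma~\ref{lem:ep_reduce} to each cut or evaluate query, shrinking the gap by a factor of at most $100$ per query; after $q$ queries the gap is at least $0.01/100^{q}$, and while $0.01/100^{q}>\epsilon$ no $\epsilon$-equitable allocation exists, forcing $q=\Omega(\log\frac{1}{\epsilon})$. The step I expect to require the most care is the second one: ruling out an arbitrary final cut inside the live interval, which hinges on the adversary deferring its commitment of the density on $(x,y)$ until after the protocol has fixed its output.
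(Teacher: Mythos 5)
Your proposal is correct and follows essentially the same route as the paper's (very terse) proof: treat an $Eval_i(s)$ at a new point inside the live interval as a Cut query and rerun Lemma~\ref{lem:ep_reduce}, and observe that Lemma~\ref{lem:ep_distance} survives in $RW^+$. You additionally spell out why an arbitrary final demarcation point inside the live interval does not help (by deferring the density commitment on $(x,y)$), a step the paper only makes explicit in the analogous envy-free corollary; your version of that step is sound.
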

\begin{proof}
	The proof follows from the observation that Lemma \ref{lem:ep_distance} also holds in the $RW^+$ model, by noting that any evaluate query $Eval_i(s)$, where $s$ is not a previous cut point, can be handled in the same way as a Cut query.
\end{proof}

In the $RW^{+}$ model, the proofs for simulating moving knife steps and procedures are the same except that we do not need the valuations to be bounded from below since the center can reduce (half) the time directly with each iteration (instead of reducing it through the lens of the players' valuations). We get the simulations:

\smallskip

\begin{theorem}
	Let $\mathcal{M}$ be an $RW^+$ moving knife step for a cake cutting problem where the players have hungry value densities upper bounded by a constant $D > 0$.
	
	Then for each $\epsilon > 0$ and every trigger $j$ of $\mathcal{M}$ that switches signs from the beginning to the end of time that $\mathcal{M}$ runs, we can find an $\epsilon$-outcome associated with trigger $j$ using $O\left(\log \frac{1}{\epsilon}\right)$ $RW^+$ queries.
\end{theorem}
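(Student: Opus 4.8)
The plan is to reuse the structure of the proof of Theorem~\ref{lem:lipschitzknives}, but to exploit that an $RW^{+}$ protocol may issue Eval queries at arbitrary points and Cut queries with arbitrary values (and demarcate with arbitrary points), which lets the center \emph{bisect the time variable directly} rather than using player~$1$'s valuation as a proxy for it. It is precisely this proxy step in the $RW$ proof that forced a lower bound on the densities, so dispensing with it is what lets us keep only the upper bound $D$.

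First I would fix the instance --- hungry value densities, bounded above by $D$ --- and record the Lipschitz estimate. By Definition~\ref{def:knife}, for this hungry instance the dependence function of every device is Lipschitz in $(t, x_1(t), \ldots, x_{j-1}(t))$, the first device satisfies $x_1(t) = t$, and each later device is determined from the earlier ones with at most $\ell$ queries; the same induction on the device index as in the proof of Theorem~\ref{lem:lipschitzknives} then yields a constant $\zeta \geq 1$ satisfying the analogue of~\eqref{ineq:inductive}, namely $|x_i(s) - x_i(t)| \leq \zeta\,|s - t|$ for all devices $i$ and all times $s, t \in [\alpha, \omega]$. The only role of the upper bound $D$ here is to keep the Lipschitz constants of the value-reading substeps inside the dependence functions finite.

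Next I would run a binary search on time. Fix a trigger $j$ with $x_j(\alpha)\cdot x_j(\omega) \le 0$ and maintain an interval $[a,b] \subseteq [\alpha,\omega]$ with $x_j(a)\cdot x_j(b) \le 0$, starting from $[\alpha,\omega]$. At each round put $z = (a+b)/2$; since the position of the first device equals time, the center can compute the entire configuration at time $z$ --- the exact positions $x_1(z) = z, x_2(z), \ldots, x_K(z)$ of all devices, and in particular the exact value $x_j(z)$ of the trigger --- by faithfully executing the dependence functions with $O(K\ell) = O(1)$ legitimate $RW^{+}$ queries (this is where arbitrary evaluation points and cut values are used). If $x_j(z)\cdot x_j(a) > 0$ set $a := z$, else $b := z$; the sign change is preserved. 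Stop after $\lceil \log_2(\zeta/\epsilon)\rceil = O(\log(1/\epsilon))$ rounds, so $|b - a| \le \epsilon/\zeta$. By continuity of $x_j$ there is a time $t^{*} \in [a,b]$ with $x_j(t^{*}) = 0$, hence $|x_j(a)| = |x_j(a) - x_j(t^{*})| \le \zeta|a - t^{*}| \le \epsilon$ and $|x_i(a) - x_i(t^{*})| \le \epsilon$ for every device $i$; returning the index $j$, the time $a$, and the (exactly computed) device values $x_1(a), \ldots, x_K(a)$ --- for which the $|\tilde x_i - x_i| \le \epsilon$ requirement of an $\epsilon$-outcome is trivially met --- gives an $\epsilon$-outcome for trigger $j$ with $O(\log(1/\epsilon))$ $RW^{+}$ queries, as claimed (and, since there are only $O(1)$ triggers, the whole step is simulated within the same bound).

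The hard part --- really the point to argue carefully --- is the claim that the $RW^{+}$ relaxation genuinely removes the need for a lower density bound: in the $RW$ proof the center cannot name a position, so it halves player~$1$'s value of the current time-interval and then invokes $v_1 \ge \delta$ to conclude that interval is physically short, whereas in $RW^{+}$ the center halves physical time outright, making the length bound $|b-a| \le \epsilon/\zeta$ immediate, so only the one-sided regularity coming from $D$ (together with the definitional Lipschitz property of the dependence functions) is ever used. A secondary point to check is that every dependence function of $\mathcal{M}$ is executable with valid $RW^{+}$ queries at the intermediate point $z$, which is immediate from the $RW^{+}$ query definitions but should be stated explicitly.
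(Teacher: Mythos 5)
Your proposal is correct and follows exactly the route the paper intends: it adapts the proof of Theorem~\ref{lem:lipschitzknives} by bisecting physical time directly (which $RW^+$ permits), thereby dispensing with the lower density bound that the $RW$ argument needed only to convert player~$1$'s value-halving into a length bound, while the upper bound $D$ still supplies the Lipschitz constant $\zeta$. The paper gives only a one-sentence justification of this theorem, so your write-up is in fact a more detailed rendering of the same argument.
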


\begin{theorem} \label{thm:simulation_robust_rw+2}
	Consider a cake cutting problem where the value densities are bounded from above by constant $D>0$.
	Let $\mathcal{M}$ be an $RW^+$ moving knife protocol with at most $r$ steps, such that $\mathcal{M}$ outputs $\mathcal{F}$-fair allocations demarcated by at most a constant number $C$ of cuts.
	
	Then for each $\epsilon > 0$, there is an $RW^+$ protocol $\mathcal{M}_{\epsilon}$ that uses $O\left(\log \frac{1}{\epsilon}\right)$ queries and computes $\epsilon$-$\mathcal{F}$-fair partitions demarcated with at most $C$ cuts.
\end{theorem}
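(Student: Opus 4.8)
The plan is to proceed exactly as in the proof of Theorem~\ref{thm:simulation_robust}, reducing the statement to the simulation of a single step. If the first step of $\mathcal{M}$ is an $RW^+$ query, it is answered exactly with one $RW^+$ query, so its true outcome is in particular an $\epsilon$-outcome. If it is a moving knife step, I will replace its outcome with an $\epsilon$-outcome using $O(\log\frac1\epsilon)$ $RW^+$ queries, as sketched below; feeding the resulting (approximate) information to $\mathcal{M}$ determines its next step, which is handled the same way, and so on. Since $\mathcal{M}$ has at most $r$ steps and $r$ is a constant, the whole simulation costs $r\cdot O(\log\frac1\epsilon)=O(\log\frac1\epsilon)$ $RW^+$ queries, and by the robustness clause in the definition of an $RW^+$ moving knife protocol (equivalently, because $\mathcal{F}$ is an abstract fairness notion) the partition output after all these replacements is $\epsilon$-$\mathcal{F}$-fair and still uses at most $C$ cuts.

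For a single moving knife step I will follow the argument of Theorem~\ref{lem:lipschitzknives}. First, arguing by induction on the device index as in inequality~(\ref{ineq:inductive}), I would exhibit a constant $\zeta>\max\{1,k,D\}$ depending only on the Lipschitz constant $k$ of the step, the upper bound $D$ on the densities, the number $K$ of devices, and the bound $\ell$ on the queries used per device, such that $|x_i(t)-x_i(s)|\le\zeta\,|t-s|$ for every device $i$ and all times $s,t$ in the step's interval $[\alpha,\omega]$; here the upper bound $D$ is what converts a small change in a knife's position into a small change in the players' values for the pieces it demarcates, while the Lipschitz hypothesis of Definition~\ref{def:knife} handles the dependence functions. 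Fix a trigger $j$ with $x_j(\alpha)\cdot x_j(\omega)\le 0$. I then binary-search \emph{directly on time}: maintain an interval $[a,b]\subseteq[\alpha,\omega]$ with $x_j(a)\cdot x_j(b)\le 0$, at the midpoint $m=(a+b)/2$ set the first knife (whose position equals time) to $m$, compute $x_2(m),\ldots,x_K(m)$ — and in particular $x_j(m)$ — from the formulas $F_2,\ldots,F_K$, each of which costs at most $\ell$ $RW^+$ queries now that Cut queries with arbitrary $\alpha$ and Eval queries at arbitrary points are permitted, and keep the half that still exhibits a sign change (stopping if $x_j(m)=0$). After $\lceil\log_2(\zeta/\epsilon)\rceil=O(\log\frac1\epsilon)$ iterations we have $|a-b|\le\epsilon/\zeta$, and by the intermediate value theorem there is $t^*\in[a,b]$ with $x_j(t^*)=0$, so $|x_j(a)|=|x_j(a)-x_j(t^*)|\le\zeta|a-t^*|\le\epsilon$ and $|x_i(a)-x_i(t^*)|\le\epsilon$ for every device $i$. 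Returning the trigger index $j$, the time $a$, and the device values $x_1(a),\ldots,x_K(a)$ (which are computed exactly from the $RW^+$ queries) is therefore a valid $\epsilon$-outcome, found with $O(\log\frac1\epsilon)$ $RW^+$ queries.

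The one place where the argument genuinely differs from the $RW$ case — and, I expect, the only delicate point — is the bisection step. In the $RW$ model the center has no direct access to ``time'' and must use player~$1$'s value of the first-knife interval as a proxy, which shrinks the physical time interval only if the densities are bounded \emph{below} by some $\delta>0$; in $RW^+$ the center can halve the physical interval $[a,b]$ itself, so no lower bound is needed and only the upper bound $D$, which controls the Lipschitz behaviour of the device values in time, survives. Everything else — the inductive computation of $\zeta$ (the same bookkeeping as in Lemma~\ref{lem:lipschitzknives}), the $O(\log\frac1\epsilon)$ accounting, the constancy of $r$, $K$, $\ell$, $C$, and the appeal to robustness for $\epsilon$-$\mathcal{F}$-fairness with at most $C$ cuts — is routine.
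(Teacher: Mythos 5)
Your proposal is correct and matches the paper's argument: the paper proves the $RW^+$ version by noting that the proof of Theorem~\ref{lem:lipschitzknives} / Theorem~\ref{thm:simulation_robust} goes through verbatim except that the bisection is performed directly on the time interval rather than through player~$1$'s valuation, which is exactly the modification you identify, and which removes the need for a lower density bound while the upper bound $D$ still controls the Lipschitz constant $\zeta$. No gaps.
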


\medskip
We also introduce a weaker model, which we call $RW^{-}$, where the protocol can ask the players only the evaluate type of query.

\begin{definition}[$RW^-$ query model]
	An $RW^{-}$ protocol for cake cutting communicates with the players via queries of the form
	\begin{itemize}
		\item $\emph{\textbf{Eval}}_i(y)$: Player $i$ returns $V_{i}([0,y])$, where $y \in [0,1]$ is arbitrarily chosen by the center.
	\end{itemize}
	At the end an $RW^-$ protocol outputs an allocation that can be demarcated by any points.
\end{definition}

If the valuations are arbitrary, then an $RW^-$ protocol may be unable to find any fair allocation at all. The reason is that no matter what queries an $RW^-$ protocol asks, one can hide the entire instance in a small interval that has value $1$ for all the players; this interval will shrink as more queries are issued, but can be set to remain of non-zero length until the end of execution. 

\medskip

However, if the valuations are bounded from above\footnote{There exist other types of valuations on which the $RW^-$ model may be useful, such as piecewise constant valuations defined on a grid, with the demarcations between intervals of different height known to the protocol.}, then an $RW^-$ protocol is quite powerful.

\begin{theorem}
	Suppose the valuations of the players are bounded from above by a constant $D > 0$. Then any $RW^+$ query can be answered within $\epsilon$-error using $O(\log{\frac{1}{\epsilon}})$ $RW^-$ queries. 
\end{theorem}
\begin{proof}
	Let there be an instance with arbitrary valuations $v_1 \ldots v_n$ such that $v_i(x) < D$ for all $x \in [0,1]$ and $i \in N$.
	Since an $RW^-$ protocol can use the same type of evaluate queries as an $RW^+$ protocol, the simulation has to handle the case where the incoming query is a cut. Let this be $Cut_i(\alpha)$ and denote by $x$ the correct answer to the query. In order to find an approximate answer using only evaluate queries, initialize $\ell = 0$, $r=1$, and search for the correct answer:
	\begin{enumerate}
		\item Let $m = (\ell + r)/2$.
		\item Ask player $i$ the query $Eval_i(m)$ and let $w$ be the answer given. If $|w - \alpha| \leq \epsilon$, return $m$. 
		\item Otherwise, if $m > \alpha$, set $r = m$, and if $m < \alpha$, set $\ell = m$; return to step 1.
	\end{enumerate}
\end{proof}

\end{document}